\newcommand{\lr}[1]{\langle#1\rangle}
\let\set\mathbbm
\newcommand{\ssumB}[2]{\textstyle\sum\limits_{\scriptscriptstyle#1}^{\scriptscriptstyle#2}}
\newcommand{\Sum}{\text{Sum}}
\newcommand{\Prod}{\text{Prod}}
\newcommand{\RPow}{\text{RPow}}
\newcommand{\cRR}{{\cal R}}
\def\newdef#1{\newtheorem{@#1}[Theorem]{{\it #1}}%
	\newenvironment{#1}{\begin{@#1}\em}{\end{@#1}}}
\newcommand{\shiftS}{{S}}
\newcommand\ToDo[1][x]{\fbox{\textbf{TODO} \ifx#1x\fi}}
\newcommand{\depth}{\texttt{d}}
\newcommand{\vect}[1]{\boldsymbol{#1}}
\newcommand{\AR}{\set A}
\newcommand{\QQ}{\set Q}
\newcommand{\ZZ}{\set Z}
\newcommand{\NN}{{\set Z}_{\geq0}}
\newcommand{\NNP}{{\set Z}_{\geq1}}
\newcommand{\KK}{\set K}
\newcommand{\HH}{\set H}
\newcommand{\GG}{\set G}
\newcommand{\FF}{\set F}
\newcommand{\EE}{\set E}
\newcommand{\iiota}{\set i}
\newcommand{\calA}{{\cal A}}
\newcommand{\mGG}{\GG_{m}}
\newcommand{\bGG}{\GG_{b}}
\newcommand{\rGG}{\GG_r}
\newcommand{\expr}{\text{expr}}
\newcommand{\fct}[3]{#1\colon #2 \to #3}
\newcommand{\dfield}[2]{({#1},{#2})}
\newcommand{\const}[2]{{\rm const}_{#2}{#1}}
\newcommand{\sigmaE}{$\Sigma$}
\newcommand{\piE}{$\Pi$}
\newcommand{\rE}{$R$}
\newcommand{\pisiE}{$\Pi\Sigma$}
\newcommand{\rpisiE}{$R\Pi\Sigma$}
\newcommand{\rpiE}{$R\Pi$}
\newcommand{\ord}{\text{ord}}
\newcommand{\seqK}{S(\KK)}
\newcommand{\seqQ}{\textbf{S}(\QQ)}
\newcommand{\id}{\operatorname{id}}
\newcommand{\ev}{\operatorname{ev}}
\newcommand\opowerAlone{\!\stackMath\mathbin{\stackinset{c}{0ex}{c}{0ex}{\text{\scriptsize$\wedge$}}{\Circle}}}
\newcommand\opower{\!^{\stackMath\mathbin{\stackinset{c}{0ex}{c}{0ex}{\text{\scriptsize$\wedge$}}{\Circle}}}\!}
\definecolor{blaugrau}{rgb}{0.796887, 0.789075, 0.871107}
\newcounter{mmacnt}
\def\restartmma{\setcounter{mmacnt}{0}}
\newenvironment{mma}{
	\par
	\catcode`|=\active
	\parskip=6pt\parindent=0pt 
	\small
	\def\In##1\\{%
		\def\linebreak{\hfill\break\null\qquad}%
		\refstepcounter{mmacnt}
		\hangindent=2.5em\hangafter=0
		\leavevmode
		\llap{\tiny\sffamily In[\arabic{mmacnt}]:=\kern.5em}%
		\mathversion{bold}\scriptsize$\tt\bf\displaystyle##1$\normalsize
		\mathversion{normal}\par
	}%
	\def\Print##1\\{%
		\def\linebreak{\hfill\break}%
		\hangindent=2.5em\hangafter=0
		\leavevmode\scriptsize ##1\par}%
	\def\Out##1\\{%
		\vspace*{-0.5cm}\def\linebreak{$\hfill\break\null\hfill$}%
		\kern\abovedisplayskip\par
		\hangindent=2.5em\hangafter=0
		\leavevmode
		\llap{\tiny\sffamily Out[\arabic{mmacnt}]=\kern.5em}
		\scriptsize$\displaystyle\tt##1$\normalsize\hfill\null\par
		\kern\belowdisplayskip\vspace*{-0.3cm}
	}%
	\def\Warning##1##2\\{%
		\def\linebreak{\hfill\break}%
		\hangindent=2.5em\hangafter=0
		\leavevmode
		{\scriptsize##1 : ##2}\par}%
}{%
	\par\smallskip
}
\newcommand{\LoadP}[1]{\fcolorbox{black}{blaugrau}{
		\begin{minipage}[t]{10.5cm}
			\scriptsize\normalfont #1
\end{minipage}}}
\newcommand{\myIn}[1]{{\sffamily In[#1]}}
\newcommand{\myOut}[1]{{\sffamily Out[#1]}}
\def\MLabel#1{{\refstepcounter{mmacnt}\label{#1}}\addtocounter{mmacnt}{-1}}
\newcommand{\MText}[1]{\textbf{\small\ttfamily#1}}
\begin{document}

\title*{Term Algebras, Canonical Representations and Difference Ring Theory for Symbolic Summation}
\titlerunning{Term Algebras, Canonical Representations and Difference Ring Theory}
\author{Carsten Schneider\thanks{This project has received funding from the European Union's Horizon 2020 research and innovation programme under the Marie Sk\l{}odowska--Curie grant agreement No. 764850, SAGEX and from the Austrian Science Foundation (FWF) 
		grant SFB F50 (F5005-N15) in the framework of the
		Special Research Program
		``Algorithmic and Enumerative Combinatorics''.}}
\institute{
	Carsten Schneider \at Johannes Kepler University Linz, Research Institute for Symbolic Computation, A-4040 Linz, Austria,
	\email{Carsten.Schneider@risc.jku.at}}
%
%

\maketitle

\abstract{A general overview of the existing difference ring theory for symbolic summation is given. Special emphasis is put on the user interface: the translation and back translation of the corresponding representations within the term algebra and the formal difference ring setting. In particular, canonical (unique) representations and their refinements in the introduced term algebra are explored by utilizing the available difference ring theory. Based on that, precise input-output specifications of the available tools of the summation package \texttt{Sigma} are provided.}

\section{Introduction}\label{Sec:Introduction}

In the last 40 years exciting results have been accomplished in symbolic summation as elaborated, e.g., in~\cite{Abramov:71,Abramov:75,Gosper:78,Karr:81,Karr:85,Zeilberger:91,PauleSchorn:95,Paule:95,AequalB,CJKS:13,Hou:11,vanHoeij:99,Abramov:94,Abramov:96,Singer:99,PauleRiese:97,APP:98,Bauer:99,Zeilberger:90a,Chyzak:00,Koutschan:13,Wilf:92,Wegschaider,AZ:06,KauersStirling:07,KS:08,CKS:09,Bron:00,Schneider:00,Schneider:01,Schneider:08c,DR1,SchneiderISSAC:16,DR2,DR3,Petkov:18,Paule:21} that will be sketched in more details below. In most cases, symbolic summation can be subsumed by the following problem description: given an algorithm that computes/represents a sequence, find a simpler algorithm that computes/represents (from a certain point on) the same sequence. Based on the context of a given problem, simpler can have different meanings: e.g., the output algorithm can be represented uniquely (by a canonical form in the sense of~\cite{Buchberger:83}), it might be computed more efficiently, or it can be formulated in terms of certain classes of special functions.
  
Often symbolic summation is subdivided in the following summation paradigms.

\noindent$\bullet$\hspace*{0.1cm}\textit{Telescoping:} Given an algorithm $F(k)$ that computes a sequence, find an algorithm $G(k)$, that is not more complicated than $F(k)$, such that
\begin{equation}\label{Equ:Tele}
F(k)=G(k+1)-G(k)
\end{equation}
holds for all $k\in\NN$ with $k\geq\delta$ for some $\delta\in\NN$. Then summing this equation over $k$ from $\delta$ to $n$ yields a simpler way to compute $S(n)=\sum_{k=\delta}^nF(k)$, namely
\begin{equation}\label{Equ:TeleSummed}
\sum_{k=\delta}^nF(k)=G(n+1)-G(\delta).
\end{equation}
\noindent$\bullet$\hspace*{0.1cm}\textit{Zeilberger's creative telescoping~\cite{Zeilberger:91}:} Given an algorithm $F(n,k)$ that computes a bivariate sequence, find an algorithm $G(n,k)$ that is not more complicated than $F(n,k)$, and algorithms $c_0(n),\dots,c_d(n)$ (for univariate sequences), such that
\begin{equation}\label{Equ:Crea}
c_0(n)\,F(n,k)+c_1(n)\,F(n+1,k)+\dots+c_d(n)\,F(n+d,k)=G(n,k+1)-G(n,k)
\end{equation}
holds for all $n,k\in\NN$ with $n,k\geq\delta$ for some $\delta\in\NN$. Then summing this equation over $k$ from $\delta$ to $n$ yields for the definite sum $S(n)=\sum_{k=\delta}^nF(n,k)$ the recurrence
\begin{equation}\label{Equ:RecGeneral}
c_0(n)\,S(n)+c_1(n)\,S(n+1)+\dots+c_d(n)\,S(n+d)=H(n)
\end{equation}
with $H(n)=G(n,n+1)-G(n,\delta)+\sum_{i=1}^dc_i(n)\sum_{j=1}^iF(n+i,n+j)$. In many cases $H(n)$ collapses to a rather simple ``algorithm'' and thus~\eqref{Equ:RecGeneral} yields (together with $d$ initial values and the assumption that $c_d(n)$ is nonzero for $n\geq\delta$) an efficient algorithm to compute the sequence $(S(n))_{n\geq\delta}$. 

\smallskip

\noindent$\bullet$\hspace*{0.1cm}\textit{Recurrence solving:} Given a recurrence of the form~\eqref{Equ:RecGeneral} where the algorithms $c_0(n),\dots,c_d(n)$ and $H(n)$ can be given by expressions in terms of certain classes of special functions (that can be evaluated accordingly) and given $d$ initial values, say $S(\delta),S(\delta+1),\dots,S(\delta+d-1)$ which determines the sequence $(S(n))_{n\geq\delta}$, find an expression that computes the sequence $(S(n))_{n\geq\delta}$ in terms of the same class of special functions or an appropriate extension of it.

\smallskip

We emphasize that all of the above summation paradigms are strongly interwoven (as illustrated, e.g., in the book~\cite{AequalB}) and they often yield a strong toolbox by combining them in a nontrivial way.

\medskip

Another natural classification of symbolic summation is based on the input class of algorithms and the focus how they can be formally represented. In most cases they are either given by evaluable expressions in terms of sums/products or linear recurrences accompanied with initial values that uniquely determine/enable one to calculate the underlying sequences.
The first breakthrough in this regard has been achieved by Abramov~\cite{Abramov:71,Abramov:75} who solved the telescoping problem for a rational function $F(x)\in\KK(x)$ and proposed an algorithm for finding all rational solutions of $\KK(x)$ of a given linear recurrence of the form~\eqref{Equ:RecGeneral} with $c_i(x),H(x)\in\KK(x)$.  In particular, Gosper's telescoping algorithm~\cite{Gosper:78} for hypergeometric products $F(n)=\prod_{k=l}^nH(k)$ with $H(x)\in\KK(x)$ and Zeilberger's extension to definite sums via his creative telescoping paradigm~\cite{Zeilberger:91,PauleSchorn:95,Paule:95,AequalB,Hou:11,CJKS:13} made symbolic summation highly popular in many areas of sciences; recently also the treatment of contiguous relations has been extensively explored in~\cite{Paule:21}. In particular, the interplay with Petkov{\v{s}}ek's algorithm Hyper~\cite{Petkov:92} or van Hoeij's improvements~\cite{vanHoeij:99} to find all hypergeometric product solutions enables one to simplify definite hypergeometric products to expressions given in terms of hypergeometric products; first methods are on the way to find even definite sum solutions~\cite{Petkov:18}. More generally, one can use these solvers as subroutines to hunt for all d'Alembertian solutions~\cite{Abramov:94,Abramov:96} (solutions that are expressible in terms of indefinite nested sums defined over hypergeometric products) and Liouvillian solutions~\cite{Singer:99,Petkov:2013} (incorporating in addition the interlacing operator).
This successful story has been pushed forward for indefinite and definite summation problems in terms of $q$-hypergeometric products and their mixed version~\cite{PauleRiese:97,APP:98,Bauer:99}. Further generalizations opened up substantially the class of applications, like the holonomic approach~\cite{Zeilberger:90a,Chyzak:00,Koutschan:13} dealing with objects that can be described by recurrence systems or the multi-summation approach of \hbox{($q$--)}hypergeometric products~\cite{Wilf:92,Wegschaider,AZ:06}. Even non-holonomic summation problems~\cite{KauersStirling:07,KS:08,CKS:09} involving, e.g., Stirling numbers, can be treated nowadays automatically.

In the following we will focus on the difference ring/field approach. It has been initiated by Karr's telescoping algorithm \cite{Karr:81,Karr:85} in \pisiE-fields which can be considered as the discrete analog of Risch's indefinite integration algorithm~\cite{Risch:69,Bron:97}. This pioneering work has been explored further in~\cite{Bron:00,Schneider:00,Schneider:01,Schneider:08c} and has been pushed forward to a general summation theory in the setting of \rpisiE-ring extensions~\cite{DR1,SchneiderISSAC:16,DR2,DR3} which is the driving engine of the summation package \texttt{Sigma}~\cite{Schneider:07a,Schneider:13a}.
In this setting, one can deal not only with expressions containing \hbox{($q$--)}hypergeometric products and their mixed versions, but also with those containing sums and products that are indefinite nested (that, depending on the ring or field setting, can appear also in the denominator). In particular, it covers a significant class of special functions that arise frequently, e.g., within the calculation of (massive) 2-loop and 3-loop Feynman integrals: harmonic sums~\cite{Bluemlein:99,Vermaseren:99}, generalized harmonic sums~\cite{Moch:02,ABS:13}, cyclotomic sums~\cite{ABS:11} and binomial sums~\cite{Davydychev:2003mv,Weinzierl:2004bn,ABRS:14}.

Internally, the following construction is performed in \texttt{Sigma}.
\begin{enumerate}\label{BasicSigmaStrategy}
	\item Rephrase the expression in terms of nested sums and products in an appropriate difference ring (built by \pisiE-field and \rpisiE-ring extensions).
	\item Solve the summation problems (given above) in this formal difference ring.
	\item Translate the obtained solution from the difference ring to the term algebra setting.
\end{enumerate}
The goal of this article is two-fold. First, we will present the existing algorithms in the difference ring setting (step~2) that have been implemented in large part within \texttt{Sigma}. In particular, we will summarize the available parameterized telescoping algorithms~\cite{Schneider:04a,Schneider:05f,Schneider:07d,Schneider:08c,Schneider:10a,Schneider:10b,Schneider:10c,Schneider:15}  (containing telescoping/creative telescoping as special cases), the multiplicative version of telescoping for the representation of products~\cite{Schneider:05c,Petkov:10,ZimingLi:11,DR2,OS:18,SchneiderProd:20,OS:20}
and recurrence solving algorithms~\cite{Schneider:01,Bron:00,Schneider:04b,Schneider:05a,Schneider:05b,MS:2018,ABPS:20} which generalize many contributions of the literature mentioned above. In addition, we will comment on further enhancements in order to treat new classes of summation objects, like unspecified sequences~\cite{Schneider:06d,Schneider:06e,PS:19} and radical objects~\cite{Schneider:07f}, or to combine the difference field/ring and holonomic approaches yielding a new toolbox for multi-summation~\cite{Schneider:05d,BRS:16}.

Besides these difference ring algorithms and the underlying difference ring theory (step 2), the translation mechanism between the summation objects and the formal representation (step 1 and 3) will be elaborated in detail. In particular, the summation package \texttt{Sigma} benefits strongly on this stable toolbox: the user can define expressions in terms of symbolic sums and products in a term algebra and obtains simplifications of the expressions by executing the rather technical difference ring/field machinery in the background. However, rigorous input/output specifications on the sum-product level are missing: many of the properties that one can extract on the formal level (step~2) are not properly carried over to the user level. The second main result of the article is a contribution towards closing this gap. In particular, inspired by~\cite{PauleNemes:97} and utilizing ideas from~\cite{Singer:97,Schneider:10c,DR2} we will show that the difference ring theory implies a canonical simplification in the sense of~\cite{Buchberger:83}. We can write the sums and products in a $\sigma$-reduced basis (see Definition~\ref{Def:SigmaReducedSet}) such that two expressions evaluate to the same sequence iff they are syntactically equal. 

In Section~\ref{Sec:TermAlgebra} we will define a term algebra in which we will represent our sequences in terms of indefinite nested sums and products. In particular, we will introduce one of the main features of~\texttt{Sigma} given in Problem~\texttt{SigmaReduce}: one can represent the expressions of our term algebra in canonical form. In Section~\ref{Sec:GeneralDRApproach} we will elaborate how this distinguished representation can be accomplished by exploiting the difference ring theory of \rpisiE-extensions. 
Here we will utilize the interplay (see Figure~\ref{Fig:UserInterface}) between the difference ring of sequences, 
the term algebra (equipped with an evaluation function) in which the sequences can be introduced by the user and the formal difference ring setting (also equipped with an evaluation function) in which the sequences can be modeled on the computer algebra level.
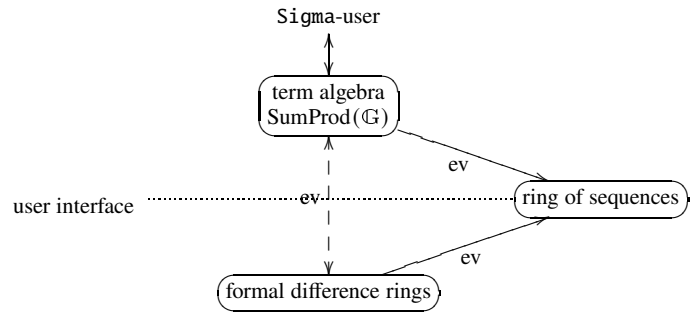
\begin{figure}

\vspace*{-0.4cm}

	$$
	\xymatrix@!R=0.4cm@C1cm{
		&\txt{\texttt{Sigma}-user}\ar@{<->}[d]&\\
		&*+[F-:<6pt>]{\txt{term algebra\\ $\Sum\Prod(\GG)$}}\ar[rd]_{\txt{ev}}\ar@{<-->}[dd]_{\txt{ev}}\\
		\txt{user interface}&\ar@{.}[]+<-8em,0em>;[r]+<-4em,0em>&*+[F-:<6pt>]{\txt{ring of sequences}}\\
		&*+[F-:<6pt>]{\txt{formal difference rings}}\ar[ru]_{\txt{ev}}
	}
	$$
	\caption{The symbolic summation framework for difference rings and fields}\label{Fig:UserInterface}

\vspace*{-0.4cm}

\end{figure}
In Section~\ref{Sec:RepProblem} we will make this construction precise by using the existing difference ring algorithms. In particular, we will concentrate on refined simplifications, like finding expressions with minimal nesting depth. Finally, we are in the position to specify in Section~\ref{Sec:DefiniteProblem} the above introduced summation paradigms of \texttt{Sigma} within the term algebra level. 
In Section~\ref{Sec:Applications} we present the main applications of the presented algorithms that support the evaluation of Feynman integrals. We conclude the article in Section~\ref{Sec:Conclusion}.

\section{The term algebra $\Sum\Prod(\GG)$}\label{Sec:TermAlgebra}

Inspired by~\cite{PauleNemes:97} we will refine the construction from~\cite{Schneider:10b} to introduce a term algebra
for a big class of indefinite nested sums and products. 

The basis of our construction (see also~\cite{Bauer:99}) will be the rational function field extension $\KK=K(q_1,\dots,q_v)$ over a field $K$ and on top of it the rational function field extension $\mGG:=\KK(x,x_1,\dots,x_v)$ over $\KK$. For any element $f=\frac{p}{q}\in \mGG$ with $p,q\in\KK[x,x_1,\dots,x_v]$ where $q\neq0$ and $p,q$ being coprime we define
	\begin{equation}\label{Equ:EvalMixed}
	\ev(f,k)=\begin{cases}
	0&\text{if }q(k,q_1^k,\dots,q_v^k)=0\\
	\frac{p(k,q_1^k,\dots,q_v^k)}{q(k,q_1^k,\dots,q_v^k)}&\text{if }q(k,q_1^k,\dots,q_v^k)\neq0.
	\end{cases}
	\end{equation}
	Note that there is a $\delta\in\NN$ with $q(k,q_1^k,\dots,q_v^k)\neq0$ for all $k\in\NN$ with $k\geq\delta$; for an algorithm that determines $\delta$ if one can factorize polynomials over $K$ see~\cite[Sec.~3.2]{Bauer:99}. We define $L(f)$ to be the minimal value $\delta\in\NN$ such that $q(k,q_1^k,\dots,q_v^k)\neq0$ holds for all $k\geq\delta$; further, we define $Z(f)=\max(L(1/p),L(1/q))$ for $f\neq0$. 
	Later we will call $L:\mGG\to\NN$ also an \emph{$o$-function} and\footnote{For a ring $\AR$ we denote by $\AR^*$ the set of units. If $\AR$ is a field, this means $\AR^*=\AR\setminus\{0\}$.} $Z:\mGG^*\to\NN$ a \emph{$z$-function}. $\mGG=\KK(x,x_1,\dots,x_v)$ represents the \emph{multibasic mixed sequences}. 	
	The special cases $\rGG=\KK(x)$ and $\bGG=\KK(x_1,\dots,x_v)$ represent the \emph{rational} and the \emph{multi-basic sequences}, respectively. If not specified further, $\GG$ will stand for one of the three cases $\mGG$, $\rGG$ or $\bGG$. 

Now we extend $\GG$ to expressions $\Sum\Prod(\GG)$ in terms of indefinite nested sums defined over indefinite nested products. 
For the set of nontrivial roots of unity 
$$\cRR=\{r\in\KK\setminus\{1\}\mid r\text{ is a root of unity}\}$$
we introduce the function $\ord:\cRR\to\NNP$ with 
$$\ord(r)=\min\{n\in\NNP\mid r^n=1\}.$$
Let $\opowerAlone$, $\oplus$, $\odot$, $\Sum$, $\Prod$ and $\RPow$ be operations with the signatures
$$\begin{array}[t]{llcl}
\opowerAlone: &\Sum\Prod(\GG)\times\ZZ&\rightarrow&\Sum\Prod(\GG)\\
\oplus: &\Sum\Prod(\GG)\times\Sum\Prod(\GG)&\rightarrow&\Sum\Prod(\GG)\\
\odot: &\Sum\Prod(\GG)\times\Sum\Prod(\GG)&\rightarrow&\Sum\Prod(\GG)\\
\Sum: &\NN\times\Sum\Prod(\GG)&\rightarrow&\Sum\Prod(\GG)\\
\Prod: &\NN\times\Sum\Prod(\GG)&\rightarrow&\Sum\Prod(\GG)\\
\RPow: &\cRR&\rightarrow&\Sum\Prod(\GG).
\end{array}$$
In the following we write $\opower$, $\oplus$ and $\odot$ in infix notation, and $\Sum$ and $\Prod$ in prefix notation.
Further, for $(\dots((f_1\text{{\tiny $\square$}} f_2)\text{{\tiny $\square$}} f_3)\text{{\tiny $\square$}}\dots\text{{\tiny $\square$}} f_r)$ with $\text{{\tiny $\square$}}\in\{\odot,\oplus\}$ and $f_1,\dots,f_r\in\Sum\Prod(\GG)$ we write $f_1\text{{\tiny $\square$}} f_2\text{{\tiny $\square$}} f_3\text{{\tiny $\square$}} \dots\text{{\tiny $\square$}} f_r$.

More precisely, we define the following chain of set inclusions:
\begin{equation}\label{Equ:ProdSumInclusion}
\begin{array}{ccccccccl}
&&&&\Prod_1(\GG)&\subset&\Sum\Prod_1(\GG)&&\txt{\small expressions with\\ \small single nested products}\\[-0.1cm]
&&&&\cap&&\cap\\[0.1cm]
&&\Prod^*(\GG)&\subset&\Prod(\GG)&\subset&\Sum\Prod(\GG)&&\txt{\small expressions\\ \small with nested products}\\[0.2cm]
&&\txt{\small power products\\ \small in products}&&\txt{\small expressions\\ \small in products}&&\txt{\small expressions in\\ \small \small sums and products.}
\end{array}
\end{equation}
\noindent Here we start with the \emph{set of power products of nested products $\Prod^*(\GG)$} which is the smallest set that contains $1$ 
with the following properties:
\begin{enumerate}
\item If $r\in\cRR$ then $\RPow(r)\in\Prod^*(\GG)$.
\item If $p\in\Prod^*(\GG)$, $f\in\GG^*$, $l\in\NN$ with $l\geq Z(f)$ then\footnote{We also write $p$ instead of $f\odot p$ if $f=1$; similarly we write $f$ instead of $f\odot p$ if $p=1$.} $\Prod(l,\!f\!\odot\! p)\in\Prod^*(\GG)$.
\item If $p,q\in\Prod^*(\GG)$ then $p\odot q\in\Prod^*(\GG)$.
\item If $p\in\Prod^*(\GG)$ and $z\in\ZZ\setminus\{0\}$ then $p\opower z\in\Prod^*(\GG)$.
\end{enumerate}
Later we will also use the sets 
\begin{align*}
\Pi(\GG)=&\{\RPow(r)\mid r\in\cRR\}\cup\{\Prod(l,f\odot p)\mid l,f,p\text{ as given in item 2}\}\\
\Pi_1(\GG)=&\{\RPow(r)\mid r\in\cRR\}\cup\{\Prod(l,f)\mid f\in \GG^*, l\in\NN\text{ with }l\geq Z(f)\}
\end{align*}
where $\Pi(\GG)$ and $\Pi_1(\GG)$ contains all nested and single nested products, respectively.

\begin{example}\label{Exp:NestedProdExpr}
In $\Prod^*(\GG)$ with $\GG=\QQ(q_1)(x,x_1)$ we get, e.g.,
$$P=(\underbrace{\Prod(1,\Prod(1,x)\opower(-2))}_{\in\Pi(\GG)}\opower2)\odot\underbrace{\Prod(1,\tfrac{x_1+x_1^2}{x})}_{\in\Pi_1(\GG)}\odot \underbrace{\RPow(-1)}_{\Pi_1(\GG)}\in\Prod^*(\GG).$$
\end{example}

\noindent Finally, we define \emph{$\Sum\Prod(\GG)$} as the smallest set containing $\GG\cup\Prod^*(\GG)$ with the following properties: 
\begin{enumerate}
\item For all $f,g\in\Sum\Prod(\GG)$ we have $f\oplus g\in\Sum\Prod(\GG)$.
\item For all $f,g\in\Sum\Prod(\GG)$ we have $f\odot g\in\Sum\Prod(\GG)$.
\item For all $f\in\Sum\Prod(\GG)$ and $k\in\NNP$ we have $f\opower k\in\Sum\Prod(\GG)$.
\item For all $f\in\Sum\Prod(\GG)$ and $l\in\NN$ we have $\Sum(l,f)\in\Sum\Prod(\GG)$.
\end{enumerate}
$\Sum\Prod(\GG)$ is also called the \emph{set of expressions in terms of nested sums over nested products}.
In addition, we define the following subsets:
\begin{enumerate}
	\item the \emph{set $\Prod(\GG)$ of expressions in terms of nested products (over $\GG$)}, i.e., all elements from $\Sum\Prod(\GG)$ which are free of sums;
	\item the \emph{set $\Prod_1(\GG)$ of expressions in terms of depth-1 products (over $\GG$)}, i.e., all elements from $\Prod(\GG)$ where the arising products are taken from $\Pi_1(\GG)$;
	\item the \emph{set $\Sum(\GG)$ of expressions in terms of nested sums (over $\GG$)}, i.e., all elements from $\Sum\Prod(\GG)$ where no products appear;
	\item the \emph{set $\Sum\Prod_1(\GG)$ of expressions in terms of nested sums over depth-1 products (over $\GG$)}, i.e., all elements from $\Sum\Prod(\GG)$ with products taken from $\Pi_1(\GG)$.
\end{enumerate}
In other words, besides the chain of set inclusions given in~\eqref{Equ:ProdSumInclusion} we also get
$$\Sum(\GG)\subset\Sum\Prod_1(\GG)\subset\Sum\Prod(\GG).$$ 
Furthermore, we introduce the \emph{set of nested sums over nested products} given by
\begin{equation*}
\Sigma(\GG)=\{\Sum(l,f)\mid l\in\NN \text{ and }f\in\Sum\Prod(\GG)\},
\end{equation*}
and the \emph{set of nested sums over single nested products} given by
\begin{equation*}
\Sigma_1(\GG)=\{\Sum(l,f)\mid l\in\NN \text{ and }f\in\Sum\Prod_1(\GG)\}.
\end{equation*}
For convenience we will also introduce the \emph{set $\Sigma\Pi(\GG)=\Sigma(\GG)\cup\Pi(\GG)$ of nested sums and products} and 
the \emph{set $\Sigma\Pi_1(\GG)=\Sigma_1(\GG)\cup\Pi_1(\GG)$ of nested sums and single-nested products}. In short, we obtain the following chain of sets:
$$\begin{array}{ccccccl}
\Pi_1(\GG)&\subset&\Sigma_1\Pi_1(\GG)&\supset&\Sigma_1(\GG)&&\txt{\small with single nested products}\\
\cap&&\cap&&\cap\\
\Pi(\GG)&\subset&\Sigma\Pi(\GG)&\supset&\Sigma(\GG)&&\txt{\small with nested products}\\[0.3cm]
\txt{\small products}&&\txt{\small products and\\ \small sums over products}&&\txt{\small sums over products}
\end{array}$$

\begin{example}\label{Exp:RatExpression}
	With $\GG=\KK(x)$ we get, e.g., the following expressions:
	\begin{align*}
	E_1&=\Sum(1,\Prod(1,x))\in\Sigma_1(\GG)\subset\Sum\Prod_1(\GG),\\
	E_2&=\Sum(1,\tfrac1{x+1}\odot\Sum(1,\tfrac1{x^3})\odot\Sum(1,\tfrac1{x}))\in\Sigma(\GG)\subset\Sum(\GG),\\
	E_3&=(E_1\oplus E_2)\odot E_1\in\Sum\Prod_1(\GG).
	\end{align*}
\end{example}

Finally, we introduce a function $\ev$ (a model of the term algebra) which evaluates a given expression of our term algebra to sequence elements. In addition, we also introduce the depth for our expressions. We start with the evaluation function $\ev:\GG\times\NN\to\KK$ given by~\eqref{Equ:EvalMixed} and the depth function $\depth:\GG\to\NN$ given by 
$$\depth(f)=\begin{cases}0&\text{if $f\in\KK$}\\
1&\text{if $f\in\GG\setminus\KK$}.
\end{cases}$$
\noindent Now $\ev$ and $\depth$ are extended recursively from $\GG$ to 
$\fct{\ev}{\Sum\Prod(\GG)\times\NN}{\Sum\Prod(\GG)}$ and $\depth:\Sum\Prod(\GG)\to\NN$ as follows.
\begin{enumerate}
	\item For $f,g\in\Sum\Prod(\GG)$ and $k\in\ZZ\setminus\{0\}$ ($k>0$ if $f\notin\Prod^*(\GG)$) we set
	\begin{align*}
	\ev(f\opower k,n)&:=\ev(f,n)^k,&\depth(f\opower k)&:=\depth(f),\\
	\ev(f\oplus g,n)&:=\ev(f,n)+\ev(g,n),&\depth(f\oplus g)&:=\max(\depth(f),\depth(g)),\\
	\ev(f\odot g,n)&:=\ev(f,n)\,\ev(g,n)&\depth(f\odot g)&:=\max(\depth(f),\depth(g));
	\end{align*}
	\item for $r\in\cRR$ and $\Sum(l,f), \Prod(\lambda,g)\in\Sum\Prod(\GG)$ we define
	\begin{align*}
	\ev(\RPow(r),n)&:=\prod_{i=1}^nr=r^n,&\depth(\RPow(r))&:=1,\\[-0.1cm]
	\ev(\Sum(l,f),n)&:=\sum_{i=l}^n\ev(f,i),&\depth(\Sum(l,f))&:=\depth(f)+1,\\[-0.1cm]
	\ev(\Prod(\lambda,g),n)&:=\prod_{i=\lambda}^n\ev(g,i),&\depth(\Prod(\lambda,g))&:=\depth(g)+1.
	\end{align*}
\end{enumerate}

\begin{remark}
(1) Since $\ev(\Prod(r,1),n)=\ev(\RPow(r),n)$, $\RPow$ is redundant. But it will be convenient for the treatment of canonical representations (see Definition~\ref{Def:CanonicalSet}).\\
(2) Any evaluation of $\Prod^*(\GG)$ is well defined and nonzero since the lower bounds of the products are set large enough via the $z$-function.\\  
(3) $\Sum\Prod_1(\rGG)$ covers as special cases generalized/cyclotomic harmonic sums \cite{Bluemlein:99,Vermaseren:99,Moch:02,ABS:11,ABS:13} and binomial sums~\cite{Davydychev:2003mv,Weinzierl:2004bn,ABRS:14}.
\end{remark}

\textit{In a nutshell, $\ev$ applied to $f\in\Sum\Prod(\GG)$ represents a sequence. In particular, $f$  can be considered as a simple program and $\ev(f,n)$ with $n\in\NN$ executes it (like an interpreter/compiler) yielding the $n$th entry of the represented sequence.}

\begin{definition}
For $F\in\Sum\Prod(\GG)$ and $n\in\NN$ we write
$F(n):=\ev(F,n).$
\end{definition}

\begin{example}
For $E_i\in\Sum\Prod(\KK(x))$ with $i=1,2,3$ in Ex.~\ref{Exp:RatExpression} we get $\depth(E_i)=3$ and
\begin{align*}
E_1(n)&=\ev(E_1,n)=\sum_{k=1}^n\prod_{i=1}^ki=\sum_{k=1}^nk!,&
E_2(n)&=\ev(E_2,n)=\sum_{k=1}^n \tfrac{1}{1+k}\Big(
\ssumB{i=1}k \frac{1}{i^3}\Big) 
\ssumB{i=1}k \frac{1}{i}
\end{align*}
and 
$E_3(n)=(E_1(n)+E_2(n))E_1(n)$. For $P\in\Sum\Prod(\KK(x,x_1))$ in Ex.~\ref{Exp:NestedProdExpr} we get
$$P(n)=\ev(P,n)=\Big(\prod_{k=1}^n\Big(\prod_{i=1}^ki\Big)^{-2}\Big)^2\Big(\prod_{k=1}^n\frac{q^k+q^{2k}}{k}\Big)(-1)^n,\quad \depth(P)=3.$$
\end{example}

\begin{example}
We show how the expressions of $\Sum\Prod(\GG)$ with $\ev$ are handled in
	
	\begin{mma}
		\In << Sigma.m \\
		\vspace*{-0.1cm}
		\Print \LoadP{Sigma - A summation package by Carsten Schneider
			\copyright\ RISC-JKU}\\
	\end{mma}
	
\noindent Instead of $F=\Sum(1,\tfrac1x)$ with $F(n)=\ev(F,n)=\sum_{k=1}^n\frac1k$ we introduce the sum by 
\begin{mma}
\In F=SigmaSum[\tfrac1k,\{k,1,n\}]\\
\Out \sum_{k=1}^n\frac1{k}\\
\end{mma}
\noindent where $n$ is kept symbolically. However, if the user replaces $n$ by a concrete integer, say $5$, the evaluation mechanism is carried out and we get $F(5)=\ev(F,5)$:
\begin{mma}
\In F/.n\to 5\\
\Out \frac{137}{60}\\
\end{mma}
\noindent Similarly, we can define $E_1$ from Example~\ref{Exp:RatExpression} as follows:
	\begin{mma}
		\In E_1=SigmaSum[SigmaFactorial[k],\{k,1,n\}]\\
		\Out \sum_{k=1}^nk!\\
	\end{mma}
\noindent Here \MText{SigmaFactorial} defines the factorials; its full definition is given by: 
\begin{mma}
\In GetFullDefinition[E_1]\\
\Out \sum_{k=1}^n\prod_{o_1=1}^ko_1\\
\end{mma}
\noindent Similarly, one can introduce as shortcuts powers, Pochhammer symbols, binomial coefficients, (generalized) harmonic sums~\cite{ABS:13} etc.\ with the function calls \MText{SigmaPower}, \MText{SigmaPochhammer}, \MText{SigmaBinomial} or \MText{S}, respectively; analogously $q$-versions are available. Together with Ablinger's package \texttt{HarmonicSums}, also function calls for cyclotomic sums~\cite{ABS:11} and binomial sums~\cite{ABRS:14} are available.

In the same fashion, we can define $E_2,E_3\in\Sum\Prod(\QQ(x))$ from Example~\ref{Exp:RatExpression} and $P\in\Sum\Prod(\QQ(q)(x,x_1))$ with $q=q_1$ from Example~\ref{Exp:NestedProdExpr} by 
	\begin{mma}
		\In E_2 = SigmaSum[
		SigmaSum[1/i, \{i, 1, k\}] SigmaSum[1/i^3, \{i, 1, k\}]/(k + 1), \{k, 1, n\}]\\	
		\Out \sum_{k=1}^n \frac{\displaystyle\Big(
			\ssumB{i=1}k \frac{1}{i^3}\Big) 
			\ssumB{i=1}k \frac{1}{i}}{1+k}\\
	\end{mma}

	\begin{mma}
		\In E_3=(E_1+E_2)E_1\\
		\Out \Big(\sum_{k=1}^nk!\Big)\Bigg(\sum_{k=1}^nk!+\sum_{k=1}^n \frac{\displaystyle\Big(
			\ssumB{i=1}k \frac{1}{i^3}\Big) 
			\ssumB{i=1}k \frac{1}{i}}{1+k}\Bigg)\\
	\end{mma}

	\begin{mma}
		\In P=SigmaProduct[SigmaProduct[i,\{i,1,k\}]^{-2},\{k,1,n\}]^2\newline \hspace*{1cm}SigmaProduct[(SigmaPower[q,k]+SigmaPower[q,k]^2)/k,\{k,1,n\}]SigmaPower[-1,n]\\
		\Out \Big(\prod_{k=1}^n\Big(\prod_{i=1}^ki\Big)^{-2}\Big)^2\Big(\prod_{k=1}^n\frac{q^k+(q^k)^2}{k}\Big)(-1)^n\\
	\end{mma}
\noindent Note that within \texttt{Sigma} the root of unity product $\RPow(\alpha)$ with $\alpha\in \cRR$ can be either defined by $\MText{SigmaPower[\alpha,n]}$ or $\MText{SigmaProduct[\alpha,\{k,1,n\}]}$. Whenever $\alpha$ is recognized as an element of $\cRR$, it is treated  as the special product $\RPow(\alpha)$.
\end{example}

Expressions in $\Sum\Prod(\GG)$ (similarly within Mathematica using \texttt{Sigma}) can be written in different ways such that they produce the same sequence. In the remaining part of this section we will elaborate on canonical (unique) representations~\cite{Buchberger:83}.

In a preprocessing step we can rewrite the expressions to a reduced representation; note that the equivalent definition in the ring setting is given in Definition~\ref{Def:ShapeOfElements}.

\begin{definition}\label{Def:SumProdReduced}
	An expression $A\in\Sum\Prod(\GG)$ is in \emph{reduced representation} if
	\begin{equation}\label{Equ:SumReduceRep}
	A=(f_1\odot P_1)\oplus(f_2\odot P_2)\oplus\dots\oplus (f_r\odot P_r)
	\end{equation}
	with $f_i\in\GG^*$ and 
	\begin{equation}\label{Equ:PPReduceRep}
	P_i=(a_{i,1}\opower z_{i,1})\odot (a_{i,2}\opower z_{i,2})\odot\dots\odot (a_{i,n_i}\opower z_{i,n_i})\in\Prod^*(\GG)
	\end{equation}
	for $1\leq i\leq r$ where 
	\begin{itemize}
	\item $a_{i,j}=\Sum(l_{i,j},f_{i,j})\in\Sigma(\GG)$ and $z_{i,j}\in\NNP$,
	\item $a_{i,j}=\Prod(l_{i,j},f_{i,j})\in\Pi(\GG)$ and $z_{i,j}\in\ZZ\setminus\{0\}$, or 
	\item $a_{i,j}=\RPow(f_{i,j})$ with $f_{i,j}\in\cRR$ and $1\leq z_{i,j}<\ord(r_{i,j})$ 
	\end{itemize}
	such that the following properties hold:
	\begin{enumerate}
		\item for each $1\leq i\leq r$ and $1\leq j<j'<n_i$ we have $a_{i,j}\neq a_{i,j'}$;
		\item for each $1\leq i<i'\leq r$ with $n_i=n_j$ there does not exist a $\sigma\in S_{n_i}$ with $P_{i'}=(a_{i,\sigma(1)}\opower z_{i,\sigma(1)})\odot (a_{i,\sigma(2)}\opower z_{i,\sigma(2)})\odot\dots\odot (a_{i,\sigma(n_i)}\opower z_{i,\sigma(n_i)})$.
	\end{enumerate}
We say that $H\in\Sum\Prod(\GG)$ is in \emph{sum-product reduced representation} (or in \emph{sum-product reduced form})
if it is in reduced representation and for each $\Sum(l,A)$ and $\Prod(l,A)$ that occur recursively in $H$ the following holds:
$A$ is in reduced representation as given in~\eqref{Equ:SumReduceRep}, 
$l\geq\max(L(f_1),\dots,L(f_r))$ (i.e. the first case of~\eqref{Equ:EvalMixed} is avoided during evaluations) and the lower bound $l$ is greater than or equal to the lower bounds of the sums and products inside of $A$.
\end{definition}


\begin{example}\quad
In \texttt{Sigma} the reduced representation of $E_3$ is calculated with the call
\begin{mma}
\In CollectProdSum[E_3]\\
\vspace*{-0.3cm}
\Out \Big(\sum_{k=1}^nk!\Big)^2+\Big(\sum_{k=1}^nk!\Big)\sum_{k=1}^n \frac{\displaystyle\Big(
	\ssumB{i=1}k \frac{1}{i^3}\Big) 
	\ssumB{i=1}k \frac{1}{i}}{1+k}\\
\end{mma}
\end{example}

\noindent Before we can state one of \texttt{Sigma}'s crucial features we need the following definitions.

\begin{definition}\label{Def:CanonicalSet}
Let $W\subseteq\Sigma\Pi(\GG)$.
We define 
\emph{$\Sum\Prod(W,\GG)$} as the set of elements from $\Sum\Prod(\GG)$ which are in reduced representation and where the arising sums and products are taken from $W$. More precisely, $A\in\Sum\Prod(W,\GG)$ if and only if it is of the form~\eqref{Equ:SumReduceRep}
with~\eqref{Equ:PPReduceRep} where $a_{i,j}\in W$. In the following we seek a $W$ with the following properties:
\begin{itemize}
\item $W$ is called \emph{shift-closed over $\GG$} if
for any $A\in\Sum\Prod(W,\GG)$, $s\in\ZZ$ there are 
$B\in\Sum\Prod(W,\GG)$ and $\delta\in\NN$ such that
$A(n+s)=B(n)$ holds for all $n\geq\delta$.
\item $W$ is called \emph{shift-stable over $\GG$} if for any product or sum in $W$ the multiplicand or summand is built by sums and products from $W$.
\item $W$ is called \emph{canonical reduced  over $\GG$} if for any $A,B\in\Sum\Prod(W,\GG)$ with
$A(n)=B(n)$ for all $n\geq\delta$
for some $\delta\in\NN$ the following holds: $A$ and $B$ are the same up to permutations of the operands in $\oplus$ and $\odot$.
\end{itemize}
\end{definition}

The sum-product reduced form is only a minor simplification, but it will be convenient to connect to the difference ring theory below; see Corollary~\ref{Cor:SwitchTiToAPS}. In Lemma~\ref{Lemma:shiftStableImpliesShiftClosed} we note further that shift-stability implies shift-closure. In particular, the shift operation can be straightforwardly carried out; the proof will be delivered later on page~\pageref{Proof:shiftStableImpliesShiftClosed}. 

\begin{lemma}\label{Lemma:shiftStableImpliesShiftClosed}
	If a finite set $W\subset\Sigma\Pi(\GG)$ is shift-stable and the elements are in sum-product reduced form\footnote{The sum-product reduced form is not necessary, but simplifies the proof given on page~\pageref{Proof:shiftStableImpliesShiftClosed}.}, then it is also shift-closed. If $\KK$ is computable then one can compute for $F\in\Sum\Prod(W,\GG)$ and $\lambda\in\ZZ$ a $G\in\Sum\Prod(W,\GG)$ such that $F(n+\lambda)=G(n)$ holds for all $n\geq\delta$ for some $\delta$. If one can factor polynomials over $\KK$, $\delta$ can be determined.
\end{lemma}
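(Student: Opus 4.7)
My plan is to reduce the lemma to the one-step shifts $\lambda = \pm 1$ and iterate; the one-step case I would handle by a simultaneous induction on the depth of atoms in $W$. Concretely, I would prove the claim: for every $a \in \Sigma\Pi(\GG) \cap W$ there exist $\phi^+(a), \phi^-(a) \in \Sum\Prod(W,\GG)$ and a $\delta_a \in \NN$ with $a(n \pm 1) = \phi^\pm(a)(n)$ for all $n \geq \delta_a$. The base case $a = \RPow(r)$ is immediate, since $a(n \pm 1) = r^{\pm 1}\,a(n)$ yields $\phi^\pm(a) := r^{\pm 1} \odot a$. For a summation atom $a = \Sum(l, h) \in W$, the telescoping identities $a(n+1) = a(n) + h(n+1)$ (for $n \geq l-1$) and $a(n-1) = a(n) - h(n)$ (for $n \geq l$) reduce the task to shifting $h$; by shift-stability, $h \in \Sum\Prod(W,\GG)$, and its constituent atoms have strictly smaller depth, so the inductive hypothesis supplies the required shift. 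For a product atom $a = \Prod(l, f \odot p) \in W$ with $f \in \GG^*$ and $p \in \Prod^*(\GG)$ (in sum-product reduced form), I would use $a(n+1) = f(n+1)\,p(n+1)\,a(n)$ and $a(n-1) = f(n)^{-1}\,p(n)^{-1}\,a(n)$; shifting $p$ reduces to shifting its inner atoms, all in $\Pi(\GG) \cap W$ by shift-stability and of smaller depth, while $p(n)^{-1}$ is representable inside $\Prod^*(\GG)$ by negating the integer exponents of those atoms, using that every element of $\Pi(\GG)$ evaluates to a nonzero value beyond its lower bound.

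To extend the one-step shift to arbitrary $F \in \Sum\Prod(W,\GG)$, I would use that the shift commutes with $\oplus$, $\odot$ and $\opowerAlone$: writing $F = \bigoplus_i f_i \odot P_i$ in reduced form, I replace each $f_i$ by its image under the standard shift on $\GG$ (which is again in $\GG$) and each atom $a$ appearing in each $P_i$ by $\phi^\pm(a)$, then re-reduce the result to the shape of Definition~\ref{Def:SumProdReduced} by collecting like factors and summands. Iterating the one-step shift $|\lambda|$ times yields the desired $G$ for any $\lambda \in \ZZ$. The overall threshold $\delta$ is the maximum of the individual $\delta_a$, the $L$-values of the rational functions produced during substitution, and the $Z$-values required for the inversions in the negative-shift step; these bounds are finite, and effectively computable whenever polynomial factorization over $\KK$ is available, as already noted after~\eqref{Equ:EvalMixed} and in~\cite[Sec.~3.2]{Bauer:99}. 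Since every step is a finite syntactic manipulation combined with rational arithmetic in $\GG$, this gives both the shift-closure claim and the algorithmic counterpart.

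The main obstacle I anticipate is the negative-shift case for products: one must confirm that $h(n)^{-1}$ genuinely lies in $\Sum\Prod(W,\GG)$ without introducing atoms outside $W$. This works because shift-stability places all inner atoms of $p$ in $W$, and $\Prod^*(\GG)$ is closed under arbitrary nonzero integer exponents; however, it requires tracking that each such atom has a well-defined nonzero evaluation past a computable threshold, which is exactly what the sum-product reduced form guarantees via its lower-bound conditions — justifying the footnote's remark that this form simplifies the proof. A secondary subtlety is ensuring well-foundedness of the induction: since $W$ is finite and every atom has finite depth, processing atoms in order of increasing depth makes the recursion terminate, and finiteness of $W$ prevents a runaway cascade of auxiliary atoms being pulled in during the rewriting.
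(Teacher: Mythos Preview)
Your argument is correct, but it takes a genuinely different route from the paper's. The paper does not work in the term algebra at all: it invokes Corollary~\ref{Cor:SwitchTiToAPS} to build a basic $APS$-extension $\dfield{\GG\lr{t_1}\dots\lr{t_e}}{\sigma}$ with $\expr(t_i)=T_i$, represents $F$ by a ring element $f$, applies the ring automorphism $\sigma^{\lambda}$ (which realizes the shift by the defining property~\eqref{Equ:evShift} of the evaluation function), and then sets $G:=\expr(\sigma^{\lambda}(f))$; the threshold $\delta$ drops out of the $o$-function via $L(f)+\max(0,-\lambda)$. No depth induction, no case analysis on atom type, no separate treatment of negative shifts.

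What each approach buys: the paper's proof is essentially a one-liner once Section~3's translation machinery is in place, and it doubles as an early advertisement that the automorphism $\sigma$ absorbs all the telescoping and inversion bookkeeping you carry out by hand. Your proof, by contrast, is self-contained at the term-algebra level and needs nothing from Section~3---which arguably fits better with the lemma's placement in Section~2---but the price is the explicit case work, in particular the negative-shift case for products where you must argue that $p(n)^{-1}$ stays in $\Sum\Prod(W,\GG)$ (handling $\RPow$ exponents modulo the order, etc.). Both are fine; the paper's version is just the one that reflects the article's overall philosophy of pushing everything into the difference ring.
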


Based on this observation, we focus on $\sigma$-reduced sets which we define as follows.

\begin{definition}\label{Def:SigmaReducedSet}
$W\subseteq\Sigma\Pi(\GG)$ is called \emph{$\sigma$-reduced over $\GG$} if it is canonical reduced, shift-stable and the elements in $W$ are in sum-product reduced form. In particular, $A\in\Sum\Prod(W,\GG)$ is called \emph{$\sigma$-reduced (w.r.t.\ $W$)} if $W$ is $\sigma$-reduced over $\GG$.
\end{definition}

More precisely, we are interested in the following problem.

\begin{programcode}{Problem \textsf{SigmaReduce}: Compute a $\sigma$-reduced representation}
	\hspace*{-0.1cm}\begin{minipage}[t]{1.1cm}Given:\end{minipage}\begin{minipage}[t]{10.3cm}
		$A_1,\dots,A_u\in\Sum\Prod(\GG)$ with $\GG\in\{\rGG,\bGG,\mGG\}$, i,e., $\GG=\KK(x,x_1,\dots,x_v)$ or $\GG=\KK(x_1,\dots,x_v)$.
	\end{minipage}\\
	\begin{minipage}[t]{1.1cm}Find:\end{minipage}\begin{minipage}[t]{10.3cm}
		a $\sigma$-reduced set $W=\{T_1,\dots,T_e\}\subset\Sigma\Pi(\GG')$ in\footnote{In general, we might need a larger field $\GG'=\KK'(x,x_1,\dots,x_v)$ or $\GG'=\KK'(x_1,\dots,x_v)$ where the field $\KK$ is extended to $\KK'$.} $\GG'$,  $B_1\dots,B_u\in\Sum\Prod(W,\GG')$ and $\delta_1,\dots,\delta_u\in\NN$ such that for all $1\leq i\leq r$ we get
		$$A_i(n)=B_i(n)\quad n\geq\delta_i.$$
	\end{minipage}\\
\end{programcode}

\begin{example}\label{Exp:SigmaReduceBasic}
Consider the following two expressions from $\Sum\Prod(\QQ(x))$:
\begin{mma}
\In A_1 = SigmaSum[
SigmaSum[1/i, \{i, 1, k\}] SigmaSum[1/i^3, \{i, 1, k\}]/(k + 1), \{k, 1, n\}]\\
\Out \sum_{k=1}^n \frac{\displaystyle\Big(
	\ssumB{i=1}k \frac{1}{i^3}\Big) 
	\ssumB{i=1}k \frac{1}{i}}{1+k}\\
\end{mma}

\vspace*{-0.6cm}

\begin{mma}
\In A_2={\ssumB{i=1}n} \frac{1}{i^5}
-\frac{\displaystyle
	\ssumB{i=1}n \frac{1}{i^4}}{1+n}
-
{\ssumB{j=1}n} \frac{\displaystyle
	\ssumB{i=1}j \frac{1}{i^4}}{j}
-
{\ssumB{j=1}n} \frac{\displaystyle
	\ssumB{i=1}j \frac{1}{i^3}}{j^2}
+\frac{\displaystyle
	\ssumB{j=1}n \frac{\displaystyle
		\ssumB{i=1}j \frac{1}{i^3}}{j}}{1+n}
-
{\ssumB{j=1}n} \frac{\displaystyle
	\ssumB{i=1}j \frac{1}{i}}{j^4}
+\frac{\displaystyle
	\ssumB{j=1}n \frac{\displaystyle
		\ssumB{i=1}j \frac{1}{i}}{j^3}}{1+n}
+
{\ssumB{k=1}n}\frac{\displaystyle
	\ssumB{j=1}k \frac{\displaystyle
		\ssumB{i=1}j \frac{1}{i^3}}{j}}{k}
+
{\ssumB{k=1}n} \frac{\displaystyle
	\ssumB{j=1}k \frac{\displaystyle
		\ssumB{i=1}j \frac{1}{i}}{j^3}}{k};\\
\end{mma}

\noindent Then we solve Problem~\textsf{SigmaReduce} by executing:

\begin{mma}
\In \{B_1,B_2\}=SigmaReduce[\{A_1,A_2\},n]\\
\Out \{\sum_{k=1}^n \frac{\displaystyle\Big(
	\ssumB{i=1}k \frac{1}{i^3}\Big) 
	\ssumB{i=1}k \frac{1}{i}}{1+k},\sum_{k=1}^n \frac{\displaystyle\Big(
	\ssumB{i=1}k \frac{1}{i^3}\Big) 
	\ssumB{i=1}k \frac{1}{i}}{1+k}\}\\
\end{mma}

\noindent Since $B_1=B_2$, it follows $A_1=A_2$.
Note that the set $W$ pops up only implicitly. The set of all sums and products in the output, in our case 
$$W_0=\Big\{\sum_{k=1}^n \frac{1}{1+k}\Big(
	\sum_{i=1}^k \frac{1}{i^3}\Big) 
	\sum_{i=1}^k \frac{1}{i}\Big\}(=\big\{\Sum(1,\tfrac1{x+1}\odot\Sum(1,\tfrac1{x^3})\odot\Sum(1,\tfrac1{x}))\big\})$$ 
forms a canonical set in which $A_1$ and $A_2$ can be represented by $B_1$ and $B_2$ respectively. Adjoining in addition all sums and products that arise inside of the elements in $W_0$ we get
$W=\{\sum_{i=1}^n\frac{1}{i},\sum_{i=1}^n\frac{1}{i^3}\}\cup W_0$
which is a $\sigma$-reduced set.
Internally, \MText{SigmaReduce} parses the arising objects from left to right and constructs the underlying $\sigma$-reduced set $W$ in which the input expressions can be rephrased. 

Reversing the order of the input elements yields the following result:
\begin{mma}
	\In \{B_2,B_1\}=SigmaReduce[\{A_2,A_1\},n]\\
	\Out \Bigg\{-\Big(
	\sum_{k=1}^n \frac{1}{k^4}\Big) 
	\sum_{k=1}^n \frac{1}{k}
	+\frac{\displaystyle\Big(
		\ssumB{k=1}n \frac{1}{k^3}\Big) 
		\ssumB{k=1}n \frac{1}{k}}{1+n}
	-
	\sum_{k=1}^n \frac{\displaystyle
		\ssumB{k=1}k \frac{1}{k^3}}{k^2}
	+
	\sum_{k=1}^n \frac{\displaystyle
		\ssumB{k=1}k \frac{1}{k^4}}{k}
	+
	\sum_{k=1}^n \frac{\displaystyle\Big(
		\ssumB{k=1}k \frac{1}{k^3}\Big) 
		\ssumB{k=1}k \frac{1}{k}}{k},\newline
	-\Big(
	\sum_{k=1}^n \frac{1}{k^4}\Big) 
	\sum_{k=1}^n \frac{1}{k}
	+\frac{\displaystyle\Big(
		\ssumB{k=1}n \frac{1}{k^3}\Big) 
		\ssumB{k=1}n \frac{1}{k}}{1+n}
	-
	\sum_{k=1}^n \frac{\displaystyle
		\ssumB{k=1}k \frac{1}{k^3}}{k^2}
	+
	\sum_{k=1}^n \frac{\displaystyle
		\ssumB{k=1}k \frac{1}{k^4}}{k}
	+
	\sum_{k=1}^n \frac{\displaystyle\Big(
		\ssumB{k=1}k \frac{1}{k^3}\Big) 
		\ssumB{k=1}k \frac{1}{k}}{k}
	\Bigg\}\\
\end{mma}
\noindent In this case we get the $\sigma$-reduced set 
$$W=\Big\{
\sum_{j=1}^n \frac{1}{j^4},
\sum_{j=1}^n \frac{1}{j^3},
\sum_{j=1}^n \frac{1}{j},
\sum_{j=1}^n \frac{
	\ssumB{k=1}j \frac{1}{k^4}}{j},
\sum_{j=1}^n \frac{
	\ssumB{k=1}j \frac{1}{k^3}}{j^2},
\sum_{j=1}^n \frac{\big(
	\ssumB{k=1}j \frac{1}{k^3}\big) 
	\ssumB{k=1}j \frac{1}{k}}{j}\Big\}$$
(expressed in the \texttt{Sigma}-language)
and since $B_1=B_2$ we conclude again that $A_1=A_2$ holds for all $n\geq0$.
To check that $A_1=A_2$ holds, one can also execute
\begin{mma}
	\In SigmaReduce[A_1-A_2,n]\\
	\Out 0\\
\end{mma}
\noindent Here $W=\{\}$ is the $\sigma$-reduced set in which we can represent $A_1-A_2$ by $0$. 

\end{example}

Such a unique representation (up to trivial permutations) immediately gives rise to the following application: 
One can compare if two expressions $A_1$ and $A_2$ evaluate to the same sequences (from a certain point on): simply check if the resulting $B_1$ and $B_2$ in $\Sum\Prod(W,\GG)$ for a $\sigma$-reduced $W$ are the same (up to trivial permutations). Alternatively,
just check if $A_1-A_2$ can be reduced to zero. 
Besides that we will refine the above problem further. E.g., given $A\in\Sum\Prod(\GG)$, one can find an expression $B\in\Sum\Prod(W,\GG)$ and $\delta\in\NN$ such that $A(n)=B(n)$
holds for all $n\geq\delta$ and such that $B$ is as simple as possible. Here simple can mean that $\depth(B)$ is as small as possible. 
Other aspects might deal with the task of minimizing the number of elements in the set $W$. Finally, we want to emphasize that the above considerations can be generalized such that also unspecified/generic sequences can appear. The first important steps towards such a summation theory have been elaborated in~\cite{PS:19}.

As it turns out, the theory of difference rings provides all the techniques necessary to tackle the above problems. In the next section we introduce all the needed ingredients and will present our main result in Theorem~\ref{Thm:SigmaReduceInRPS} below.

\section{The difference ring approach for $\Sum\Prod(\GG)$}\label{Sec:GeneralDRApproach}

In the following we will rephrase expressions $H\in\Sum\Prod(\GG)$ as elements $h$ in a formal difference ring. More precisely, we will design
\begin{itemize}
	\item a ring $\AR$ with $\AR\supseteq\GG\supseteq\KK$ in which $H$ can be represented by $h\in\AR$;
	\item an evaluation function $\ev:\AR\times\NN\to\KK$ such that $H(n)=\ev(h,n)$ holds for sufficiently large $n\in\NN$;
	\item a ring automorphism $\sigma:\AR\to\AR$ which models the shift $H(n+1)$ with $\sigma(h)$.
\end{itemize}

\begin{example}\label{Exp:Q(x)[h]}
We will rephrase $F=\Sum(1,\frac1x)\in\Sum\Prod(\rGG)$ with $\rGG=\KK(x)$ where $\KK=\QQ$ in a formal ring. 
Namely, we take the polynomial ring $\AR=\rGG[s]=\QQ(x)[s]$ ($s$ transcendental over $\rGG$) and extend $\ev:\rGG\times\NN\to\QQ$ to $\ev':\AR\times\NN\to\QQ$ as  follows: for $h=\sum_{k=0}^df_k\,s^k$ with $f_k\in\rGG$ we set
\begin{equation}\label{Equ:QxsEv}
\ev'(h,n):=\sum_{k=0}^d\ev(f_k,n)\ev'(s,n)^k
\end{equation}
with
\begin{equation}\label{Equ:QxsEvs}
\ev'(s,n)=\sum_{i=1}^n\frac1i=:S_1(n)(=H_n);
\end{equation}
since $\ev$ and $\ev'$ agree on $\rGG$, we do not distinguish them anymore.
For any $$H=f_0\oplus (f_1\odot (F\opower 1))\oplus\dots\oplus (f_d\odot (F\opower d))$$
with $d\in\NN$ and $f_0,\dots,f_d\in\rGG$ we can take $h=\sum_{k=0}^df_ks^k\in\AR$ and get
$$H(n)=\ev(h,n)\quad\forall n\in\NN.$$
Further, we introduce the shift operator acting on the elements in $\AR$.
For the field $\rGG$ we simply define the field automorphism $\sigma:\rGG\to\rGG$ with $\sigma(f)=f|_{x\mapsto x+1}(=f(x+1))$.
Moreover, based on the observation that for any $n\in\NN$ we have
$$F(n+1)=\sum_{i=1}^{n+1}\frac1i=\sum_{i=1}^{n}\frac1i+\frac1{n+1},$$
we extend the automorphism $\sigma:\rGG\to\rGG$ to $\sigma':\AR\to\AR$ as follows: 
for $h=\sum_{k=0}^df_k\,s^k$ with $f_k\in\rGG$ we set
$\sigma'(h):=\sum_{k=0}^d\sigma(f_k)\sigma'(s)^k$
with
$\sigma'(s)=s+\tfrac1x$; since $\sigma'$ and $\sigma$ agree on $\rGG$, we do not distinguish them anymore.
We observe that
$$\ev(s,n+1)=\sum_{i=1}^{n+1}\frac1i=\sum_{i=1}^n\frac1i+\frac1{n+1}=\ev(s+\tfrac1{x+1},n)=\ev(\sigma(s),n)$$
holds for all $n\in\NN$ and more generally that
$\ev(h,n+l)=\ev(\sigma^l(h),n)$
holds for all $h\in\AR$, $l\in\ZZ$ and $n\in\NN$ with $n\geq\max(-l,0)$. 
\end{example}

As illustrated in the example above, the following definitions will be relevant.

\begin{definition}\label{Def:EvZ}
	A \emph{difference ring}/\emph{difference field} is a ring/field $\AR$ equipped with a ring/field automorphism $\sigma:\AR\to\AR $ which one also denotes by $\dfield{\AR}{\sigma}$. $\dfield{\AR}{\sigma}$ is \emph{difference ring/field extension} of a difference ring/field $\dfield{\HH}{\sigma'}$ if $\HH$ is a subring/subfield of $\AR$ and $\sigma|_{\HH}=\sigma'$. For a difference ring $\dfield{\AR}{\sigma}$ and a subfield $\KK$ of $\AR$ with\footnote{Note that $\dfield{\AR}{\sigma}$ is a difference ring extension of $\dfield{\KK}{\id}$.} $\sigma|_{\KK}=\id$ we introduce the following functions.	
	 \begin{enumerate}
		\item A function $\fct{\ev}{\AR\times\NN}{\KK}$ is called \emph{evaluation function} for $\dfield{\AR}{\sigma}$ if for all $f,g\in\AR$ and $c\in\KK$ there exists a $\lambda\in\NN$ with the following properties:
		\begin{align}
		\forall n\geq\lambda:&\,\ev(c,n)=c,\label{Equ:evC}\\
		\forall n\geq\lambda:&\,\ev(f+g,n)=\ev(f,n)+\ev(g,n),\label{Equ:evPlus}\\
		\forall n\geq\lambda:&\,\ev(f\,g,n)=\ev(f,n)\,\ev(g,n).\label{Equ:evTimes}\\
		\intertext{In addition, we require that for all $f\in\AR$ and $l\in\ZZ$ there exists a $\lambda$ with}
		\forall n\geq\lambda:&\,\ev(\sigma^l(f),n)=\ev(f,n+l).\label{Equ:evShift}
		\end{align}
		\item A function $\fct{L}{\AR}{\NN}$ is called an \emph{operation-function} (in short \emph{$o$-function}) for $\dfield{\AR}{\sigma}$ and an evaluation function $\ev$ if for any $f,g\in\AR$ with $\lambda=\max(L(f),L(g))$ the properties~\eqref{Equ:evPlus} and~\eqref{Equ:evTimes} hold and for any $f\in\AR$ and $l\in\ZZ$ with $\lambda=L(f)+\max(0,-l)$ property~\eqref{Equ:evShift} holds. 
		\item Let $G$ be a subgroup of $\AR^*$. $\fct{Z}{G}{\NN}$ is called a \emph{zero-function} (in short \emph{$z$-function}) for $\ev$ and $\GG$ if $\ev(f,n)\neq0$ holds for any $f\in\GG$ and integer $n\geq Z(f)$. 
	\end{enumerate}
\end{definition}

We note that a construction of a map $\ev:\AR\times\NN\to\KK$ with the properties~\eqref{Equ:evC} and~\eqref{Equ:evTimes} is straightforward. It is property~\eqref{Equ:evShift} that brings in extra complications: the evaluation of the elements in $\AR$ must be compatible with the automorphism $\sigma$. 

In this article we will always start with the following ground field; see~\cite{Bauer:99}.

\begin{example}\label{Exp:RationalDField1}
Take the rational function field	
$\mGG:=\GG=\KK(x,x_1,\dots,x_v)$ over $\KK=K(q_1,\dots,q_v)$, $v\geq0$, with the function~\eqref{Equ:EvalMixed},
together with the functions $L:\mGG\to\NN$ and $\fct{Z}{\mGG^*}{\NN}$ from the beginning of Section~\ref{Sec:TermAlgebra}. 
It is easy to see that $\ev:\mGG\times\NN\to\KK$ satisfies for all $c\in\KK$ and $f,g\in\GG$ the property~\eqref{Equ:evC} for $L(c)=0$ and the properties \eqref{Equ:evPlus} and~\eqref{Equ:evTimes} with $\lambda=\max(L(f),L(g))$. 
Finally, we take the automorphism $\fct{\sigma}{\mGG}{\mGG}$ defined by $\sigma|_{\KK}=\id$, $\sigma(x)=x+1$ and $\sigma(y_i)=q_i\,y_i$ for $1\leq i\leq v$. Then one can verify in addition that~\eqref{Equ:evShift} holds for all $f\in\mGG$ and $l\in\ZZ$ with $\lambda=\max(-l,L(f))$. Consequently, $\ev$ is an evaluation function for $\dfield{\mGG}{\sigma}$ and $L$ is an $o$-function for $\dfield{\mGG}{\sigma}$. In addition, $Z$ is a $z$-function for $\ev$ and $\mGG^*$ by construction. In the following we call $\dfield{\mGG}{\sigma}$ also a \emph{multibasic mixed difference field}. If $v=0$, i.e., $\rGG=\KK(x)=\KK'(x)$, we get the \emph{rational difference field} $\dfield{\rGG}{\sigma}$, and if we restrict to $\bGG=\KK(x_1,\dots,x_v)$, we get the \emph{multibasic difference field} $\dfield{\bGG}{\sigma}$.
\end{example}

We continue with the convention from above: if we write $\dfield{\GG}{\sigma}$, then it can be replaced by any of the difference rings $\dfield{\mGG}{\sigma}$, $\dfield{\rGG}{\sigma}$ or $\dfield{\bGG}{\sigma}$.

In the following we look for such a formal difference ring $\dfield{\AR}{\sigma}$ with a computable evaluation function $\ev$ and $o$-function $L$ in which we can model a finite set of expressions $A_1,\dots,A_u\in\Sum\Prod(\GG)$ with $a_1,\dots,a_u\in\AR$.

\begin{definition}
	Let $F\in\Sum\Prod(\GG)$ and $\dfield{\AR}{\sigma}$ be a difference ring extension of $\dfield{\GG}{\sigma}$ equipped with an evaluation function $\ev:\AR\times\NN\to\KK$. We say that \emph{$f\in\AR$ models $F$} if $\ev(f,n)=F(n)$ holds for all $n\geq\lambda$ for some $\lambda\in\NN$.
\end{definition}


\subsection{The naive representation in $APS$-extensions}

As indicated in Example~\ref{Exp:Q(x)[h]} our sum-product expressions will be rephrased in a tower of difference field and ring extensions. We start with the field version which will lead later to \pisiE-fields~\cite{Karr:81,Karr:85}.

\begin{definition}\label{Def:PSFieldExt}
	A difference field $(\FF,\sigma)$ is called a \emph{$PS$-field extension} of a difference field
	$\dfield{\HH}{\sigma}$ if 
	$\HH=\HH_0\leq\HH_1\leq\dots\leq\HH_e=\FF$
	is a tower of field extensions where for all $1\leq i\leq
	e$ one of the following holds:
	\begin{itemize}
		\item $\HH_i=\HH_{i-1}(t_i)$ is a rational function field extension with $\frac{\sigma(t_i)}{t_i}\in(\HH_{i-1})^*$ ($t_i$ is called a \emph{$P$-field monomial});
		\item $\HH_i=\HH_{i-1}(t_i)$ is a rational function extension with $\sigma(t_i)-t_i\in\HH_{i-1}$ ($t_i$ is called an \emph{$S$-field monomial}).
	\end{itemize}
\end{definition}

\begin{example}\label{Exp:RationalDField2}
Following Example~\ref{Exp:RationalDField1}, $\dfield{\mGG}{\sigma}$ with $\mGG=\KK(x,x_1,\dots,x_v)$ is a $PS$-field extension of $\dfield{\KK}{\sigma}$ with the $S$-field monomial $x$ and the $P$-monomials $x_1,\dots,x_v$. Similarly, $\dfield{\bGG}{\sigma}$ with $\bGG=\KK(x_1,\dots,x_v)$ forms a tower of $P$-field extensions of $\dfield{\KK}{\sigma}$ and $\dfield{\rGG}{\sigma}$ with $\rGG=\KK(x)$ is an $S$-field extension of $\dfield{\KK}{\sigma}$.
\end{example}

In addition, we will modify the field version to obtain the following ring version (allowing us to model also products over roots of unity).

\begin{definition}\label{Def:APSExt}
	A difference ring $(\EE,\sigma)$ is called an \emph{$APS$-extension} of a difference ring
	$\dfield{\AR}{\sigma}$ if 
	$\AR=\AR_0\leq\AR_1\leq\dots\leq\AR_e=\EE$
	is a tower of ring extensions where for all $1\leq i\leq
	e$ one of the following holds:
	\begin{itemize}
		\item $\AR_i=\AR_{i-1}[t_i]$ is a ring extension subject to the relation $t_i^{\nu}=1$ for some $\nu>1$ where $\frac{\sigma(t_i)}{t_i}\in(\AR_{i-1})^*$ is a primitive  $\nu$th root of unity ($t_i$ is called an \emph{$A$-monomial}, and $\nu$ is called the \emph{order of the $A$-monomial});
		\item $\AR_i=\AR_{i-1}[t_i,t_i^{-1}]$ is a Laurent polynomial ring extension with $\frac{\sigma(t_i)}{t_i}\in(\AR_{i-1})^*$ ($t_i$ is called a \emph{$P$-monomial});
		\item $\AR_i=\AR_{i-1}[t_i]$ is a polynomial ring extension with $\sigma(t_i)-t_i\in\AR_{i-1}$ ($t_i$ is called an \emph{$S$-monomial}).
	\end{itemize}
	Depending on the occurrences of the $APS$-monomials such an extension is also called an \emph{$A$-/$P$-/$S$-/$AP$-/$AS$/-/$PS$-extension}.
\end{definition}

\begin{example}\label{Exp:Qxs1}
	Take the rational difference ring $\dfield{\QQ(x)}{\sigma}$ with $\sigma(x)=x+1$ and $\sigma|_{\QQ}=\id$. Then the difference ring $\dfield{\QQ(x)[s]}{\sigma}$ with $\sigma(s)=s+\frac1{x+1}$ defined in Example~\ref{Exp:Q(x)[h]} is an $S$-extension of $\dfield{\QQ(x)}{\sigma}$ and $s$ is an $S$-monomial over $\dfield{\QQ(x)}{\sigma}$.
\end{example}

For an $APS$-extension $\dfield{\EE}{\sigma}$ of a difference ring $\dfield{\AR}{\sigma}$ we will also write $\EE=\AR\lr{t_1}\dots\lr{t_e}$. Depending on whether $t_i$ with $1\leq i\leq e$ is an $A$-monomial, a $P$-monomial or an $S$-monomial, $\GG\lr{t_i}$ with $\GG=\AR\lr{t_1}\dots\lr{t_{i-1}}$ stands for the algebraic ring extension $\GG[t_i]$ with $t_i^{\nu}$ for some $\nu>1$, for the ring of Laurent polynomials $\GG[t_1,t_1^{-1}]$ or for the polynomial ring $\GG[t_i]$, respectively.

For such a tower of $APS$-extensions we can use the following lemma iteratively to construct an evaluation function; for the corresponding proofs see~\cite[Lemma~5.4]{DR3}.

\begin{lemma}\label{Lemma:EvConstruction}
Let $\dfield{\AR}{\sigma}$ be a difference ring with a subfield $\KK\subseteq\AR$ where $\sigma|_{\KK}=\id$ that is equipped with an evaluation function $\ev:\AR\times\NN\to\KK$ and $o$-function $L$.
Let $\dfield{\AR\lr{t}}{\sigma}$ be an $APS$-extension of $\dfield{\AR}{\sigma}$ with $\sigma(t)=\alpha\,t+\beta$ ($\alpha=1$,  $\beta\in\AR$ or $\alpha\in\AR^*$, $\beta=0$). Further, suppose that $\ev(\sigma^{-1}(\alpha),n)\neq0$ for all $n\geq \mu$ for some $\mu\in\NN$. 
Then the following holds.
\begin{enumerate}
	\item Take $l\in\NN$ with $l\geq\max(L(\sigma^{-1}(\alpha),L(\sigma^{-1}(\beta)),\mu)$; if $t^{\lambda}=1$ for some $\lambda>1$ ($t$ is an $A$-monomial), set $l=1$.
	Then $\ev':\AR\lr{t}\times\NN\to\KK$ given by
	\begin{align}\label{Equ:DefineEvExt}
	\ev'(\sum_{i=a}^bf_i\,t^i,n)&=\sum_{i=a}^b\ev(f_i,n)\ev'(t,n)^i\quad\forall n\in\NN\\[-0.3cm]
	\intertext{with $f_i\in\AR$ for $a\leq i\leq b$ and} 
	\label{Equ:SumProdHom}
	\ev'(t,n)&=\begin{cases}
	\displaystyle \prod_{i=l}^n\ev(\sigma^{-1}(\alpha),i)&\text{ if\footnotemark $\sigma(t)=\alpha\,t$}\\
	\displaystyle\sum_{i=l}^n\ev(\sigma^{-1}(\beta),i)&\text{ if  $\sigma(t)=t+\beta$}
	\end{cases}
	\end{align}
	\footnotetext{If $t$ is an $A$-monomial, we have $\ev(t,n)=\alpha^n$.}
	is an evaluation function for $\dfield{\AR\lr{t}}{\sigma}$.
	\item There is an $o$-function $\fct{L'}{\AR\lr{t}}{\NN}$ for $\ev'$ defined by
	\begin{equation}\label{Equ:LDef}
	L'(f)=\begin{cases}
	L(f)&\text{if }f\in\AR,\\
	\max(l-1,L(f_a),\dots,L(f_b))&\text{if
		$f=\sum_{i=a}^b f_it^i\notin\AR\lr{t}\setminus\AR$}.
	\end{cases}
\end{equation}
\end{enumerate}
\end{lemma}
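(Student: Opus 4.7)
The plan is to verify the four axioms \eqref{Equ:evC}--\eqref{Equ:evShift} of Definition~\ref{Def:EvZ} for $\ev'$ in turn, and then to check that $L'$ absorbs exactly the thresholds that arise in those verifications. Property~\eqref{Equ:evC} is immediate, since any $c\in\KK$ lies in $\AR$ and the instance of~\eqref{Equ:DefineEvExt} with $a=b=0$ reduces $\ev'(c,n)$ to $\ev(c,n)$. For additivity~\eqref{Equ:evPlus}, writing $f=\sum f_i t^i$ and $g=\sum g_j t^j$ as finite sums (with exponents taken modulo the order $\nu$ in the $A$-case), the right-hand side of~\eqref{Equ:DefineEvExt} is $\KK$-linear in the coefficients for $n$ large enough that $\ev$ respects sums on each pair $(f_i,g_j)$. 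Multiplicativity~\eqref{Equ:evTimes} follows from the Cauchy product; in the $A$-case the defining relation $t^{\nu}=1$ is compatible with the formula because $\ev'(t,n)^{\nu}=\alpha^{n\nu}=1$, since $\alpha$ is a primitive $\nu$th root of unity, so that the collapse of exponents modulo $\nu$ inside $\AR\lr{t}$ is faithfully mirrored in $\KK$.

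The heart of the proof is the shift axiom~\eqref{Equ:evShift}. Once it is established for the one-step shift on the generator $t$, it propagates to all of $\AR\lr{t}$ by~\eqref{Equ:evPlus} and~\eqref{Equ:evTimes}, to the inverse shift by substituting $\sigma^{-1}(f)$ into the one-step identity, and then to arbitrary $k\in\ZZ$ by iteration. In the $S$-case ($\sigma(t)=t+\beta$), for $n+1\geq l$ the telescoping identity~\eqref{Equ:SumProdHom} gives
\[
\ev'(t,n+1)=\ev(\sigma^{-1}(\beta),n+1)+\ev'(t,n)=\ev(\beta,n)+\ev'(t,n)=\ev'(\sigma(t),n),
\]
where the middle equality is~\eqref{Equ:evShift} applied inside $\ev$ to $\sigma^{-1}(\beta)$. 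In the $P$/$A$-case ($\sigma(t)=\alpha t$) the analogous telescoping yields
\[
\ev'(t,n+1)=\ev(\sigma^{-1}(\alpha),n+1)\,\ev'(t,n)=\ev(\alpha,n)\,\ev'(t,n)=\ev'(\sigma(t),n).
\]
The hypothesis $\ev(\sigma^{-1}(\alpha),n)\neq 0$ for $n\geq\mu$ ensures the telescoping product never vanishes, which in particular makes $\ev'(t,n)$ a unit of $\KK$ and lets the Laurent-polynomial case $\AR[t,t^{-1}]$ be handled by the same formula extended to negative exponents.

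For the $o$-function status of $L'$, the claim for $f\in\AR$ is inherited from $L$. For $f=\sum_{i=a}^b f_i t^i\notin\AR$, the bound $L'(f)=\max(l-1,L(f_a),\dots,L(f_b))$ is engineered to dominate every threshold that arises in the computations above: the summand $l-1$ guarantees that $n+1\geq l$ whenever $n\geq L'(f)$, so that the shift identity picks up exactly one additional factor or summand, while the $L(f_i)$ terms validate the coefficient-level invocations of~\eqref{Equ:evPlus}--\eqref{Equ:evShift} for $\ev$. The main subtlety I anticipate is the negative-shift case $\sigma^{k}(f)$ with $k<0$: retracting the telescoping product/sum $|k|$ steps requires that every intermediate index from $n+k$ up to $n$ satisfy the one-step compatibility window, which amounts to $n+k\geq l-1$, i.e., $n\geq L'(f)+|k|$, and this is precisely the $\max(0,-k)$ slack built into the shift clause of Definition~\ref{Def:EvZ}(2).
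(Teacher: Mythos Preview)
The paper does not supply its own proof of this lemma; immediately before the statement it defers with ``for the corresponding proofs see~\cite[Lemma~5.4]{DR3}.'' Your argument is a correct direct verification of the axioms of Definition~\ref{Def:EvZ} and is exactly the kind of computation that underlies the cited result: the ring-homomorphism axioms \eqref{Equ:evC}--\eqref{Equ:evTimes} follow coefficientwise from~\eqref{Equ:DefineEvExt} (with your observation that $\ev'(t,n)^{\nu}=\alpha^{n\nu}=1$ in the $A$-case being the point that makes $\ev'$ well defined on the quotient ring $\AR[t]/(t^{\nu}-1)$), the shift axiom~\eqref{Equ:evShift} is established on the generator $t$ via telescoping and then propagated by the homomorphism properties, and your bookkeeping for $L'$---in particular the role of the $l-1$ summand and the $\max(0,-k)$ slack for negative shifts---matches the intended mechanism.
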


\begin{example}
In Example~\ref{Exp:Q(x)[h]} we followed precisely the construction (1) of the above lemma to construct for $\dfield{\QQ(x)[s]}{\sigma}$ an evaluation function. For this $\ev$ we can now apply also the construction (2) to enhance the $o$ function $L:\QQ(x)\to\NN$ (given in Example~\ref{Exp:RationalDField1} with $v=0$) to $L:\QQ(x)[s]\to\NN$ by setting $L(f)=\max(0,L(f_0),\dots,L(f_b))$ for $f=\sum_{i=0}^bf_is^i$. 
\end{example}

More precisely, the main idea is to apply the above lemma iteratively to extend the evaluation function $\ev$ from $\AR$ to $\EE$. However, if one wants to treat, e.g., the next $P$-monomial $t$ with $\frac{\sigma(t)}{t}=\alpha\in\EE^*$, one has to check if there is a $\mu\in\NN$ such that $\ev(\sigma^{-1}(\alpha),n)\neq0$ holds for all $n\geq \mu$. So far, we are not aware of a general algorithm that can accomplish this task. In order to overcome these difficulties, we will restrict $APS$-extensions further to a subclass which covers all summation problems that we have encountered in concrete problems so far.

Let $G$ be a multiplicative subgroup of $\AR^{*}$. 
Following~\cite{DR1,DR3}
we call 
\begin{align*}
\{G\}_{\AR}^{\EE} &:= \{ h\,t_1^{m_1}\dots t_e^{m_e}|\, h\in G \text{ and } m_{i}\in\ZZ\text{ where $m_i=0$ if $t_i$ is an $S$-monomial}\}\\
\intertext{the \emph{simple product group} over $G$ and}
[G]_{\AR}^{\EE} &:= \{ h\,t_1^{m_1}\dots t_e^{m_e}|\, h\in G \text{ and } m_{i}\in\ZZ\text{ where $m_i=0$ if $t_i$ is an $AS$-monomial}\}
\end{align*}
the \emph{basic product group} over $G$ for the nested $APS$--extension $\dfield{\EE}{\sigma}$ of $\dfield{\AR}{\sigma}$. Note that we have the chain of subgroups $[G]_{\AR}^{\EE}\leq\{G\}_{\AR}^{\EE}\leq\EE^*$.
In the following we will restrict ourselves to the following subclass of $APS$-extensions.
\begin{definition}\label{defn:simpleNestedAExtension}
	Let $\dfield{\AR}{\sigma}$ be a difference ring and let $G$ be a subgroup of $\AR^{*}$. Let $\dfield{\EE}{\sigma}$ be an $APS$-extension of $\dfield{\AR}{\sigma}$ with $\EE=\AR\langle t_{1}\rangle\dots\langle t_{e}\rangle$.
	\begin{enumerate}
	\item The extension is called \emph{$G$-basic} if for any $P$-monomial $t_i$ we have $\tfrac{\sigma(t_{i})}{t_{i}}\in [G]_{\AR}^{\AR\langle t_{1}\rangle\dots\langle t_{i-1}\rangle}$ and for any $A$-mon.\ $t_i$ we have  $\alpha_i=\frac{\sigma(t_{i})}{t_{i}}\in G$ with $\sigma(\alpha_i)=\alpha_i$.
	\item It is called \emph{$G$-simple} if for any $AP$-monomial $t_i$ we have $\tfrac{\sigma(t_{i})}{t_{i}}\in \{G\}_{\AR}^{\AR\langle t_{1}\rangle\dots\langle t_{i-1}\rangle}$.
	\end{enumerate}
	If $G=\AR^*$, it is also called \emph{basic} (resp.\ \emph{simple}) instead of $\AR^*$-basic (resp.\ $\AR^*$-simple).
\end{definition}

\noindent 
By definition any simple $APS$-extension is also a basic $APS$-extension. 
We will start with the more general setting of simple extensions, but will restrict later mostly to basic extensions.
For both cases we can supplement Lemma~\ref{Lemma:EvConstruction} as follows.

\begin{lemma}\label{Lemma:ZConstruction}
Let $\dfield{\AR}{\sigma}$ be a difference ring with a subfield $\KK\subseteq\AR$ where $\sigma|_{\KK}=\id$ that is equipped with an evaluation function $\ev$ and $o$-function $L$. Let $G$ be a subgroup of $\AR^*$ and let $\dfield{\AR\lr{t}}{\sigma}$ be an $APS$-extension of $\dfield{\AR}{\sigma}$ with $\sigma(t)=\alpha\,t+\beta$ with $\alpha\in G$ and $\beta\in\AR$. 
Suppose that there is in addition a $z$-function for $\ev$ and $G$. 
Take $l\in\NN$ with
\begin{equation}\label{Equ:DefineLowerBound}
l\geq\begin{cases}
\max(L(\sigma^{-1}(\alpha)),Z(\sigma^{-1}(\alpha)))&\text{if $t$ is an $AP$-monomial}\\
L(\sigma^{-1}(\beta))&\text{if $t$ is an $S$-monomial}.
\end{cases}
\end{equation}
Then we obtain an evaluation function $\ev'$ and $o$-function $L'$ for $\dfield{\AR\lr{t}}{\sigma}$ as given in Lemma~\ref{Lemma:EvConstruction}.
In addition, we can construct a $z$-function $Z'$ for $\{G\}_{\AR}^{\AR\lr{t}}$. If $\ev$, $L$ and $Z$ are computable, $\ev'$, $L'$ and $Z'$ are computable. 
\end{lemma}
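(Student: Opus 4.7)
The plan is to verify the hypotheses of Lemma~\ref{Lemma:EvConstruction} using the given $z$-function $Z$ (which hands us $\ev'$ and $L'$ almost for free) and then construct $Z'$ on the simple product group $\{G\}_{\AR}^{\AR\lr{t}}$ element-wise by combining multiplicativity of $\ev'$ with the explicit formula~\eqref{Equ:SumProdHom} for $\ev'(t,n)$.

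First I would confirm the missing hypothesis of Lemma~\ref{Lemma:EvConstruction}, namely that $\ev(\sigma^{-1}(\alpha),n)\neq 0$ for $n\geq\mu$ for some $\mu\in\NN$. Since $\alpha\in G$ and $G$ is a $\sigma$-stable subgroup of $\AR^*$, we have $\sigma^{-1}(\alpha)\in G$, and the $z$-function $Z$ yields $\ev(\sigma^{-1}(\alpha),n)\neq 0$ for $n\geq Z(\sigma^{-1}(\alpha))$. Choosing $\mu:=Z(\sigma^{-1}(\alpha))$, and recalling that $\beta=0$ in the $AP$ case, the lower-bound condition~\eqref{Equ:DefineLowerBound} specializes precisely to the bound $l\geq\max(L(\sigma^{-1}(\alpha)),L(\sigma^{-1}(\beta)),\mu)$ required in Lemma~\ref{Lemma:EvConstruction}. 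In the $S$-monomial case $\alpha=1$, so no nonvanishing needs to be checked and the condition collapses to $l\geq L(\sigma^{-1}(\beta))$. Invoking Lemma~\ref{Lemma:EvConstruction} thus produces $\ev'$ and $L'$ via the formulas~\eqref{Equ:DefineEvExt}--\eqref{Equ:LDef}.

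Next I would construct $Z'$ on $\{G\}_{\AR}^{\AR\lr{t}}$. Every element of this group has the shape $f=h\,t^m$ with $h\in G$ and $m\in\ZZ$ (where $m=0$ if $t$ is an $S$-monomial). By the multiplicativity property~\eqref{Equ:evTimes} of $\ev'$ (valid for $n$ above a threshold obtainable from $L'$), we get $\ev'(f,n)=\ev(h,n)\,\ev'(t,n)^m$ for all sufficiently large $n$. The factor $\ev(h,n)$ is nonzero for $n\geq Z(h)$ because $h\in G$. The factor $\ev'(t,n)$ is handled by cases using~\eqref{Equ:SumProdHom}: if $t$ is an $A$-monomial then $\ev'(t,n)=\alpha^n$ is a root of unity, hence nonzero; if $t$ is a $P$-monomial then $\ev'(t,n)=\prod_{i=l}^n\ev(\sigma^{-1}(\alpha),i)$ is a product of nonzero terms since $i\geq l\geq Z(\sigma^{-1}(\alpha))$; if $t$ is an $S$-monomial, the exponent $m$ is forced to be $0$ and the factor collapses to $1$. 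Consequently, setting
$$Z'(h\,t^m):=\max\bigl(Z(h),\,L'(h\,t^m),\,l\bigr)$$
ensures $\ev'(f,n)\neq 0$ for all $n\geq Z'(f)$. Computability of $\ev'$, $L'$, and $Z'$ follows immediately from that of $\ev$, $L$, $Z$, since the new objects are assembled from the old by the explicit recursive formulas above and by taking maxima.

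The main obstacle I anticipate is purely a bookkeeping one: pinning down the precise threshold at which multiplicativity of $\ev'$ holds (which is encoded in $L'$) and aligning it with the individual nonvanishing bounds for $h$ and $t$. A subtle point in the $P$-monomial case with $m<0$ is that $\ev'(t,n)^{-1}$ must itself be meaningful, which already forces $\ev'(t,n)\neq 0$; this is guaranteed precisely by the choice $l\geq Z(\sigma^{-1}(\alpha))$ imposed on the lower index of the product in~\eqref{Equ:SumProdHom}.
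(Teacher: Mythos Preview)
Your proposal is correct and follows essentially the same approach as the paper: verify the hypotheses of Lemma~\ref{Lemma:EvConstruction} via the $z$-function, then build $Z'$ on $\{G\}_{\AR}^{\AR\lr{t}}$ element-wise from $Z$ and the nonvanishing of $\ev'(t,n)$. The paper's version is slightly slicker in that it observes $\ev'(t,n)\neq 0$ for \emph{all} $n\in\NN$ (the empty product for $n<l$ equals $1$) and that the identity $\ev'(h\,t^m,n)=\ev(h,n)\,\ev'(t,n)^m$ holds for all $n$ by the very definition~\eqref{Equ:DefineEvExt}, so one may simply take $Z'(h\,t^m):=Z(h)$ (and $Z':=Z$ in the $S$-case, since then $\{G\}_{\AR}^{\AR\lr{t}}=G$); your inclusion of $L'(h\,t^m)$ and $l$ in the max is harmless but unnecessary.
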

\begin{proof}
For $r$ as defined in~\eqref{Equ:DefineLowerBound} the assumptions in Lemma~\ref{Lemma:EvConstruction} are fulfilled and the $\ev'$ with $L'$ defined in the lemma yield an evaluation function together with an $o$-function.
If $t$ is an $S$-monomial, $\{G\}_{\AR}^{\AR\lr{t}}=G$ and we can set $Z':=Z$. Otherwise, if $t$ is an $AP$-monomial, we have $\ev'(t,n)\neq0$ for all $n\in\NN$ by construction.
Thus for 
$f=g\,t^{m}\in\{G\}_{\AR}^{\AR\lr{t}}$ with $g\in G$ and $m\in\ZZ$ we have  $\ev(f,n)\neq0$ for all $n\geq Z(g)$. Thus we can define $Z'(f)=Z(g)$. If $L$ and $Z$ are computable, also $L'$ and $Z'$ are computable. In addition, if we can compute $\ev$, then clearly also $\ev'$ is computable.
\end{proof}

In general, suppose that we are given a difference ring $\dfield{\AR}{\sigma}$ with a subfield $\KK\subseteq\AR$ where $\sigma|_{\KK}=\id$. Assume in addition that we are given a (computable) evaluation function $\ev:\AR\times\NN\to\KK$ together with a (computable) $o$-function $L:\AR\to\NN$ and a (computable) $z$-function $Z:\AR^*\to\NN$. Furthermore, suppose that we are given a simple $APS$-extension $\dfield{\EE}{\sigma}$ of $\dfield{\AR}{\sigma}$ with $\EE=\AR\lr{t_1}\dots\lr{t_e}$. Then we can apply iteratively Lemmas~\ref{Lemma:EvConstruction} and~\ref{Lemma:ZConstruction} and get a (computable) evaluation function $\ev:\EE\times\NN\to\KK$ together with a (computable) $o$-function $L:\EE\to\NN$ and a (computable) $z$-function for $\{\AR^*\}_{\AR}^{\AR\langle t_{1}\rangle\dots\langle t_{e}\rangle}$; note that $\{\{\AR^*\}_{\AR}^{\HH}\}_{\HH}^{\HH\lr{t_i}}=\{\AR^*\}_{\AR}^{\HH\lr{t_i}}$ for all $\HH=\AR\lr{t_{1}}\dots\lr{t_{i-1}}$ with $1\leq i<e$.

It is natural to define the evaluation function iteratively using Lemma~\ref{Lemma:EvConstruction} but it is inconvenient to compute the $o$-function in this iterative fashion. Here the following lemma provides a shortcut for expressions which are given in reduced representation; for the corresponding representation in $\Sum\Prod(\GG)$ see Definition~\ref{Def:SumProdReduced}. 

\begin{definition}\label{Def:ShapeOfElements}
	Let $\dfield{\EE}{\sigma}$ be an $APS$-extension of $\dfield{\AR}{\sigma}$ with $\EE=\AR\lr{t_1}\dots\lr{t_e}$. Then we say that $f\in\EE$ is in \emph{reduced representation} if it is written in the form
	\begin{equation}\label{Equ:ShapeOfElements}
	f=\sum_{(m_1,\dots,m_e)\in S}f_{(m_1,\dots,m_e)}t_1^{m_1}\dots t_e^{m_e}
	\end{equation}
	with $f_{(m_1,\dots,m_e)}\in\AR$ and $S\subseteq M_1\times\dots\times M_e$ finite where 
	$$M_i=\begin{cases}
	\{0,\dots,\nu_i-1\}&\text{if $t_i$ is an $A$-extension of order $\nu_i$},\\
	\ZZ&\text{if $t_i$ is a $P$-monomial},\\
	\NN&\text{if $t_i$ is an $S$-monomial}.\\
	\end{cases}$$
\end{definition}

\begin{lemma}\label{Lemma:ShortCutforL}
Take a difference ring $\dfield{\AR}{\sigma}$ with a subfield $\KK\subseteq\AR$ where $\sigma|_{\KK}=\id$ that is equipped with an evaluation function $\ev:\AR\times\NN\to\KK$ together with an $o$-function $L$ and $z$-function $Z$. Let $\dfield{\EE}{\sigma}$ with $\EE=\AR\lr{t_1}\dots\lr{t_e}$ be a simple $APS$-extension of $\dfield{\AR}{\sigma}$ and let $\ev$ be an evaluation function and $Z$ be a $z$-function (using iteratively Lemmas~\ref{Lemma:EvConstruction} and~\ref{Lemma:ZConstruction}). Here 
the $l_i\in\NN$ for $1\leq i\leq e$ are the lower bounds of the corresponding sums/products in~\eqref{Equ:SumProdHom} with $t=t_i$.
Then for any $f\in\EE$ with~\eqref{Equ:ShapeOfElements} where $f_{(m_1,\dots,m_e)}\in\EE$ and $S\subseteq\ZZ^e$ we have
$$L(f)=\max(\max_{s\in S}L(f_s),\max_{j\in \sup(f)}l_j-1)$$
where $\sup(f)=\{1\leq j\leq e\mid t_j\text{  depends on } f\}$.
\end{lemma}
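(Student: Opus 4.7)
The plan is to proceed by induction on the number $e$ of $APS$-monomials in the tower $\AR=\AR_0\leq\AR_1\leq\dots\leq\AR_e=\EE$ and unfold the iterative definition of $L$ from Lemma~\ref{Lemma:EvConstruction}(2) one layer at a time. The base case $e=0$ is trivial: $f\in\AR$ is in reduced representation with $S=\{()\}$ and $\sup(f)=\emptyset$, so the right-hand side of the claimed identity collapses to $L(f)$ (using the standard convention $\max\emptyset=0$ for the second term), which matches the given $o$-function on $\AR$.

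For the inductive step, set $\HH=\AR\lr{t_1}\dots\lr{t_{e-1}}$ and let $L_{\HH}$ and $L_{\EE}$ denote the $o$-functions produced by the iterative construction on $\HH$ and $\EE=\HH\lr{t_e}$, respectively. Given $f\in\EE$ in reduced form \eqref{Equ:ShapeOfElements}, I would group terms according to the exponent of $t_e$ and write $f=\sum_{m\in S_e}g_m\,t_e^{m}$, where $S_e\subseteq M_e$ is the image of $S$ under projection to the last coordinate and each
$$g_m=\sum_{(m_1,\dots,m_{e-1})\in S_m}f_{(m_1,\dots,m_{e-1},m)}\,t_1^{m_1}\cdots t_{e-1}^{m_{e-1}}\in\HH$$
is already in reduced form, with $S_m=\{(m_1,\dots,m_{e-1})\mid (m_1,\dots,m_{e-1},m)\in S\}$. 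By the induction hypothesis applied in $\HH$, $L_{\HH}(g_m)=\max\bigl(\max_{s'\in S_m}L(f_{s'}),\,\max_{j\in\sup(g_m)}l_j-1\bigr)$ for every $m\in S_e$.

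Now I would split into two cases according to whether $t_e$ occurs in $f$. If $e\notin\sup(f)$, then $S_e\subseteq\{0\}$, so $f=g_0\in\HH$ and the first branch of \eqref{Equ:LDef} gives $L_{\EE}(f)=L_{\HH}(f)$; combining this with the induction hypothesis yields precisely the claimed formula since $\sup(f)\subseteq\{1,\dots,e-1\}$. If $e\in\sup(f)$, the second branch of \eqref{Equ:LDef} gives $L_{\EE}(f)=\max\bigl(l_e-1,\max_{m\in S_e}L_{\HH}(g_m)\bigr)$. Substituting the inductive expressions for the $L_{\HH}(g_m)$ and using that $\sup(f)=\{e\}\cup\bigcup_{m\in S_e}\sup(g_m)$ and $S=\bigsqcup_{m\in S_e}(S_m\times\{m\})$, the double maximum flattens to $\max(\max_{s\in S}L(f_s),\max_{j\in\sup(f)}l_j-1)$, as required.

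The only real pitfall I foresee is bookkeeping: one must verify that the support decomposes correctly as $\sup(f)=\{e\}\cup\bigcup_{m\in S_e}\sup(g_m)$ whenever $t_e$ genuinely appears, and that the reduced representation of each $g_m$ indeed inherits from that of $f$ so that the inductive hypothesis applies verbatim. This is a matter of parsing the definitions rather than a genuine difficulty; once the recursion in \eqref{Equ:LDef} is unwound one variable at a time, the shortcut formula drops out directly.
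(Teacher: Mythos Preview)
Your proposal is correct and follows essentially the same route as the paper: induction on the number of $APS$-monomials, peeling off the outermost variable $t_e$, applying the recursive definition \eqref{Equ:LDef}, and then substituting the inductive hypothesis for the coefficients $g_m\in\HH$. The paper's proof is organized identically (with the cosmetic difference that it indexes the inductive step as passing from $e$ to $e+1$ rather than from $e-1$ to $e$), and your treatment of the bookkeeping for $\sup(f)$ is in fact slightly more explicit than the paper's.
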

\begin{proof}
We show the statement by induction on $e$. If $e=0$, the statement holds trivially. Now suppose that the statement holds for $e\geq0$ extensions and let $f\in\EE\lr{t_{e+1}}$ with $\EE=\GG\lr{t_1}\dots\lr{t_e}$ where $f=\sum_{i=a}^{b}f_it_{e+1}^i$ with $f_i\in\EE$. If $f\in\EE$, then the statement holds by the induction assumption. Otherwise write $f_i=\sum_{(s_1,\dots,s_e)\in S_i}f^{(i)}_{s_1,\dots,s_e}t_1^{s_1}\dots t_e^{s_e}$ with $S_i\subseteq\ZZ^e$ and $f^{(i)}_{s}$ for $s\in S_i$ in reduced representation. In particular, we get $f=\sum_{(s_1,\dots,s_{e+1})\in S}h_{(s_1\dots,s_{e+1})} t_1^{s_1}\dots t_{e+1}^{s_{e+1}}$ with $h_{(s_1\dots,s_{e+1})}=f^{(s_{e+1})}_{(s_1,\dots,s_e)}$ and $S=\cup_{a\leq i\leq b}\{(s_1,\dots,s_e,i)\mid (s_1,\dots,s_e)\in S_i\}$.
Then by the induction assumption we get $L(f_i)=\max(\max_{s\in S_i}L(f^{(i)}_s),\max_{j\in \sup(f_i)}l_j-1)$. 
Thus by the definition in~\eqref{Equ:LDef} we get
\begin{align*}
L(f)&=\max(\max_{a\leq i\leq b}L(f_i),l_{e+1}-1)\\
&=\max(\max_{s\in S_a}L(f^{(a)}_s),\max_{s\in S_{a+1}}L(f^{(a+1)}_s),\dots,\max_{s\in S_b}L(f^{(b)}_s),\max_{j\in \sup(f)}l_j-1)\\
&=\max(\max_{s\in S} L(h_s),\max_{j\in \sup(f)}l_j-1).\tag*{$\square$}
\end{align*}
\end{proof}

Utilizing the above constructions with $\AR:=\GG$, we are now ready to show in Lemmas~\ref{Lemma:SwitchAPSToA} and~\ref{Lemma:SwitchAToAPS} given below that 
the representations in $\Sum\Prod(\GG)$ and in a basic $APS$-extension are closely related. Their proofs are rather technical (but not very deep). Still we will present all the details, since this construction will be crucial for further refinements. This will finally lead to a strategy to solve Problem~\textsf{SigmaReduce}.

\begin{lemma}\label{Lemma:SwitchAPSToA}
	Take the difference field $\dfield{\GG}{\sigma}$ with $\GG\in\{\rGG,\bGG,\mGG\}$ with the evaluation function $\ev$, $o$-function $L$ and $z$-function $Z$ from  Example~\ref{Exp:RationalDField1}. 
	Let $\dfield{\EE}{\sigma}$ with $\EE=\GG\lr{t_1}\dots\lr{t_e}$ be a basic $APS$-extension of $\dfield{\GG}{\sigma}$ and let $\ev$, $L$ and $Z$ be extended versions for $\dfield{\EE}{\sigma}$ (using Lemmas~\ref{Lemma:EvConstruction} and~\ref{Lemma:ZConstruction}).	
	Then for each $1\leq i\leq e$ one can construct $T_i\in\Sigma\Pi(\GG)$ in sum-product reduced representation with $\ev(t_i,n)=T_i(n)$ for all $n\geq L(t_i)$. In particular, if $f\in\EE\setminus\{0\}$, then there is $0\neq F\in\Sum\Prod(\{T_1,\dots,T_e\},\GG)$ with $F(n)=\ev(f,n)$ for all $n\geq L(f)$.\\
	If $\KK$ is computable and polynomials can be factored over $\KK$, all components can be computed.
\end{lemma}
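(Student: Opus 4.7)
The plan is to proceed by induction on the number $e$ of $APS$-generators, constructing each $T_i$ based on the type of $t_i$ and simultaneously establishing a translation procedure for arbitrary elements of $\EE_i := \GG\lr{t_1}\dots\lr{t_i}$. The base case $e=0$ is immediate: any $f\in\GG\setminus\{0\}$ already lies in $\Sum\Prod(\GG)$ with $\ev(f,n)=f(n)$ for $n\geq L(f)$ by Example~\ref{Exp:RationalDField1}. For the induction step, the key auxiliary observation is this: once $T_1,\ldots,T_{i-1}$ are constructed, any $g\in\EE_{i-1}$ written in reduced representation~\eqref{Equ:ShapeOfElements} is translated into an expression $G\in\Sum\Prod(\{T_1,\ldots,T_{i-1}\},\GG)$ by replacing each monomial $c_s\,t_1^{s_1}\cdots t_{i-1}^{s_{i-1}}$ by $c_s\odot T_1\opower s_1\odot\cdots\odot T_{i-1}\opower s_{i-1}$ (dropping factors with trivial exponent) and combining the results with $\oplus$. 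The matching identity $\ev(g,n)=G(n)$ for $n$ sufficiently large follows from the evaluation rules~\eqref{Equ:evPlus},~\eqref{Equ:evTimes} together with the induction hypothesis $\ev(t_j,n)=T_j(n)$ for $n\geq L(t_j)$.

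I then define $T_i$ according to the type of $t_i$. If $t_i$ is an $A$-monomial with $\alpha=\sigma(t_i)/t_i$, the basic condition forces $\alpha\in\cRR$, so I set $T_i:=\RPow(\alpha)$; then $T_i(n)=\alpha^n=\ev(t_i,n)$ for all $n\geq 0$ by Lemma~\ref{Lemma:EvConstruction}. If $t_i$ is a $P$-monomial, the basic condition gives $\alpha=h\cdot t_1^{m_1}\cdots t_{i-1}^{m_{i-1}}$ with $h\in\GG^*$ (and $m_j=0$ whenever $t_j$ is an $S$-monomial); a direct calculation, using that $\sigma^{-1}$ sends any $AP$-monomial $t_j$ to a Laurent monomial in $t_1,\dots,t_j$ over $\GG$, shows that $\sigma^{-1}(\alpha)$ has the same monomial shape $h'\cdot t_1^{m'_1}\cdots t_{i-1}^{m'_{i-1}}$. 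Applying the translation procedure to $\sigma^{-1}(\alpha)$ yields $H=h'\odot T_1\opower m'_1\odot\cdots\odot T_{i-1}\opower m'_{i-1}$, and I set $T_i:=\Prod(l_i,H)$ with $l_i\geq\max(L(\sigma^{-1}(\alpha)),Z(\sigma^{-1}(\alpha)))$ as required by Lemma~\ref{Lemma:ZConstruction}. Finally, if $t_i$ is an $S$-monomial with $\beta=\sigma(t_i)-t_i$, write $\sigma^{-1}(\beta)$ in reduced form~\eqref{Equ:ShapeOfElements}, translate it to $H\in\Sum\Prod(\{T_1,\ldots,T_{i-1}\},\GG)$ via the auxiliary procedure, and set $T_i:=\Sum(l_i,H)$ with $l_i\geq L(\sigma^{-1}(\beta))$. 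In each case the identity $T_i(n)=\ev(t_i,n)$ for $n\geq L(t_i)$ follows from formula~\eqref{Equ:SumProdHom}. The last statement of the lemma is then obtained for an arbitrary nonzero $f\in\EE$ by writing $f$ in reduced form~\eqref{Equ:ShapeOfElements} and applying the translation procedure with the full set $\{T_1,\ldots,T_e\}$.

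The main obstacle is the careful bookkeeping of the lower bounds $l_i$ so that three conditions hold simultaneously: (a)~all subexpressions in $H$ are well defined from $l_i$ on; (b)~the output $T_i$ satisfies the sum-product reduced form of Definition~\ref{Def:SumProdReduced}, which in particular requires $l_i$ to dominate the $L$-values of inner coefficients and the lower bounds of any sums or products appearing inside $H$; and (c)~the equality $T_i(n)=\ev(t_i,n)$ holds precisely from $L(t_i)$ onward. All three are achieved by taking $l_i$ as a maximum of finitely many bounds already supplied by Lemmas~\ref{Lemma:EvConstruction}, \ref{Lemma:ZConstruction}, and~\ref{Lemma:ShortCutforL}. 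Concerning computability, if $\KK$ is computable and polynomials factor over $\KK$, then the ground-field $L$ and $Z$ are computable by Example~\ref{Exp:RationalDField1}, the reduced-form rewriting in $\EE$ is effective, and hence the entire inductive construction is algorithmic.
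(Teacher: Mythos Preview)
Your proof is correct and follows essentially the same strategy as the paper: induction on $e$, a translation procedure based on the reduced representation~\eqref{Equ:ShapeOfElements}, and a case distinction on the type of each $APS$-monomial using the lower bound $l_i$ already fixed by the given evaluation function via Lemmas~\ref{Lemma:EvConstruction} and~\ref{Lemma:ZConstruction}. One minor slip: in the $P$-monomial case the basic condition forces $m_j=0$ whenever $t_j$ is an $AS$-monomial (not merely an $S$-monomial), though this only strengthens your hypothesis and does not affect the argument.
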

\begin{proof}
	First suppose that we can construct such $T_i\in\Sigma\Pi(\GG)$ with $T_i(n)=\ev(t_i,n)$ for all $n\geq L(t_i)$ and $1\leq i\leq e$. Now take $f\in\EE$ in reduced representation, i.e., it is given in the form~\eqref{Equ:ShapeOfElements} with $S\subseteq\ZZ^e$. Now replace each $f_{(m_1,\dots,m_e)}\cdot t_1^{m_1}\dots t_e^{m_e}$ by $f_{(m_1,\dots,m_e)}\odot (T_1\opower{m_1})\odot\dots\odot(T_e\opower{m_e})$ and replace $+$ by $\oplus$ in $f$ yielding $F\in\Sum\Prod(\GG)$ in reduced representation. Then for each $n\geq L(f)$ we get
	\begin{equation}\label{Equ:fTransform}
	\begin{split}
	\ev(f,n)=&\ev(\sum_{(m_1,\dots,m_e)\in S}f_{(m_1,\dots,m_e)}t_1^{m_1}\dots t_e^{m_e},n)\\
	=&\sum_{(m_1,\dots,m_e)\in S}\ev(f_{(m_1,\dots,m_e)},n)\ev(t_1,n)^{m_1}\dots \ev(t_e,n)^{m_e}\\
	=&\sum_{(m_1,\dots,m_e)\in S}\ev(f_{(m_1,\dots,m_e)},n)\ev(T_1,n)^{m_1}\dots \ev(T_e,n)^{m_e}\\
	=&\ev(F,n).
	\end{split}
	\end{equation}
	Note: if $f\neq0$, we can find $(m_1,\dots,m_e)\in S$ with $f_{(m_1,\dots,m_e)}\in\GG^*$ which implies that $F\neq0$. This shows the second part of statement (1).\\
	Finally we show the existence of the $T_i$ by induction on $e$. For $e=0$ nothing has to be shown. Suppose that the statement holds for $e\geq0$ extensions and consider the $APS$-monomial $t_{e+1}$ over $\EE=\GG\lr{t_1}\dots\lr{t_e}$. By assumption we can take $T_i\in\Sum\Prod(\GG)$ in sum-product reduced representation with $T_i(n)=\ev(t_i,n)$ for all $n\geq L(t_i)$ and $1\leq i\leq e$. Now consider the $APS$-monomial $t_{e+1}$ with $\sigma(t_{e+1})=\alpha\,t_{e+1}+\beta$. By assumption we have~\eqref{Equ:SumProdHom} ($\ev'$ replaced by $\ev$) with $l\in\NN$ where~\eqref{Equ:DefineLowerBound} and $L$ is defined by~\eqref{Equ:LDef} ($L'$ replaced by $L$). In particular, we have $l\geq\max(L(\sigma^{-1}(\alpha)),L(\sigma^{-1}(\beta)),\mu)$ with $\mu\geq Z(\sigma^{-1}(\alpha))$, and
	$L(t_{e+1})=l-1$.\\
	\textbf{$A$-monomial case:} If $t_{e+1}$ is an $A$-monomial, we have $\sigma(t_{e+1})=\alpha\,t_{e+1}$ with $\alpha\in\cRR$. In particular, we have $\ev(t_{e+1},n)=\alpha^n$. Thus we set $T_{e+1}=\RPow(\alpha)$  and get $\ev(t_{e+1},n)=T_{e+1}(n)$ for all $n\geq L(t_{e+1})=0$. \\
	\textbf{$S$-monomial case:} If $t_{e+1}$ is an $S$-monomial, we have $\sigma(t_{e+1})=t_{e+1}+\beta$ with $\beta\in\EE$.  
	Now take $f=\sigma^{-1}(\beta)$ in reduced representation. Then by construction $l\geq\max(L(\sigma^{-1}(\beta)),0)=L(f)$. Further, we can take $F\in\Sum\Prod(\GG)$ as constructed above with~\eqref{Equ:fTransform} for all $n\geq l\geq L(f)$.
	Thus for $T_{e+1}=\Sum(l,F)$ we get $\ev(t_{e+1},n)=T_{e+1}(n)$ for all $n\geq l-1=L(t_{e+1})$.\\
	\textbf{$P$-monomial case:} If $t_{e+1}$ is a $P$-monomial, we have $\sigma(t_{e+1})=\alpha\,t_{e+1}$ with $\alpha\in[\GG]_{\GG}^{\EE}$, i.e.,
	$\alpha=g\,t_1^{n_1}\dots t_e^{n_e}$ with $g\in\GG^*$ and $n_1,\dots,n_e\in\ZZ$ with $n_i=0$ if $t_i$ is an $AS$-monomial. Thus $f=\sigma^{-1}(\alpha)=h\,t_1^{m_1}\dots t_e^{m_e}$ with $h:=f_{(m_1,\dots,m_e)}\in\GG^*$ and $m_1,\dots,m_e\in\ZZ$ with $m_i=0$ if $t_i$ is an $AS$-monomial. By construction, $l\geq\max(L(f),Z(f))=\max(L(f),Z(h))$.
	As above we get $F=h\odot(T_1\opower{m_1})\odot \dots\odot(T_e\opower{m_e})\in\Sum\Prod(\GG)$ such that $\ev(f,n)=F(n)$ holds for all $n\geq L(f)$ and $\ev(f,n)=F(n)\neq0$ for all $n\geq l$.
	Thus for $T_{e+1}=\Prod(l,F)\in\Prod(\GG)$ we get $\ev(t_{e+1},n)=T_{e+1}(n)$ for all $n\geq l-1=L(t_{e+1})$.\\
	We note that in the last two cases $T_{e+1}$ is in sum-product reduced representation: the arising sums and products in $F$	
	are in sum-product reduced representation by induction, $F$ given by~\eqref{Equ:fTransform} is in reduced representation and we have $l\geq\max_{k\in S}L(f_k)$ where $l$ is larger than all the lower bounds of the sums and product in $F$ due to Lemma~\ref{Lemma:ShortCutforL}. This completes the induction step.\\
	If $\KK$ is computable and one can factorize polynomials over $\KK$, the functions $Z$ and $L$ are computable and thus all the ingredients can be computed.
\end{proof}

\begin{definition}
	Given $\dfield{\GG}{\sigma}$ where $\GG\in\{\rGG,\bGG,\mGG\}$ with $\ev$, $L$ and $Z$ from Example~\ref{Exp:RationalDField1},
	let $\dfield{\EE}{\sigma}$ be a basic $APS$-extension with an evaluation function $\ev$ together with $L$ and $Z$ given by iterative application of Lemmas~\ref{Lemma:EvConstruction} and~\ref{Lemma:ZConstruction}. Let $a\in\EE$ be in reduced representation. Then following the construction of Lemma~\ref{Lemma:SwitchAPSToA} one obtains $A\in\Sum\Prod(\GG)$ in sum-product reduced representation with $A(n)=\ev(a,n)$ for all $n\geq L(a)$. The derived $A$ is also called the \emph{canonical induced sum-product expression} of $a$ w.r.t.\ $\dfield{\AR}{\sigma}$ and $\ev$ and we write $\expr(a):=A$.
\end{definition}

\begin{example}[Cont. of Ex.~\ref{Exp:Q(x)[h]}]
	For $a=x+\frac{x+1}{x}s^4\in\QQ(x)[s]$ with our evaluation function $\ev$ we obtain the canonical induced sum-product expression $$\expr(a)=A=x\oplus\left(\tfrac{x+1}{x}\odot(\Sum(1,\tfrac1x)\opower4)\right)\in\Sum(\QQ(x))$$ 
	with $A(n)=\ev(a,n)$ for all $n\geq1$.
\end{example}

\begin{lemma}\label{Lemma:SwitchAToAPS}
Take the difference field $\dfield{\GG}{\sigma}$ with $\GG\in\{\rGG,\bGG,\mGG\}$ with the evaluation function $\ev$, $o$-function $L$ and $z$-function $Z$ from  Example~\ref{Exp:RationalDField1}. 
Let $\dfield{\HH}{\sigma}$ be a basic $APS$-extension of $\dfield{\GG}{\sigma}$ and let $\ev$, $L$ and $Z$ be extended versions for $\dfield{\HH}{\sigma}$ (using Lemmas~\ref{Lemma:EvConstruction} and~\ref{Lemma:ZConstruction}). Let $A\in\Sum\Prod(\GG)$. Then there is an $APS$-extension $\dfield{\EE}{\sigma}$ of $\dfield{\HH}{\sigma}$ which forms a basic $APS$-extension of $\dfield{\GG}{\sigma}$ together with the extended functions $\ev$, $L$ and $Z$ (using  Lemmas~\ref{Lemma:EvConstruction} and~\ref{Lemma:ZConstruction}) in which one can model $A$ by $a\in\EE$: i.e., $\ev(a,n)=A(n)$ holds for all $n\geq\delta$ for some $\delta\in\NN$.\\
If $\KK$ is computable and one can factorize polynomials over $\KK$, all the ingredients can be computed.
\end{lemma}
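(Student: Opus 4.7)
The plan is to prove this by structural induction on $A$, following the inductive definition of $\Sum\Prod(\GG)$ given in Section~\ref{Sec:TermAlgebra}. At each step the induction hypothesis will produce a basic $APS$-extension of the current ring (which remains basic over $\dfield{\GG}{\sigma}$) by iterated application of Lemmas~\ref{Lemma:EvConstruction} and~\ref{Lemma:ZConstruction}, together with an element modeling the subexpression at hand; I will reuse $\dfield{\HH}{\sigma}$ for the ``current'' ring, updating it as further extensions are stacked. The base cases are direct: for $A\in\GG$ take $a=A$ and $\EE=\HH$; for $A=\RPow(r)$ with $r\in\cRR$ adjoin an $A$-monomial $t$ with $\sigma(t)=r\,t$ and $t^{\ord(r)}=1$, so that $\ev(t,n)=r^n=\ev(\RPow(r),n)$.

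For the ring operations $\oplus$, $\odot$, $\opower$, I will apply the hypothesis twice in succession---the second invocation starting from the extension produced by the first---to obtain models $f',g'$ in a common basic extension, and take $a$ to be $f'+g'$, $f'\,g'$, or $(f')^{k}$ as appropriate. Negative exponents within $\Prod^*(\GG)$ pose no issue because the models of $\Prod^*$-expressions are units (products of $A$- and $P$-monomials with $\GG^*$-coefficients). For $A=\Sum(l,f)$, the hypothesis supplies a model $f'\in\HH'$ of $f$; I will then adjoin an $S$-monomial $s$ with $\sigma(s)=s+\sigma(f')$ and a lower bound $l'\geq L(f')$ as permitted by~\eqref{Equ:DefineLowerBound}, so that $\ev(s,n)=\sum_{i=l'}^n\ev(f',i)$ matches $\ev(A,n)$ up to a finite, eventually constant correction $c\in\KK$, and $a=s+c$ models $A$.

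The subtle case is $A=\Prod(l,f\odot p)$ with $f\in\GG^*$ and $p\in\Prod^*(\GG)$. Inductively I will build a basic extension $\dfield{\HH'}{\sigma}$ containing a model $p'$ of $p$. Because the grammar of $\Prod^*(\GG)$ is composed solely of $\GG^*$, $\RPow$ and nested $\Prod$, no $S$-monomial ever enters such a model, so $p'\in[\GG^*]_{\GG}^{\HH'}$ and hence $g:=f\cdot p'\in[\GG^*]_{\GG}^{\HH'}$. Adjoining a $P$-monomial $t$ with $\sigma(t)/t=\sigma(g)$ then preserves the basic condition, and choosing $l'$ large enough to satisfy~\eqref{Equ:DefineLowerBound} (possible since $L$ and $Z$ are available on $g$) yields $\ev(t,n)=\prod_{i=l'}^n\ev(g,i)$, which agrees with $\ev(A,n)$ up to a nonzero multiplicative constant $c\in\KK$; setting $a=c\cdot t$ finishes the case.

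The main technical obstacle is exactly this preservation of the \emph{basic} condition throughout the induction: every adjoined $P$-monomial must have its multiplier in $[\GG^*]_{\GG}^{\cdot}$ and not merely in the strictly larger simple product group $\{\GG^*\}_{\GG}^{\cdot}$. The sum-free design of $\Prod^*(\GG)$ is precisely what makes this achievable. Computability of all data is then automatic: given that $\KK$ is computable and polynomials over $\KK$ factor, the computability of $L$, $Z$ and $\ev$ propagates through each inductive step via Lemmas~\ref{Lemma:EvConstruction} and~\ref{Lemma:ZConstruction}.
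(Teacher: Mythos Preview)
Your structural induction is essentially the paper's argument reorganised: the paper inducts on the \emph{depth} of the occurring sums and products, first collects all outermost objects $T_1,\dots,T_r\in\Sigma\Pi(\GG)$ of $A$, models each in turn (with the same $\RPow$/$\Sum$/$\Prod$ case split and the same additive resp.\ multiplicative constant correction $c$), and only then substitutes back into $A$. Your treatment of $\oplus,\odot,\opower$ by two successive applications of the hypothesis amounts to the same substitution step.

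There is, however, one concrete gap in your $\Prod$-case. From ``the grammar of $\Prod^*(\GG)$ is composed solely of $\GG^*$, $\RPow$ and nested $\Prod$'' you correctly infer that no $S$-monomial enters the model $p'$, but you then conclude $p'\in[\GG^*]_{\GG}^{\HH'}$. That does not follow: by Definition~\ref{defn:simpleNestedAExtension} the basic product group $[\GG^*]_{\GG}^{\HH'}$ excludes $A$-monomials as well as $S$-monomials, and $\Prod^*(\GG)$ \emph{does} allow $\RPow(r)$ as (a factor of) the inner $p$ in $\Prod(l,f\odot p)$. In that situation $p'$ carries an $A$-monomial factor, so $g=f\cdot p'$ lies only in the simple product group $\{\GG^*\}_{\GG}^{\HH'}$, and the $P$-monomial you adjoin with multiplier $\sigma(g)$ yields a simple but not a basic extension of $\dfield{\GG}{\sigma}$. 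The paper's proof is equally brief at this point, so the issue is not unique to your write-up; but since you state the membership explicitly, the step as written is incorrect and would need either a weakening to ``simple'' or an additional argument splitting off the $A$-monomial contribution.
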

\begin{proof}
We prove the lemma by induction on the depth of the arising sums ($\Sum$) and products ($\Prod$ and $\RPow$) in $A\in\Sum\Prod(\GG)$. If no sums and products arise in $A$, then $A\in\GG$ and the statement clearly holds. Now suppose that the statement holds for all expressions with sums/products whose depth is smaller than or equal to $d\geq0$. Take all products and sums $T_1,\dots,T_r\in\Sigma\Pi(\GG)$ that arise in $A$. We proceed stepwise for $i=1,\dots,r$ with the starting field $\HH$. Suppose that we have constructed an $APS$-extension $\dfield{\AR}{\sigma}$ of $\dfield{\HH}{\sigma}$ which forms a basic $APS$-extension of $\dfield{\GG}{\sigma}$. Suppose in addition that we are given an extended evaluation function $\ev$, $o$-function $L$ and $z$-function $Z$ function (using Lemmas~\ref{Lemma:EvConstruction} and~\ref{Lemma:ZConstruction}) in which we find $b_1,\dots,b_{i-1}$ with $\ev(b_j,n)=T_j(n)$ for all $n\geq L(b_j)$ and all $1\leq j<i$. Now we consider $T_i$.\\
\textbf{Bookkeeping\footnote{This step is not necessary for the proof, but avoids unneccesary copies of $APS$-monomials. When we refine this construction later, this step will be highly relevant.}:} If $T_i$ has been treated earlier (i.e., by handling sums and products of depth $\leq d$), we get $b_i\in\AR$ with $\ev(b_i,n)=T_i(n)$ for all $n\geq L(a_i)$.\\
\textbf{$\RPow$-case:} If $T_i=\RPow(\alpha)$, we take the $A$-extension $\dfield{\AR\lr{t}}{\sigma}$ of $\dfield{\AR}{\sigma}$ with $\sigma(t)=\alpha t$ of order $\ord(\alpha)$ and extend $\ev$ to $\AR\lr{t}$ by $\ev(t,n)=\alpha^n$.
Further, we extend $L:\AR\lr{t}\to\NN$ with~\eqref{Equ:LDef} and get $L(t)=0$. Thus we can take $b_{i}=t$ and get $\ev(b_i,n)=T_{i}(n)$ for all $n\geq L(b_i)=0$.

Otherwise, we can write $T_i=\Sum(\lambda,H)$ or $T_i=\Prod(\lambda,H)$ where the sums and products in $H\in\Sum\Prod(\GG)$ have depth at most $d$. By assumption we can construct an $APS$-extension $\dfield{\AR'}{\sigma}$ of $\dfield{\AR}{\sigma}$ which is a basic $APS$-extension of $\dfield{\GG}{\sigma}$ and we can extend $\ev$, $L$ and $Z$ (using Lemmas~\ref{Lemma:EvConstruction} and~\ref{Lemma:ZConstruction})
and get $h\in\AR'$ with $\ev(h,n)=H(n)$ for all $n\geq \delta$ for some $\delta\in\NN$ with $\delta\geq L(h)$.\\ 
\textbf{Sum-case:} If $T_i=\Sum(\lambda,H)$, we take the $S$-extension $\dfield{\AR'\lr{t}}{\sigma}$ of $\dfield{\AR'}{\sigma}$ with $\sigma(t)=t+\sigma(h)$. In addition, we extend $\ev$ to $\AR'\lr{t}$ by $\ev(t,n)=\sum_{k=l}^n\ev(h,k)$ with
$l=\max(\delta,\lambda)$; note that~\eqref{Equ:DefineLowerBound} is satisfied. Further, we extend $L:\AR'\lr{t}\to\NN$ with~\eqref{Equ:LDef} and get $L(t)=l-1$. 
Finally, we set $c=\sum_{k=\lambda}^{l-1}H(k)\in\KK$. Then we get $b_{i}=t+c$ with 
$\ev(b_i,n)=\sum_{k=\lambda}^nH(k)=\ev(\Sum(\lambda,H),n)$ for all $n\geq L(b_i)=l-1$.\\
\textbf{Product-case:} If $T_i=\Prod(\lambda,H)$, we take the $P$-extension $\dfield{\AR'\lr{t}}{\sigma}$ of $\dfield{\AR'}{\sigma}$ with $\sigma(t)=\sigma(h) t$. 
In addition we extend $\ev$ to $\AR'\lr{t}$ by $\ev(t,n)=\prod_{k=l}^n\ev(h,k)$ with $l=\max(L(h),Z(h),\lambda)$; note that~\eqref{Equ:DefineLowerBound} is satisfied.
Further, we extend $L:\AR\lr{t}\to\NN$ with~\eqref{Equ:LDef} and get $L(t)=l-1$. Thus we can take $b_{i}=c\,t$ with $c=\prod_{k=\lambda}^{l-1}H(k)\in\KK^*$ (the product evaluation is nonzero by assumption of $\Pi(\GG)$) and get $\ev(b_i,n)=\prod_{k=\lambda}^n H(k)=\ev(\Prod(\lambda,H),n)$ for all $n\geq L(b_i)=l-1$.\\
In all three cases we can follow Lemma~\ref{Lemma:ZConstruction} and extend the $z$-function accordingly.
After carrying out the steps $i=1,\dots,r$ we get a basic $APS$-extension $\dfield{\EE}{\sigma}$ of $\dfield{\HH}{\sigma}$ together with an evaluation function $\ev$, $o$-function $L$ and $z$-function $Z$ (using Lemmas~\ref{Lemma:EvConstruction} and~\ref{Lemma:ZConstruction}) and $b_1,\dots,b_r$ such that $T_i(n)=\ev(b_i,n)$ holds for all $1\leq i\leq r$ and $n\geq L(b_i)$. Finally, let $f_{1},\dots,f_{s}\in\GG$ be all arising elements in $A$ (that do not arise within $\Prod$ and $\Sum$). Define $\delta=\max(L(f_1),\dots,L(f_s))\in\NN$. Then for each $n\in\NN$ with $n\geq \delta$ we have that $\ev(A,n)$ can be carried out without catching poles in the second case of~\eqref{Equ:EvalMixed}. Now replace each $T_i$ with $1\leq i\leq r$ 
in $A$ by $b_i$ and replace $\oplus,\odot,\opower$ by $+$, $\cdot$, and $\hat{}\,$, respectively. This yields $a\in\EE$ which we can write in reduced representation. Note that in $a$ some $f_{k}$ remain and others are combined by putting elements over a common denominator which lies in $\KK[x,x_1,\dots,x_v]$ (or in $\KK[x_1,\dots,x_v]$). Further, some factors of the denominators might cancel. Thus $L(a)\leq \delta$. In particular, when carrying out the evaluations $\ev(a,n)$ and $\ev(A,n)$ for $n\geq\delta$ no poles arise  and thus by the homomorphic property of the evaluation it follows that $\ev(a,n)=\ev(A,n)$ for all $n\geq \delta$. This completes the induction step.\\
If $\KK$ is computable and one can factorize polynomials over $\KK$, then the $z$- and $o$-function for $\GG$ are computable. Thus all the components of the iterative construction (using Lemmas~\ref{Lemma:EvConstruction} and~\ref{Lemma:ZConstruction}) are computable.
\end{proof}

As consequence, we can establishes with Lemma~\ref{Lemma:SwitchAPSToA} above and the following corollary a 1-1 correspondence between basic $APS$-extensions and shift-stable sets whose expressions are in sum-reduced representation.

\begin{corollary}\label{Cor:SwitchTiToAPS}
Let $W=\{T_1,\dots,T_e\}\subset\Sigma\Pi(\GG)$ be in sum-product reduced representation and shift-stable. More precisely, for each $1\leq i\leq e$ the arising sums/products in $T_i$ are contained in $\{T_1,\dots,T_{i-1}\}$. Then there is a basic $APS$-extension $\dfield{\EE}{\sigma}$ of $\dfield{\GG}{\sigma}$ with $\EE=\GG\lr{t_1}\dots\lr{t_e}$ equipped with an evaluation function $\ev$ (using Lemmas~\ref{Lemma:EvConstruction} and~\ref{Lemma:ZConstruction}) such that $T_{i}=\expr(t_i)\in\Sigma\Pi(\GG)$ holds for $1\leq i\leq e$. 
\end{corollary}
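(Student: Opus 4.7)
My plan is to proceed by induction on $e$, closely mirroring the inductive construction in the proof of Lemma~\ref{Lemma:SwitchAToAPS}. The base case $e=0$ is immediate with $\EE := \GG$. For the inductive step, I would set $W' := \{T_1, \dots, T_{e-1}\}$; all hypotheses on $W$ (shift-stability, sum-product reduced elements, and the nesting ordering on sums and products) restrict to $W'$, so by the induction hypothesis I obtain a basic $APS$-extension $\dfield{\EE'}{\sigma}$ of $\dfield{\GG}{\sigma}$ with $\EE' = \GG\lr{t_1}\dots\lr{t_{e-1}}$, equipped with an evaluation function $\ev$ and associated $o$- and $z$-functions (from iterated application of Lemmas~\ref{Lemma:EvConstruction} and~\ref{Lemma:ZConstruction}), such that $\expr(t_j) = T_j$ for all $1 \leq j < e$.

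What remains is to adjoin $t_e$ and extend $\ev$ so that $\expr(t_e) = T_e$. I would split into three cases according to whether $T_e$ is an $\RPow$, a $\Sum$, or a $\Prod$, following the three cases of the proof of Lemma~\ref{Lemma:SwitchAToAPS}. When $T_e$ carries an inner argument $H$, the nesting hypothesis ensures that every sum/product occurring in $H$ already lies in $W'$, so I can translate $H$ into an element $h \in \EE'$ by the syntactic substitution $T_j \mapsto t_j$, $\oplus \mapsto +$, $\odot \mapsto \cdot$, $\opower \mapsto \hat{\;}$; since $H$ is in reduced form, $h$ ends up in its canonical reduced representation over $\EE'$, and unwinding the definition of $\expr$ one step recursively gives $\expr(h) = H$. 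If $T_e = \RPow(\alpha)$, I adjoin $t_e$ as an $A$-monomial of order $\ord(\alpha)$ with $\sigma(t_e) = \alpha\,t_e$ and $\ev(t_e, n) = \alpha^n$. If $T_e = \Sum(l, H)$, I adjoin $t_e$ as an $S$-monomial with $\sigma(t_e) = t_e + \sigma(h)$ and extend $\ev$ by $\ev(t_e, n) = \sum_{k=l}^n \ev(h, k)$ via Lemma~\ref{Lemma:EvConstruction}. If $T_e = \Prod(l, f \odot p)$ with $f \in \GG^*$ and $p \in \Prod^*(\GG)$, then by construction $h$ equals $f$ times a Laurent monomial in the $AP$-monomials among $t_1, \dots, t_{e-1}$, so $h \in [\GG]_{\GG}^{\EE'}$; adjoining $t_e$ as a $P$-monomial with $\sigma(t_e) = \sigma(h)\,t_e$ therefore keeps the extension basic, and Lemma~\ref{Lemma:ZConstruction} lets me set $\ev(t_e, n) = \prod_{k=l}^n \ev(h, k)$ while simultaneously extending the $z$-function.

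The main obstacle I expect to have to address is the alignment of lower bounds: the integer $l$ literally appearing in $T_e$ must be exactly the lower bound used in the definition of $\ev(t_e, n)$, otherwise $\expr(t_e)$ would only agree with $T_e$ up to an additive (respectively multiplicative) constant. Verifying that the sum-product reduced form of $T_e$ supplies the inequalities demanded by Lemmas~\ref{Lemma:EvConstruction} and~\ref{Lemma:ZConstruction} — namely $l \geq L(h)$ in the $\Sum$ case and $l \geq \max(L(h), Z(h))$ in the $\Prod$ case — is the key bookkeeping step, and Lemma~\ref{Lemma:ShortCutforL} is the indispensable tool for it, since it converts the bounds built into the sum-product reduced condition (phrased in terms of the lower bounds of sums and products occurring syntactically inside $H$) into the required bound on $L(h)$. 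Once that is in place, one further step of the $\expr$ recursion directly yields $\expr(t_e) = T_e$ and closes the induction.
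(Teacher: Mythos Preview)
Your proposal is correct and follows essentially the same approach as the paper: the paper likewise proceeds by iterating the construction of Lemma~\ref{Lemma:SwitchAToAPS} over $T_1,\dots,T_e$ and identifies the sum-product reduced form (via Lemma~\ref{Lemma:ShortCutforL}) as exactly what forces the lower bound $l$ to coincide with the given $\lambda$, so that the additive (resp.\ multiplicative) correction constant $c$ vanishes (resp.\ equals $1$) and $\expr(t_i)=T_i$. One small wording slip: in the $\Prod$ case you say $h$ is a Laurent monomial in the $AP$-monomials and then conclude $h\in[\GG]_{\GG}^{\EE'}$, but membership in $[\GG]_{\GG}^{\EE'}$ by definition requires the $A$-monomial exponents to be zero; you should say $P$-monomials there to match the basic (not merely simple) extension claim.
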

\begin{proof}
We can treat the elements $T_1,\dots,T_e$ following the construction of  Lemma~\ref{Lemma:SwitchAToAPS} iteratively. 
Let us consider the $i$th step with $T_i=\Sum(\lambda,H)$ or $T_i=\Prod(\lambda,H)$.
Since the $T_i$ are in sum-product reduced form it follows from Lemma~\ref{Lemma:ShortCutforL} that within the sum-case (resp.\  product-case) we can guarantee $l=\lambda$, i.e, $c=0$ (resp.\ $c=1$). Thus $\ev(t_i,n)=T_i(n)$ for all $n\geq l$ and hence $\expr(t_i)=T_i(n)$.
\end{proof}

\noindent In addition, we can provide the following simple proof of Lemma~\ref{Lemma:shiftStableImpliesShiftClosed}.

\begin{proof}[of Lemma~\ref{Lemma:shiftStableImpliesShiftClosed}]\label{Proof:shiftStableImpliesShiftClosed}
	 Let $W=\{T_1,\dots,T_e\}\subseteq W$ be shift-stable and the $T_i$ in sum-product reduced form. Take $F\in\Sum\Prod(W,\GG)$ and $\lambda\in\ZZ$. W.l.o.g.\ we may assume that the $T_i$ are given as in Corollary~\ref{Cor:SwitchTiToAPS}. Thus we can take an $APS$-extension 
	$\dfield{\GG\lr{t_1}\dots\lr{t_e}}{\sigma}$ of $\dfield{\GG}{\sigma}$ equipped with an evaluation function  $\ev$ and $o$-function $L$ such that $\expr(t_i)=T_i$ for $1\leq i\leq e$. Then
	we can take $f\in\EE$ with~\eqref{Equ:ShapeOfElements} and get $F(n+\lambda)=\ev(F,n+\lambda)=\ev(\sigma^{\lambda}(t_i),n)=G(n)$ for all $n\in L(f)+\max(0,-\lambda)$ with $G(n):=\expr(\sigma^{\lambda}(t_i))\in\Sum\Prod(W,\GG)$. Thus $W$ is shift-closed. If $\KK$ is computable and one can factor polynomials over $\KK$, then one can compute the $o$-function $L$ and all the above components are computable. 
\end{proof}

\noindent In short, the naive construction of $APS$-extensions will not gain any substantial simplification (except a transformation to a sum-product reduced representation). In the next section we will refine this construction further to solve Problem~\textsf{SigmaReduce}. 

\subsection{The embedding into the ring of sequences and \rpisiE-extensions}

Let $\dfield{\AR}{\sigma}$ be a difference ring with a subfield $\KK\subseteq\AR$ where $\sigma|_{\KK}=\id$ that is equipped with an evaluation function $\ev:\AR\times\NN\to\KK$. Then $\ev$ naturally produces sequences in the commutative ring $\KK^{\NN}$ with the identity element $\vect{1}=(1,1,1,\dots)$ with component-wise addition and multiplication. More precisely, we can define the function
$\fct{\tau}{\AR}{\KK^{\NN}}$ with
\begin{equation}\label{Equ:EvToTau}
\tau(f)=(\ev(f,n))_{n\geq0}=(\ev(f,0),\ev(f,1),\ev(f,2),\dots).
\end{equation}
Due to \eqref{Equ:evPlus} and~\eqref{Equ:evTimes} the map
$\tau$ can be turned to a ring homomorphism by defining the equivalence relation 
$(f_n)_{n\geq0}\equiv (g_n)_{n\geq0}$ with $f_j=g_j$ for all $j\geq\lambda$ for some $\lambda\in\NN$; compare~\cite{AequalB}. It is easily seen that the set of equivalence classes $[f]$ with $f\in\KK^{\NN}$ forms 
with $[f]+[g]:=[f+g]$ and $[f][g]:=[f g]$
again a commutative ring with the identity element $[\vect{1}]$ which we will denote by $\seqK$. In the following we will simply write $f$ instead of $[f]$. In this setting, $\fct{\tau}{\AR}{\seqK}$ forms a ring homomorphism. In addition the shift operator $S:\seqK\to\seqK$ defined by
$$\shiftS((a_0,a_1,a_2,\dots))=(a_1,a_2,a_3,\dots)$$
turns to a ring automorphism. In the following we call \emph{$(\seqK,\shiftS)$ also the (difference) ring of sequences over $\KK$}. Finally, we observe that property~\eqref{Equ:evShift} implies that
\begin{equation}\label{Equ:DRHomo}
\tau(\sigma(f))=S(\tau(f))
\end{equation}
holds for all $f\in\AR$, i.e., $\tau$ turns to a difference ring homomorphism. Finally, property~\eqref{Equ:evC} implies
\begin{equation}\label{Equ:KHomo}
\tau(c)=\vect{c}=(c,c,c,\dots)
\end{equation}
for all $c\in\KK$. In the following we call a ring homomorphism $\tau:\AR\to\seqK$ with~\eqref{Equ:DRHomo} and~\eqref{Equ:KHomo} also a \emph{$\KK$-homomorphism}.

We can now link these notions to our construction from above with $\GG\in\{\rGG,\bGG,\mGG\}$. Let $\dfield{\EE}{\sigma}$ with $\EE=\GG\lr{t_1}\dots\lr{t_e}$ be a basic $APS$-extension of $\dfield{\GG}{\sigma}$ and take an evaluation function $\ev:\EE\times\NN\to\KK$ with $o$-function $L$. Such a construction can be accomplished by iterative application of Lemmas~\ref{Lemma:EvConstruction} and~\ref{Lemma:ZConstruction}. Then the function $\tau:\EE\to\AR$ with~\eqref{Equ:EvToTau}
for $f\in\EE$ yields a $\KK$-homomorphism. 

If we find two different elements $a,b\in\EE$ with $\tau(a)=\tau(b)$, then we find two different sum-product reduced representations $\expr(a)$ and $\expr(b)$ in terms of the sums and products given in $W=\{\expr(t_1),\dots,\expr(t_e)\}\subseteq\Sigma\Pi(\GG)$ which evaluates to the same sequence. In short, $W$ is not canonical reduced (and thus not $\sigma$-reduced) over $\GG$. This shows that a solution of Problem~\textsf{SigmaReduce} can be only accomplished if $\tau$ is injective.

In this context, the set of constants plays a decisive role.

\begin{definition}
	For a difference ring $\dfield{\AR}{\sigma}$ the \emph{set of constants} is defined by
	$\const{\AR}{\sigma}=\{c\in\AR\mid \sigma(c)=c\}.$
\end{definition}

\noindent In general, $\const{\AR}{\sigma}$ is a subring of $\AR$. If $\AR$ is a field, then $\const{\AR}{\sigma}$ itself is a field which one also calls the \emph{constant field of $\dfield{\AR}{\sigma}$}.  

With this extra notion we can state now the following remarkable property that is based on results from~\cite{DR3}; compare also~\cite{Singer:97}.

\begin{theorem}\label{Thm:InjectiveProp}
Let $\dfield{\EE}{\sigma}$ be a basic $APS$-extension of a difference field $\dfield{\FF}{\sigma}$ with $\KK=\const{\FF}{\sigma}$ and let $\tau$ be a $\KK$-homomorphism. 
Then $\tau$ is injective iff $\const{\EE}{\sigma}=\KK$.
\end{theorem}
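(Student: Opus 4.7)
The plan is to prove the two implications separately, exploiting the interplay between the difference ring structure of $\EE$ and that of $\seqK$.

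For $(\Rightarrow)$, assume $\tau$ is injective. I would first record the elementary fact that the $S$-fixed equivalence classes in $\seqK$ are exactly the constant ones $\vect{k}$ with $k\in\KK$: if $(a_n)_{n\geq0}$ satisfies $(a_{n+1})_{n\geq0}\equiv(a_n)_{n\geq0}$, then $a_{n+1}=a_n$ for all $n\geq\lambda$ for some $\lambda$, so the class equals that of $(k,k,\dots)$ with $k=a_\lambda\in\KK$. Given $c\in\const{\EE}{\sigma}$, applying the shift compatibility~\eqref{Equ:DRHomo} yields $S(\tau(c))=\tau(\sigma(c))=\tau(c)$ in $\seqK$, hence $\tau(c)=\vect{k}$ for some $k\in\KK$. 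But then~\eqref{Equ:KHomo} gives $\tau(k)=\vect{k}=\tau(c)$, and injectivity forces $c=k\in\KK$, as required.

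For $(\Leftarrow)$, assume $\const{\EE}{\sigma}=\KK$ and set $I:=\ker\tau\subseteq\EE$. Clearly $I$ is an ideal of $\EE$, and since $\sigma$ is a ring automorphism and $\tau\sigma=S\tau$ by~\eqref{Equ:DRHomo}, we obtain $\sigma(I)\subseteq I$ and $\sigma^{-1}(I)\subseteq I$, so $\sigma(I)=I$; that is, $I$ is a reflexive difference ideal of $\EE$. The crux of the argument is then to invoke the structural theorem from~\cite{DR3} (in the spirit of~\cite{Singer:97}) which states that a basic $APS$-extension $\dfield{\EE}{\sigma}$ of a difference field $\dfield{\FF}{\sigma}$ in which the constant field has not been enlarged is $\sigma$-simple, meaning its only reflexive difference ideals are $\{0\}$ and $\EE$. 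Since $\tau(1)=\vect{1}\neq0$ in $\seqK$, we have $I\neq\EE$, forcing $I=\{0\}$; hence $\tau$ is injective.

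The main obstacle is squarely the $\sigma$-simplicity input from~\cite{DR3}: that constants-stability in a basic $APS$-tower over a difference field propagates to the absence of nontrivial $\sigma$-invariant ideals. This is typically established monomial-by-monomial along the tower, handling $A$-, $P$- and $S$-monomials separately and exploiting that each $\sigma(t)/t$ (for $AP$-monomials) or $\sigma(t)-t$ (for $S$-monomials) was chosen basic and constant-preserving, with the field $\FF$ providing the base case. Once this structural result is granted, both directions collapse to the short arguments above, and the $(\Rightarrow)$ direction requires nothing beyond the identification of $S$-invariant classes in $\seqK$ with the constants.
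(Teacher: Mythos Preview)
Your proof is correct and follows essentially the same approach as the paper: for $(\Leftarrow)$ you invoke exactly the $\sigma$-simplicity result \cite[Thm~3.3]{DR3} and then argue via the kernel being a difference ideal (which is the content of \cite[Lemma~5.8]{DR3} that the paper cites), and for $(\Rightarrow)$ you spell out directly the identification of $S$-invariant classes in $\seqK$ with constant sequences (which is what \cite[Lemma~5.13]{DR3} provides). The only difference is cosmetic: the paper outsources both short arguments to citations from \cite{DR3}, whereas you make them explicit.
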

\begin{proof}
	Suppose that $\const{\EE}{\sigma}=\KK$. By Theorem~\cite[Thm~3.3]{DR3} it follows that $\dfield{\EE}{\sigma}$ is simple (i.e., the only difference ideals in $\EE$ are $\{0\}$ or $\EE$) and thus by \cite[Lemma~5.8]{DR3} we can conclude that $\tau$ is injective. Conversely, if $\tau$ is injective, it follows by~\cite[Lemma~5.13]{DR3} that $\const{\EE}{\sigma}=\KK$.
\end{proof}

\noindent This result gives rise to the following refined definition of $PS$-field/$APS$-extensions.

\begin{definition}\label{Def:PiSiFieldExt}
	Let $(\FF,\sigma)$ be a $PS$-field extension of $\dfield{\HH}{\sigma}$ as defined in Definition~\ref{Def:PSFieldExt}. Then this is called a \emph{\pisiE-field extension} if $\const{\FF}{\sigma}=\const{\HH}{\sigma}$. The arising $P$-field and $S$-field monomials are also called \emph{\piE-field and \sigmaE-field monomials}, respectively. In particular, we call it a \emph{\piE-/\sigmaE-/\piE\sigmaE-field extension} if it is built by the corresponding monomials.
	$\dfield{\FF}{\sigma}$ is called a \emph{\pisiE-field over $\KK$} if $\dfield{\FF}{\sigma}$ is a \pisiE-field extension of $\dfield{\KK}{\sigma}$ and $\const{\KK}{\sigma}=\KK$.
\end{definition}

\begin{example}\label{Exp:RationalDField3}
	As mentioned in Examples~\ref{Exp:RationalDField1} and~\ref{Exp:RationalDField2}, the difference fields $\dfield{\rGG}{\sigma}$, $\dfield{\bGG}{\sigma}$ and $\dfield{\mGG}{\sigma}$ are $PS$-field extensions of $\dfield{\KK}{\sigma}$. Using the technologies given in Theorems~\ref{Thm:PiTest} and~\ref{Thm:SigmaTest} below one can show that they are all \pisiE-field extensions. Since $\const{\KK}{\sigma}=\KK$, they are also \pisiE-fields over $\KK$; compare also~\cite{OS:18}.
\end{example}

\begin{definition}\label{Def:RPiSiExt}
	Let $(\EE,\sigma)$ be an $APS$-extension of $\dfield{\AR}{\sigma}$ as defined in Definition~\ref{Def:APSExt}. Then this is called an \emph{\rpisiE-extension} if $\const{\EE}{\sigma}=\const{\AR}{\sigma}$. The arising $A$-monomials are also called \emph{\rE-monomials}, the $P$-monomials are called \emph{\piE-monomials} and the $S$-monomials are called \emph{\sigmaE-monomials}. In particular, we call it an \emph{\rE-/\piE-/\sigmaE-/\rE\piE-/\rE\sigmaE/-\piE\sigmaE-extension} if it is built by the corresponding monomials.
\end{definition}

\begin{example}[Cont. of Ex.~\ref{Exp:Qxs1}]\label{Exp:Qxs2}
Consider the difference ring $\dfield{\QQ(x)[s]}{\sigma}$ from Example~\ref{Exp:Qxs1}. Since $\ev:\QQ(x)[s]\to\QQ$ defined by~\eqref{Equ:QxsEv} and~\eqref{Equ:QxsEvs} (with $\ev'=\ev$) is an evaluation function of $\dfield{\QQ(x)[s]}{\sigma}$ we can construct the $\QQ$-homomorphism $\tau:\QQ(x)[s]\to\seqQ$ defined by~\eqref{Equ:EvToTau}. 
Since $s$ is a \sigmaE-monomial over $\QQ(x)$, we get $\const{\QQ(x)[s]}{\sigma}=\QQ$. Thus we can apply Theorem~\ref{Thm:InjectiveProp} and it follows that
$$\tau(\QQ(x))[\tau(s)]=\tau(\QQ(x))[(\ev(s,n))_{n\geq0}]=\tau(\QQ(x))[(S(n))_{n\geq0}]$$
with 
$S=\expr(s)=\Sum(1,\tfrac1x)\in\Sigma(\QQ(x))$
is isomorphic to the polynomial ring $\QQ(x)[s]$.
Further, $(S(n))_{n\geq0}$ with
$S(n)=\sum_{k=1}^n\frac1k$
is transcendental over $\tau(\QQ(x))$.
\end{example}

Example~\ref{Exp:Qxs2} generalizes as follows. Suppose that we are given a  
basic \rpisiE-extension $\dfield{\EE}{\sigma}$ of $\dfield{\GG}{\sigma}$ with $$\GG[\rho_1]\dots[\rho_l][p_1,p_1^{-1}]\dots[p_u,p_u^{-1}][s_1]\dots[s_r]$$ 
where the $\rho_i$ are \rE-monomials with $\zeta_i=\frac{\sigma(\rho_i)}{\rho_i}\in\cRR$ being primitive roots of unity, $p_i$ are \piE-monomials and the $s_i$ are \sigmaE-monomials. In addition, take an evaluation function $\ev$ with $o$-function $L$ by iterative applications of Lemmas~\ref{Lemma:EvConstruction} and~\ref{Lemma:ZConstruction}. Here we may assume that
\begin{itemize}
\item $\ev(\rho_i,n)=\zeta_i^n$ for all $1\leq i\leq l$, 
\item $\ev(p_i,n)=P_i(n)$ with $\expr(p_i)=P_i\in\Pi(\GG)$ for all $1\leq i\leq u$, and 
\item $\ev(s_i,n)=S_i(n)$ with $\expr(s_i)=S_i\in\Sigma(\GG)$ for all $1\leq i\leq r$.
\end{itemize}
Then $\tau:\EE\to\seqK$ with~\eqref{Equ:EvToTau} is a $\KK$-homomorphism. By Theorem~\ref{Thm:InjectiveProp} it follows that $\tau$ is injective and thus
\begin{eqnarray*}
\tau(\EE)=&\tau(\GG)&[\tau(\rho_1)]\dots[\tau(\rho_l)]\\
&&\times[\tau(p_1),\tau(p_1)^{-1}]\dots[\tau(p_u),\tau(p_u)^{-1}]\\
&&\times[\tau(s_1)]\dots[\tau(s_r)]\\
=&\tau(\GG)&[(\zeta_1^n)_{n\geq0}]\dots[(\zeta_l^n)_{n\geq0}]\\
&&\times[(P_1(n))_{n\geq0},(\tfrac1{P_1(n)})_{n\geq0}]\dots[(P_u(n))_{n\geq0},(\tfrac1{P_u(n)})_{n\geq0}]\\
&&\times[(S_1(n))_{n\geq0}]\dots[(S_r(n))_{n\geq0}]
\end{eqnarray*}
forms a (Laurent) polynomial ring extension over the ring of sequences $R=\tau(\GG)[(\zeta_1^n)_{n\geq0}]\dots[(\zeta_l^n)_{n\geq0}]$.
In particular, we conclude that the sequences 
\begin{align*}
&(P_1(n))_{n\geq0}, (\tfrac1{P_1(n)})_{n\geq0},\dots(P_u(n))_{n\geq0},(\tfrac1{P_u(n)})_{n\geq0},
(S_1(n))_{n\geq0},\dots,(S_r(n))_{n\geq0}
\end{align*}
are, up to the trivial relations $(P_i(n))_{n\geq0}\cdot (\tfrac1{P_i(n)})_{n\geq0}=1$ for $1\leq i\leq u$,
algebraically independent among each other over the ring $R$.

We are now ready to state the main result of this section that connects $\Sum\Prod(\GG)$ with difference ring theory.

\begin{theorem}\label{Thm:SigmaReduceInRPS}
Let $\dfield{\EE}{\sigma}$ be a basic $APS$-extension of $\dfield{\GG}{\sigma}$ with $\GG\in\{\rGG,\bGG,\mGG\}$ and $\AR=\EE\lr{t_1}\dots\lr{t_e}$ equipped with an evaluation function $\ev:\EE\times\NN\to\KK$ (using Lemmas~\ref{Lemma:EvConstruction} and~\ref{Lemma:ZConstruction}). Take the $\KK$-homomorphism $\tau:\EE\to\seqK$ with $\tau(f)=(\ev(f,n))_{n\geq0}$ and $T_i=\expr(t_i)\in\Sigma\Pi(\GG)$ for $1\leq i\leq e$. Then the following statements are equivalent.
\begin{enumerate}
\item $\dfield{\EE}{\sigma}$ is an \rpisiE-extension of $\dfield{\GG}{\sigma}$.
\item $\tau$ is a $\KK$-isomorphism between $\dfield{\EE}{\sigma}$ and $(\tau(\EE),S)$; in particular all sequences generated by the \pisiE-monomials are algebraically independent over the ring given by the sequences of $\tau(\GG)$ adjoined with the sequences generated by \rE-monomials. 
\item $W=\{T_1,\dots,T_e\}$ is canonical-reduced over $\GG$.
\item The zero recognition problem is trivial, i.e., for any $F\in\Sum\Prod(W,\GG)$ the following holds: $F=0$ if and only if $\ev(F,n)=0$ for all $n\geq\delta$ for some $\delta\in\NN$.
\end{enumerate}
\end{theorem}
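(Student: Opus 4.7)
The plan is to establish (1) $\Leftrightarrow$ (2) $\Leftrightarrow$ (3) $\Leftrightarrow$ (4), using throughout the correspondence between reduced representations in $\EE$ and in $\Sum\Prod(W,\GG)$ supplied by Lemma~\ref{Lemma:SwitchAPSToA} and Corollary~\ref{Cor:SwitchTiToAPS}. For (1) $\Leftrightarrow$ (2) I would invoke Theorem~\ref{Thm:InjectiveProp} directly: since $\dfield{\GG}{\sigma}$ is a \pisiE-field over $\KK$ by Example~\ref{Exp:RationalDField3}, we have $\const{\GG}{\sigma}=\KK$, so $\dfield{\EE}{\sigma}$ is an \rpisiE-extension of $\dfield{\GG}{\sigma}$ iff $\const{\EE}{\sigma}=\KK$, which by Theorem~\ref{Thm:InjectiveProp} is equivalent to $\tau$ being injective. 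Injectivity promotes the $\KK$-homomorphism $\tau$ to a $\KK$-isomorphism onto $\tau(\EE)$, and transporting the (Laurent) polynomial tower structure of $\EE$ over $\GG$ through $\tau$ yields the asserted algebraic independence of the sequences attached to the \piE- and \sigmaE-monomials over the ring generated by $\tau(\GG)$ and the sequences of the \rE-monomials, exactly as in the display preceding the theorem.

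For (2) $\Leftrightarrow$ (3) I would set up a ``dictionary'' between elements of $\EE$ in reduced representation (Definition~\ref{Def:ShapeOfElements}) and expressions in $\Sum\Prod(W,\GG)$ in reduced representation (Definition~\ref{Def:SumProdReduced}). Given $A\in\Sum\Prod(W,\GG)$ one lifts it to $a\in\EE$ by replacing each $T_i$ with $t_i$ and $\oplus,\odot,\opower$ with $+,\cdot$ and ordinary powers; conversely $\expr$ pulls $a\in\EE$ back to a sum-product reduced expression over $W$. Because the $T_i$ are in sum-product reduced form and $W$ is shift-stable, these two operations are mutually inverse modulo exactly the permutations of operands in $\oplus$ and $\odot$ allowed in Definition~\ref{Def:SumProdReduced}. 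Under this dictionary, $A(n)=B(n)$ for all $n\geq\delta$ translates into $\tau(a)=\tau(b)$ and vice versa, so $\tau$ is injective iff agreement of $A(n)$ and $B(n)$ on a cofinite set of $n$ forces $A$ and $B$ to coincide up to the permitted permutations.

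The equivalence (3) $\Leftrightarrow$ (4) is then routine. For (3) $\Rightarrow$ (4), the zero expression $0\in\GG\subset\Sum\Prod(W,\GG)$ is its own unique reduced form, so $\ev(F,n)=0$ on a cofinite set together with canonical-reducedness of $W$ forces $F=0$. For (4) $\Rightarrow$ (3), given $A,B\in\Sum\Prod(W,\GG)$ with $A(n)=B(n)$ for $n\geq\delta$, lift to $a,b\in\EE$, form $a-b\in\EE$, and set $F=\expr(a-b)\in\Sum\Prod(W,\GG)$; then $\ev(F,n)=A(n)-B(n)=0$ for large $n$, so (4) yields $F=0$, hence $a-b=0$ in $\EE$, and the dictionary shows that $A$ and $B$ coincide up to the permitted permutations.

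The main obstacle I anticipate is the bijection-with-permutations in step (2) $\Leftrightarrow$ (3): one must verify that the only ambiguity on either side of the dictionary is precisely the reordering of summands in $+$/$\oplus$ and of factors in $\cdot$/$\odot$, so that equality in the algebraic ring $\EE$ matches ``equality up to permutation'' in $\Sum\Prod(W,\GG)$. All other steps reduce to Theorem~\ref{Thm:InjectiveProp} together with the construction machinery of Lemmas~\ref{Lemma:EvConstruction}, \ref{Lemma:ZConstruction}, \ref{Lemma:SwitchAPSToA}, \ref{Lemma:SwitchAToAPS} and Corollary~\ref{Cor:SwitchTiToAPS}.
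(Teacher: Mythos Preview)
Your proposal is correct and matches the paper's argument in all essential respects: the paper also derives $(1)\Leftrightarrow(2)$ directly from Theorem~\ref{Thm:InjectiveProp}, uses exactly your ``dictionary'' (replacing $T_i\leftrightarrow t_i$ and $\oplus,\odot,\opower\leftrightarrow +,\cdot,{}^{\wedge}$) for $(2)\Rightarrow(3)$, and handles $(3)\Rightarrow(4)$ by comparing with the zero expression. The only organizational difference is that the paper closes the loop with $(4)\Rightarrow(1)$ by contrapositive (if $\tau$ is not injective, pick $f\neq0$ with $\tau(f)=\vect{0}$ and apply Lemma~\ref{Lemma:SwitchAPSToA} to get a nonzero $F$ with $F(n)=0$), whereas you prove $(4)\Rightarrow(3)$ via $\expr(a-b)$; these are the same idea viewed from two sides, and your anticipated ``permutation bijection'' concern is exactly what the paper treats as immediate from the reduced-representation definitions.
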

\begin{proof}
$(1)\Leftrightarrow(2)$ is an immediate consequence of Theorem~\ref{Thm:InjectiveProp}.\\	
$(2)\Rightarrow(3)$: 
Let $F,F'\in\Sum\Prod(W,\GG)$ with $F(n)=F'(n)$ for all $n\geq\delta$ for some $\delta\in\NN$. Replace in $F,F'$ any occurrences of $T_i\opower z_i$ for $1\leq i\leq e$ with $z_i\in\ZZ$ by $t_i^{z_i}$, $\oplus$ by $+$, and $\odot$ by $\cdot$. This yields $f,f'\in\EE$ with $\ev(f,n)=F(n)$ for all $n\geq L(f)$ and $\ev(f',n)=F'(n)$ for all $n\geq L(f')$. Hence $\tau(f)=\tau(f')$. Since $\tau$ is injective, $f=f'$. But this implies that $F$ and $F'$ are the same up to trivial permutations of the operands in $\odot$ and $\oplus$. Consequently $W$ is canonical reduced.\\
$(3)\Rightarrow(4)$: Suppose that $W$ is canonical reduced and take $F\in\Sum\Prod(W,\GG)$ with $F(n)=0$ for all $n\geq\delta$ for some $\delta\in\NN$. Since $\ev(0,n)=0$ for all $n\geq0$ and $W$ is canonical reduced, it follows that $F=0$.\\
$(4)\Rightarrow(1)$: Suppose that $\tau$ is not injective and take $f\in\EE\setminus\{0\}$ with $\tau(f)=\vect{0}$. 
By Lemma~\ref{Lemma:SwitchAPSToA}  we can take 
$0\neq F\in\Sum\Prod(\{T_1,\dots,T_e\},\GG)$ and $\delta\in\NN$ with $\ev(f,n)=F(n)=0$ for all $n\geq\delta$. Thus statement~(4) does not hold.
\end{proof}

In order to derive the equivalences in Theorem~\ref{Thm:SigmaReduceInRPS} we assumed that an $APS$-extension is given.
We can relax this assumption if the set $W$ is shift-stable.

\begin{corollary}\label{Cor:RPS=Canonical}
	Let $W=\{T_1,\dots,T_e\}\subset\Sigma\Pi(\GG)$ be in sum-product reduced representation and shift-stable. More precisely, for each $1\leq i\leq e$ the arising sums/products in $T_i$ are contained in $\{T_1,\dots,T_{i-1}\}$. Then the following two statements are equivalent:
	\begin{enumerate}
		\item 	
		There is a basic \rpisiE-extension $\dfield{\EE}{\sigma}$ of $\dfield{\GG}{\sigma}$ with $\EE=\GG\lr{t_1}\dots\lr{t_e}$ equipped with an evaluation function $\ev$ (using Lemmas~\ref{Lemma:EvConstruction} and~\ref{Lemma:ZConstruction}) with $T_{i}=\expr(t_i)\in\Sigma\Pi(\GG)$ for $1\leq i\leq e$. 
		\item $W=\{T_1,\dots,T_e\}$ is $\sigma$-reduced over $\GG$.
	\end{enumerate}
\end{corollary}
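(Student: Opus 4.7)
The plan is to derive the corollary by combining two results already established: Corollary~\ref{Cor:SwitchTiToAPS} provides, from a shift-stable set in sum-product reduced form, a basic $APS$-extension $\dfield{\EE}{\sigma}$ with $\EE=\GG\lr{t_1}\dots\lr{t_e}$ and an evaluation function such that $\expr(t_i)=T_i$; Theorem~\ref{Thm:SigmaReduceInRPS} characterizes when such a basic $APS$-extension is in fact an \rpisiE-extension by the canonical-reducedness of $W=\{\expr(t_1),\dots,\expr(t_e)\}$. Since $W$ is already assumed shift-stable and in sum-product reduced form, being $\sigma$-reduced (Definition~\ref{Def:SigmaReducedSet}) is equivalent to $W$ being canonical reduced over $\GG$. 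Thus the proof reduces to identifying the \rpisiE-property with canonical-reducedness in the specific $APS$-extension produced by Corollary~\ref{Cor:SwitchTiToAPS}.

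For the direction $(2)\Rightarrow(1)$, I would assume that $W$ is $\sigma$-reduced. Shift-stability and sum-product reduced form (which hold by hypothesis) allow invoking Corollary~\ref{Cor:SwitchTiToAPS}, yielding a basic $APS$-extension $\dfield{\EE}{\sigma}$ of $\dfield{\GG}{\sigma}$ with $\EE=\GG\lr{t_1}\dots\lr{t_e}$ and an evaluation function $\ev$ (built iteratively by Lemmas~\ref{Lemma:EvConstruction} and~\ref{Lemma:ZConstruction}) such that $\expr(t_i)=T_i$ for all $i$. Since $W$ is canonical reduced over $\GG$ by assumption, condition~(3) of Theorem~\ref{Thm:SigmaReduceInRPS} holds, whence condition~(1) of that theorem yields that $\dfield{\EE}{\sigma}$ is a basic \rpisiE-extension of $\dfield{\GG}{\sigma}$.

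For the direction $(1)\Rightarrow(2)$, I would start from the given basic \rpisiE-extension $\dfield{\EE}{\sigma}$ with $\expr(t_i)=T_i$. Then condition~(1) of Theorem~\ref{Thm:SigmaReduceInRPS} is satisfied, so by the implication $(1)\Rightarrow(3)$ of that theorem, the set $W=\{T_1,\dots,T_e\}$ is canonical reduced over $\GG$. Combining this with the standing assumptions that $W$ is shift-stable and its elements are in sum-product reduced form, we conclude by Definition~\ref{Def:SigmaReducedSet} that $W$ is $\sigma$-reduced over $\GG$.

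I do not expect any substantial obstacle: the argument is essentially a bookkeeping step that glues Corollary~\ref{Cor:SwitchTiToAPS} (which ensures a faithful translation between shift-stable reduced sets and basic $APS$-extensions via $\expr$) with the equivalences of Theorem~\ref{Thm:SigmaReduceInRPS}. The only subtle point worth verifying is that the $APS$-extension constructed in Corollary~\ref{Cor:SwitchTiToAPS} is precisely the one to which Theorem~\ref{Thm:SigmaReduceInRPS} is applied, i.e., that the evaluation function coincides with the one produced by the iterative Lemmas~\ref{Lemma:EvConstruction} and~\ref{Lemma:ZConstruction}; this is ensured by the uniformity of the construction in both statements and by the choice of lower bounds that make $\expr(t_i)=T_i$ literally (as opposed to up to a constant shift).
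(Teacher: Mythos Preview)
Your proposal is correct and follows essentially the same approach as the paper: both directions are obtained by combining Corollary~\ref{Cor:SwitchTiToAPS} (to produce the basic $APS$-extension with $\expr(t_i)=T_i$) with the equivalence $(1)\Leftrightarrow(3)$ of Theorem~\ref{Thm:SigmaReduceInRPS}, and then appealing to Definition~\ref{Def:SigmaReducedSet} to pass between canonical-reduced and $\sigma$-reduced under the standing shift-stability and sum-product reduced assumptions.
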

\begin{proof}
	$(1)\Rightarrow(2)$: By assumption $W$ is sum-product reduced and shift-stable, and by $(1)\Rightarrow(3)$ of Theorem~\ref{Thm:SigmaReduceInRPS} it is canonical-reduced. Thus $W$ is $\sigma$-reduced.\\
	$(2)\Rightarrow(1)$: By Corollary~\ref{Cor:SwitchTiToAPS} we get an $APS$-extension $\dfield{\EE}{\sigma}$ of $\dfield{\GG}{\sigma}$ with $\EE=\GG\lr{t_1}\dots\lr{t_e}$ equipped with an evaluation function $\ev$ (using Lemmas~\ref{Lemma:EvConstruction} and~\ref{Lemma:ZConstruction}) with $T_{i}=\expr(t_i)\in\Sigma\Pi(\GG)$ for $1\leq i\leq e$. Since $W$ is canonical reduced, it follows by $(3)\Rightarrow(1)$ of Theorem~\ref{Thm:SigmaReduceInRPS} that $\dfield{\EE}{\sigma}$ is an \rpisiE-extension of $\dfield{\GG}{\sigma}$.
\end{proof}

Corollary~\ref{Cor:RPS=Canonical} yields immediately a strategy (actually the only strategy for shift-stable sets) to solve Problem~\textsf{SigmaReduce}.

\begin{programcode}{Strategy to solve Problem~\textsf{SigmaReduce}}\label{Method:SigmaReduce}
\hspace*{-0.1cm}\begin{minipage}[t]{1.1cm}Given:\end{minipage}\begin{minipage}[t]{10.4cm}
	$A_1,\dots,A_u\in\Sum\Prod(\GG)$ with $\GG\in\{\rGG,\bGG,\mGG\}$, i.e., $\GG=\KK(x,x_1,\dots,x_v)$ or $\GG=\KK(x_1,\dots,x_v)$.
\end{minipage}\\
\begin{minipage}[t]{1.1cm}Find:\end{minipage}\begin{minipage}[t]{10.4cm}
	a $\sigma$-reduced set $W=\{T_1,\dots,T_e\}\subset\Sigma\Pi(\GG')$ with $B_1\dots,B_u\in\Sum\Prod(W,\GG')$
	and $\delta_1,\dots,\delta_u\in\NN$ such that
	$A_i(n)=B_i(n)$ holds for all $n\geq\delta_i$ and $1\leq i\leq r$.
\end{minipage}
	\begin{enumerate}
		\item Construct		
		an \rpisiE-extension $\dfield{\EE}{\sigma}$ of\footnote{Here we get $\GG'=\KK'(x,x_1,\dots,x_v)$ or $\GG'=\KK'(x_1,\dots,x_v)$ where $\KK'$ is a field extension of $\KK$; if $A_1,\dots,A_u\in\Sum\Prod_1(\GG)$, one can restrict to the special case $\GG=\GG'$.} $\dfield{\GG'}{\sigma}$ with $\EE=\GG'\lr{t_1}\dots\lr{t_e}$ 
		 equipped with an evaluation function $\ev:\EE\times\NN\to\KK'$ and $o$-function $L$ (using Lemmas~\ref{Lemma:EvConstruction} and~\ref{Lemma:ZConstruction}) in which $A_1,\dots,A_u$ are modeled by $a_1,\dots, a_u\in\EE$. More precisely, for $1\leq i\leq u$ we compute in addition $\delta_i\in\NN$ with $\delta_i\geq L(a_i)$ such that
		\begin{equation}\label{Equ:ModelAi}
		A_i(n)=\ev(a_i,n)\quad\forall n\geq\delta_i.
		\end{equation}
		\item Set $W=\{T_1,\dots,T_e\}$ with $T_i:=\expr(t_i)\in\Sigma\Pi(\GG')$ for $1\leq i\leq e$.
		\item Set $B_i:=\expr(a_i)\in\Sum\Prod(W,\GG')$ for $1\leq i\leq u$.
		\item Return $W$, $(B_1,\dots,B_u)$ and $(\delta_1,\dots,\delta_u)$.
	\end{enumerate}

\vspace*{-0.2cm}

\end{programcode}

\noindent What remains open is to enrich this general method with the construction required in step~(1). This task will be considered in detail in the next section.

\section{The representation problem}\label{Sec:RepProblem}

In this section we will give an overview of the existing algorithms that assist in the task of solving the open subproblem given in step (1) of our general method \textsf{SigmaReduce}. The resulting machinery can be summarized as follows.

\begin{theorem}\label{Thm:SigmaReduce1}
	Given $A_1,\dots,A_u\in\Sum\Prod_1(\GG)$ with $\GG\in\{\rGG,\bGG,\mGG\}$
	where $\KK$ is a rational function field over an algebraic number field. Then one can compute
	a $\sigma$-reduced set $W=\{T_1,\dots,T_e\}\subset\Sigma\Pi_1(\GG)$ with $B_1\dots,B_u\in\Sum\Prod(W,\GG)$ and $\delta_1,\dots,\delta_u\in\NN$ such that
	$A_i(n)=B_i(n)$ holds for all $n\geq\delta_i$ and $1\leq i\leq u$.
\end{theorem}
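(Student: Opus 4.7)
The plan is to execute Method~\ref{Method:SigmaReduce}; the only missing piece is step~(1), the construction of an \rpisiE-extension modeling the inputs. By the footnote to that method, since $A_1,\dots,A_u\in\Sum\Prod_1(\GG)$ we may fix $\GG'=\GG$ and stay inside the prescribed multibasic mixed rational function field. I would proceed by induction on the nesting depth of the sums and products appearing in $A_1,\dots,A_u$, building a tower of \rE-, \piE- and \sigmaE-monomials on top of $\dfield{\GG}{\sigma}$ by iterated application of Lemma~\ref{Lemma:EvConstruction} and Lemma~\ref{Lemma:ZConstruction}, and at each step testing whether the monomial about to be introduced is genuinely new or can be avoided.

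First I handle the depth-$1$ products. Collect every object $\RPow(r)$ and $\Prod(\lambda,f)$ occurring in the inputs. Each root-of-unity product is absorbed into an \rE-monomial of order $\ord(r)$, keeping one generator per primitive $\nu$th root of unity. For each $\Prod(\lambda,f)$ with $f\in\GG^*$, factor out any root-of-unity component into an \rE-monomial, and then apply the \piE-test (Theorem~\ref{Thm:PiTest}) to decide whether the remaining multiplicand can be written, modulo the existing \piE-monomials, as $c\,\sigma(h)/h$ with $c\in\KK^*$ and $h\in\GG^*$. In the positive case $\Prod(\lambda,f)$ collapses to a Laurent monomial in the existing \piE-monomials multiplied by a $\GG^*$-correction, so no new generator is adjoined; in the negative case I adjoin a fresh \piE-monomial via Lemma~\ref{Lemma:EvConstruction} with the lower bound chosen to satisfy~\eqref{Equ:DefineLowerBound}. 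By construction the constant field does not grow, so the extension stays an \rpisiE-extension.

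Next I process the sums in order of increasing depth. When a sum $\Sum(l,H)$ is reached, its summand has, by induction, already been modeled by some $h$ in the basic \rpisiE-extension $\dfield{\AR}{\sigma}$ constructed so far. Apply the \sigmaE-test (Theorem~\ref{Thm:SigmaTest}) to decide whether there exists $g\in\AR$ with $\sigma(g)-g=h$. If so, the sum is rewritten as $g$ plus a ground-field constant determined by the initial summands, and no new monomial is needed. Otherwise, adjoin a \sigmaE-monomial $t$ with $\sigma(t)=t+\sigma(h)$ and lower bound chosen as in~\eqref{Equ:DefineLowerBound}, so that $\ev(t,n)$ equals the intended partial sum from that lower bound onward. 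Since the input is in $\Sum\Prod_1(\GG)$, no products of depth $>1$ appear, so the induction only has to iterate through the sums.

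Once every sum and product of the $A_i$ has been modeled, set $T_i:=\expr(t_i)$, $B_i:=\expr(a_i)$, and $\delta_i:=L(a_i)$, where $L$ is the $o$-function maintained throughout the construction; then~\eqref{Equ:ModelAi} holds by design. By Corollary~\ref{Cor:RPS=Canonical}, the resulting set $W=\{T_1,\dots,T_e\}$ is automatically $\sigma$-reduced over $\GG$, and each $B_i\in\Sum\Prod(W,\GG)$ inherits the sum-product reduced form from the reduced representation of $a_i\in\AR$. The main obstacle is the algorithmic content of the two tests: the \piE-test amounts to solving multiplicative Ore--Sato type equations in the multibasic mixed field $\GG$ (accounting for roots of unity), and the \sigmaE-test amounts to the parameterized telescoping problem inside an ever-growing \rpisiE-extension. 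Both become effective when $\KK$ is a rational function field over an algebraic number field, which simultaneously supplies polynomial factorization over $\KK$ and hence computability of $L$ and $Z$, yielding the explicit $\delta_i$ and guaranteeing termination in the finitely many sum/product nodes of $A_1,\dots,A_u$.
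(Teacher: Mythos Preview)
Your overall strategy---execute Method~\textsf{SigmaReduce} and supply step~(1) by a greedy tower construction, then invoke Corollary~\ref{Cor:RPS=Canonical}---matches the paper's, and your treatment of the sums via Theorem~\ref{Thm:SigmaTest} is essentially what Theorem~\ref{Thm:SigmaExtensionStable} formalizes. The gap is in the product step. When you test whether a new multiplicand $f$ ``can be written, modulo the existing \piE-monomials, as $c\,\sigma(h)/h$'', what Theorem~\ref{Thm:PiTest} (via $M(\vect{f},\HH)$) actually detects is a relation $\sigma(g)=f_1^{z_1}\cdots f_{d-1}^{z_{d-1}}f^{z_d}g$ with $z_d\neq0$. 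If $|z_d|>1$ you cannot collapse $\Prod(\lambda,f)$ to a Laurent monomial in the existing generators; only a \emph{power} of it lies there. Concretely: process $\Prod(1,x^2)$ first and adjoin $t_1$ with $\sigma(t_1)/t_1=x^2$; then for $\Prod(1,x)$ there is no $c,h,m$ with $x=c\,\sigma(h)/h\cdot x^{2m}$ in $\KK(x)$, so your rule adjoins a fresh $t_2$ with $\sigma(t_2)/t_2=x$. Now $t_1t_2^{-2}$ is a new constant and the tower is \emph{not} an \rpisiE-extension, so Corollary~\ref{Cor:RPS=Canonical} does not apply. Your \rE-step has the same defect: one generator per primitive $\nu$th root of unity can enlarge the constants (e.g.\ $\RPow(\zeta_6)$ and $\RPow(-1)$ give $\rho_1^3\rho_2$ fixed by $\sigma$ but not in $\KK$).

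This is exactly why the paper does not handle products greedily. It invokes either Theorem~\ref{Thm:OptimalProdExt} (treat all products simultaneously, obtaining at most one \rE-monomial and a minimal set of \piE-monomials) or Theorem~\ref{Thm:ExtensionStableProd} (completely factor every multiplicand so each \piE-monomial has irreducible $\sigma(t)/t$, which makes the class extension-stable). Only after one of these nontrivial product results delivers a genuine \rpiE-extension does Theorem~\ref{Thm:SigmaExtensionStable} stack the \sigmaE-monomials on top, yielding Corollary~\ref{Cor:NestedFieldConstruction}; Method~\textsf{SigmaReduce} together with Corollary~\ref{Cor:RPS=Canonical} then finishes exactly as you say.
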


\begin{theorem}\label{Thm:SigmaReduceNested}
	Given $A_1,\dots,A_u\in\Sum\Prod(\KK(x))$ where $\KK=\calA(y_1,\dots,y_o)$ is a rational function field over an algebraic number field $\calA$. Then one can take $\KK'=\calA'(y_1,\dots,y_o)$ where $\calA'$ is an algebraic extension of $\calA$
	and can compute
	a $\sigma$-reduced set $W=\{T_1,\dots,T_e\}\subset\Sigma\Pi(\KK'(x))$ with 
	$B_1\dots,B_u\in\Sum\Prod(W,\GG')$ and $\delta_1,\dots,\delta_u\in\NN$ such that
	$A_i(n)=B_i(n)$ holds for all $n\geq\delta_i$ and $1\leq i\leq u$.
\end{theorem}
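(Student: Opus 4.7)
The plan is to apply the general strategy \textsf{SigmaReduce} from page~\pageref{Method:SigmaReduce}: it suffices to construct, for some algebraic extension $\calA'$ of $\calA$ with $\KK' = \calA'(y_1,\dots,y_o)$, a basic \rpisiE-extension $\dfield{\EE}{\sigma}$ of $\dfield{\KK'(x)}{\sigma}$ with $\EE = \KK'(x)\lr{t_1}\dots\lr{t_e}$, equipped with an evaluation function $\ev$ and $o$-function $L$ obtained by iterated application of Lemmas~\ref{Lemma:EvConstruction} and~\ref{Lemma:ZConstruction}, together with elements $a_1,\dots,a_u \in \EE$ modeling $A_1,\dots,A_u$. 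Once this is done, Corollary~\ref{Cor:RPS=Canonical} gives that $W := \{\expr(t_1),\dots,\expr(t_e)\}$ is $\sigma$-reduced over $\KK'(x)$, and we output $B_i := \expr(a_i) \in \Sum\Prod(W,\KK'(x))$ together with any $\delta_i \in \NN$ satisfying $\delta_i \geq L(a_i)$, so that $A_i(n) = \ev(a_i,n) = B_i(n)$ for all $n \geq \delta_i$.

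The construction of $\dfield{\EE}{\sigma}$ follows the recursive pattern of Lemma~\ref{Lemma:SwitchAToAPS}, refined so that every adjoined monomial is of \rpisiE-type (and not merely $APS$-type). Proceed by induction on the maximal depth $d$ of sums and products appearing recursively inside $A_1,\dots,A_u$. The case $d \leq 1$ lies in $\Sum\Prod_1(\KK(x))$ and is covered by Theorem~\ref{Thm:SigmaReduce1}. For the inductive step, assume all sums and products of depth $<d$ that appear inside the $A_i$ have already been modeled by a basic \rpisiE-extension $\dfield{\AR}{\sigma}$ of $\dfield{\KK'(x)}{\sigma}$, with the corresponding $\ev$, $L$, and $z$-function in place. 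Now treat each sum/product $T$ of depth exactly $d$ in turn. Its summand or multiplicand $F$ is built from sums and products of depth $<d$, so by the induction hypothesis there is $h \in \AR$ (respectively $h \in \AR^*$, since products of $\Pi(\KK(x))$ are nonzero at large $n$ by construction of the $z$-function) with $\ev(h,n) = F(n)$ for all large $n$. If $T = \Sum(\lambda,F)$, apply the \sigmaE-test (Theorem~\ref{Thm:SigmaTest}) to $\sigma(h)$: either there is $g \in \AR$ with $\sigma(g) - g = \sigma(h)$ so that $T$ is already representable in $\AR$ up to an additive constant, or $\dfield{\AR[t]}{\sigma}$ with $\sigma(t) = t + \sigma(h)$ is a genuine \sigmaE-extension and we adjoin it. If $T = \Prod(\lambda,F)$, apply the nested-product representation machinery of \cite{Schneider:05c,DR2,OS:18,SchneiderProd:20,OS:20} to the element $\sigma(h) \in \AR^*$: either it can be rewritten (possibly after extracting root-of-unity factors as \rE-monomials, and possibly after enlarging $\calA$ within the current algebraic extension) as $c\,\sigma(g)/g$ with $c$ a constant and $g$ a product of existing \piE/\rE-generators, in which case $T$ is representable in $\AR$, or $\dfield{\AR[t,t^{-1}]}{\sigma}$ with $\sigma(t) = \sigma(h)\,t$ is a genuine \piE-extension and we adjoin it. In each subcase the $o$- and $z$-functions are extended by Lemmas~\ref{Lemma:EvConstruction} and~\ref{Lemma:ZConstruction}, preserving the \rpisiE-property. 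Iterating over all depth-$d$ sums and products exhausts the inductive step, and processing $A_1,\dots,A_u$ finally yields representatives $a_1,\dots,a_u \in \EE$.

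The main obstacle is the product case at depth $>1$: deciding, inside a tower of nested \rpisiE-extensions over $\KK(x)$, whether a given $f \in \AR^*$ generates a \piE-extension or is already equivalent modulo a telescoper and modulo the existing multiplicative group of \piE/\rE-monomials to such a product, is a nontrivial problem in the multiplicative structure of $\AR$. This is precisely the content of the nested-product theory of \cite{SchneiderProd:20,OS:20} for the ground field $\KK(x)$ with $\KK = \calA(y_1,\dots,y_o)$ a rational function field over an algebraic number field: it is shown there that a finite algebraic extension $\calA'$ of $\calA$ suffices to resolve all multiplicative relations and root-of-unity dependencies, that the corresponding tests are decidable, and that the output embeds into a basic \rpisiE-extension. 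Feeding these subroutines together with the \sigmaE-test for sums into the induction completes the construction of $\dfield{\EE}{\sigma}$, and the strategy \textsf{SigmaReduce} then delivers the asserted $\sigma$-reduced set $W$, the expressions $B_1,\dots,B_u$, and the bounds $\delta_1,\dots,\delta_u$.
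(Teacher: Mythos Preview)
Your overall plan---build a basic \rpisiE-extension of $\dfield{\KK'(x)}{\sigma}$ modeling the $A_i$ and then invoke the strategy \textsf{SigmaReduce}---is exactly the paper's approach. The gap is in how you handle the product case during your depth induction. Your dichotomy ``either $\sigma(h)=c\,\sigma(g)/g$ for some $g$ in the current ring, or adjoining $t$ with $\sigma(t)=\sigma(h)\,t$ gives a \piE-extension'' is not exhaustive: by Theorem~\ref{Thm:PiTest} the obstruction to $t$ being a \piE-monomial is the existence of $g$ and $m\in\ZZ\setminus\{0\}$ with $\sigma(g)=\sigma(h)^m g$, and for $|m|>1$ neither branch of your dichotomy applies. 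The nested-product machinery of~\cite{OS:20} (Theorem~\ref{Thm:NestedProdExt}) does not resolve this by adding one product at a time to an existing tower; it takes \emph{all} products simultaneously and designs a fresh \rpiE-extension (possibly reorganizing generators and enlarging $\calA$). The paper explicitly notes that an extension-stable version for nested products is not available in general.

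The fix is a structural observation you are not using: in the term algebra $\Sum\Prod(\GG)$, products never contain sums---$\Prod$ only enters via $\Prod^*(\GG)$, whose multiplicands lie in $\GG^*\odot\Prod^*(\GG)$. Hence every product occurring anywhere inside $A_1,\dots,A_u$ already lies in $\Prod(\KK(x))$. The paper therefore proceeds in two clean stages rather than by depth induction: first apply Theorem~\ref{Thm:NestedProdExt} once to the complete list of these products, obtaining a basic \rpiE-extension $\dfield{\HH}{\sigma}$ of $\dfield{\KK'(x)}{\sigma}$ (with $\KK'=\calA'(y_1,\dots,y_o)$) in which all products are modeled; then apply Theorem~\ref{Thm:SigmaExtensionStable} to adjoin \sigmaE-monomials on top of $\HH$ for the sums. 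This yields the required basic \rpisiE-extension and, via the strategy \textsf{SigmaReduce} and Corollary~\ref{Cor:RPS=Canonical}, the $\sigma$-reduced set $W$ and the $B_i$. Replacing your interleaved induction by this two-stage argument closes the gap.
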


Here we will start with the problem to represent products in \rpiE-monomials (see Subsection~\ref{Sec:ProductCase}). More precisely, we will show various tactics that enable one to represent expressions of $\Prod_1(\rGG)$, $\Prod_1(\bGG)$, $\Prod_1(\mGG)$ and $\Prod(\rGG)$. Afterwards, we will consider the problem to represent nested sums over such products (i.e., expressions of $\Sum(\GG)$, $\Sum\Prod_1(\rGG)$, $\Sum\Prod_1(\bGG)$, $\Sum\Prod_1(\mGG)$ and $\Sum\Prod(\rGG)$) in \sigmaE-monomials (see Subsection~\ref{Sec:SumCase}). 

\begin{remark}
\texttt{Sigma} can represent fully algorithmically single nested products in \rpiE-extensions; in addition, Ocansey's package \texttt{NestedProducts}~\cite{OS:18,OS:20} can deal with the case $\Prod(\rGG)$. Expressions from more general domains (e.g., sums and products that arise nontrivially in denominators) also work with the function call \MText{SigmaReduce} of \texttt{Sigma}. But for these cases the underlying summation mechanisms (like those given in Lemmas~\ref{Lemma:SwitchAPSToA} and~\ref{Lemma:SwitchAToAPS}) are only partially developed and the back translation from the difference ring setting to the term algebra might fail.
\end{remark}

In general, it suffices in our proposed construction to compute an \rpisiE-extension in which a finite set of sums and products are modeled. However, in some important instances it is possible to perform this constructions stepwise. 

\begin{definition}
	Fix $X$ as one of the term algebras $\Prod_1(\GG)$, $\Prod(\GG)$, $\Sum(\GG)$, $\Sum\Prod_1(\GG)$, $\Sum\Prod(\GG)$, and	
	let $\cal D$ be a subclass of basic \rpisiE-extensions of $\dfield{\GG}{\sigma}$. Then $\cal D$ is called \emph{X-extension-stable} if for any $\dfield{\HH}{\sigma}\in\cal D$ and any $A\in X$ one can construct an \rpisiE-extension $\dfield{\EE}{\sigma}$ of $\dfield{\HH}{\sigma}$ with $\dfield{\EE}{\sigma}\in\cal D$ and $a\in\EE$ such that one can model $A$ with $a$.
\end{definition}

\noindent We note that within such an extension-stable class of \rpisiE-extensions one does not have to treat the arising sums and products in one stroke, but can consider them iteratively. This is in particular interesting, when unforeseen sums and products arise in a later step, that have to be considered in addition.
In a nutshell, we will provide a general overview of the existing tools to design basic \rpisiE-extensions. In particular, we will emphasize the available algorithms to construct extension-stable versions.
 
\subsection{Representation of products in \rpiE-extensions}\label{Sec:ProductCase}

We start with algorithmic tools that enable one to test if a $P$-extension forms a \piE-extension. Based on these tools  we present (without further details) the existing techniques to represent a finite set of products in an \rpiE-extension.

\subsubsection{Algorithmic tests}

In~\cite[Theorem~9.1]{Schneider:10c} based on Karr's work~\cite{Karr:81,Karr:85} 
a general criterion for \piE-field extensions is elaborated. Here we present a more flexible version in the ring setting.
\begin{theorem}\label{Thm:PiTest}
Let $\dfield{\EE}{\sigma}$ be a $P$-extension of a difference ring $\dfield{\HH}{\sigma}$ with $\EE=\HH\lr{t_1}\dots\lr{t_d}$ and 
$f_i=\frac{\sigma(t_i)}{t_i}\in\HH^*$ for $1\leq i\leq d$. Suppose that 
\begin{equation}\label{Equ:SemiConstantProp}
\{g\in\HH\setminus\{0\}\mid \sigma(g)=u\,g\text{ for some }u\in\HH^*\}\subseteq\HH^*
\end{equation}
holds. Then the following statements are equivalent:
\begin{enumerate}
	\item $\dfield{\EE}{\sigma}$ is a \piE-extension of $\dfield{\HH}{\sigma}$, i.e., $\const{\EE}{\sigma}=\const{\HH}{\sigma}$.
	\item There do not exist $g\in\HH\setminus\{0\}$ and $(z_1,\dots,z_d)\in\ZZ^d\setminus\{\vect{0}\}$ with
	$$\sigma(g)=f_1^{z_1}\dots f_d^{z_d}g.$$
\end{enumerate}
\end{theorem}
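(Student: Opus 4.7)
The plan is to prove the equivalence by contrapositive in both directions, exploiting the Laurent polynomial structure $\EE = \HH[t_1^{\pm 1},\dots,t_d^{\pm 1}]$ together with the reduced representation of Definition~\ref{Def:ShapeOfElements}.

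For the direction $(2)\Rightarrow(1)$, I would assume $\const{\EE}{\sigma}\supsetneq\const{\HH}{\sigma}$ and pick a witness $c\in\const{\EE}{\sigma}\setminus\const{\HH}{\sigma}$. Writing $c$ in reduced form as
\[
c=\sum_{\vect{m}\in S}c_{\vect{m}}\,t_1^{m_1}\cdots t_d^{m_d},\qquad S\subseteq\ZZ^d\text{ finite, }c_{\vect{m}}\in\HH\setminus\{0\},
\]
and applying $\sigma$ coordinatewise, the Laurent monomials $t_1^{m_1}\cdots t_d^{m_d}$ are $\HH$-linearly independent, so $\sigma(c)=c$ forces
\[
\sigma(c_{\vect{m}})\,f_1^{m_1}\cdots f_d^{m_d}=c_{\vect{m}}\qquad\text{for every }\vect{m}\in S.
\]
If $S=\{\vect{0}\}$ were the only index, then $c=c_{\vect{0}}\in\HH$ with $\sigma(c_{\vect{0}})=c_{\vect{0}}$, contradicting $c\notin\const{\HH}{\sigma}$. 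Hence some $\vect{m}\in S$ with $\vect{m}\neq\vect{0}$ exists, and $g:=c_{\vect{m}}\in\HH\setminus\{0\}$ together with $z_i:=-m_i$ gives $\sigma(g)=f_1^{z_1}\cdots f_d^{z_d}\,g$, violating (2).

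For the direction $(1)\Rightarrow(2)$, I would assume (2) fails and manufacture a new constant in $\EE\setminus\HH$. Given $g\in\HH\setminus\{0\}$ and $\vect{z}\in\ZZ^d\setminus\{\vect{0}\}$ with $\sigma(g)=f_1^{z_1}\cdots f_d^{z_d}\,g$, observe that $u:=f_1^{z_1}\cdots f_d^{z_d}\in\HH^*$ since each $f_i\in\HH^*$; the standing hypothesis~\eqref{Equ:SemiConstantProp} therefore upgrades $g$ to a unit $g\in\HH^*$. A direct shift computation then shows that
\[
c:=g^{-1}\,t_1^{z_1}\cdots t_d^{z_d}\in\EE
\]
satisfies $\sigma(c)=\sigma(g)^{-1}\,f_1^{z_1}\cdots f_d^{z_d}\,t_1^{z_1}\cdots t_d^{z_d}=g^{-1}\,t_1^{z_1}\cdots t_d^{z_d}=c$, yet $c\notin\HH$ because its reduced form involves the nontrivial Laurent monomial $t_1^{z_1}\cdots t_d^{z_d}\neq 1$. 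Hence $\const{\EE}{\sigma}\supsetneq\const{\HH}{\sigma}$, contradicting (1).

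The main obstacle I anticipate is the backward direction, which requires $g$ to be \emph{invertible} in $\HH$ so that $g^{-1}$ exists in $\EE$; a priori one is only handed $g\in\HH\setminus\{0\}$. The precise role of the hypothesis~\eqref{Equ:SemiConstantProp} is to close this gap, and once this link is identified both directions reduce to short coefficient comparisons in the Laurent polynomial ring. The forward direction is essentially bookkeeping on the reduced form, but writing it cleanly depends on the uniqueness of that representation, which is guaranteed because every $t_i$ is a $P$-monomial and hence algebraically independent over $\HH\lr{t_1}\dots\lr{t_{i-1}}$.
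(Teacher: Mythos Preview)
Your proof is correct and takes a genuinely different route from the paper's. The paper argues inductively, one $P$-monomial at a time: for $(1)\Rightarrow(2)$ it picks the maximal index $i$ with $z_i\neq 0$, sets $h=g\,t_1^{-z_1}\cdots t_{i-1}^{-z_{i-1}}$ to obtain $\sigma(h)=f_i^{z_i}h$ in $\HH\lr{t_1}\dots\lr{t_{i-1}}$, and then invokes the single-extension criterion \cite[Thm.~2.12(2)]{DR1} to conclude that the $i$th step is not a \piE-extension; for $(2)\Rightarrow(1)$ it picks the minimal bad index, applies the same criterion to get a semi-invariant $g$ in the intermediate ring, and then calls on the structure theorem \cite[Thm.~22]{DR1} (which needs hypothesis~\eqref{Equ:SemiConstantProp}) to force $g$ into the form $h\,t_1^{-z_1}\cdots t_{i-1}^{-z_{i-1}}$ with $h\in\HH^*$. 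Your argument bypasses both external citations by working globally in the Laurent polynomial ring $\HH[t_1^{\pm1},\dots,t_d^{\pm1}]$: you read off the semi-invariant coefficient directly from the reduced form of a new constant, and in the other direction you build the constant $g^{-1}t_1^{z_1}\cdots t_d^{z_d}$ in one step. Your use of~\eqref{Equ:SemiConstantProp} is also more transparent---it is exactly what makes $g^{-1}$ exist. The paper's approach has the advantage of slotting into its broader inductive framework (and generalizing more readily when the $f_i$ lie in intermediate rings rather than in $\HH^*$), while yours is shorter, self-contained, and highlights precisely why the ``all $f_i\in\HH^*$'' assumption makes the situation elementary.
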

\begin{proof}
$(1)\Rightarrow(2)$: Suppose that there is a $g\in\HH\setminus\{0\}$ and $(z_1,\dots,z_d)\in\ZZ^d\setminus\{\vect{0}\}$ with
$\sigma(g)=f_1^{z_1}\dots f_d^{z_d}g.$ Let $i$ be maximal such that $z_i\neq0$. Then we can take $h=g\,f_1^{-z_1}\dots f_{i-1}^{-z_{i-1}}$ and get $\sigma(h)=f_i^{z_i}\,h$. With part~(2) of Theorem~2.12 in \cite{DR1} it follows that $\dfield{\HH\lr{t_1}\dots\lr{t_i}}{\sigma}$ is not a \piE-extension of $\dfield{\HH\lr{t_1}\dots\lr{t_{i-1}}}{\sigma}$.\\
$(2)\Rightarrow(1)$: Let $i$ with $1\leq i\leq d$ be minimal such that $\dfield{\HH\lr{t_1}\dots\lr{t_i}}{\sigma}$ is not a \piE-extension of $\dfield{\HH\lr{t_1}\dots\lr{t_{i-1}}}{\sigma}$. Then $\sigma(g)=\alpha_i^{z_i}\, g$ for some $g\in\HH\lr{t_1}\dots\lr{t_{i-1}}\setminus\{0\}$ and $z_i\in\ZZ\setminus\{0\}$ by part~(2) of Theorem~2.12 in \cite{DR1}. In particular, with property~\eqref{Equ:SemiConstantProp} we can apply Theorem~22 of~\cite{DR1} and it follows that $g=h\,t_1^{-z_1}\dots t_{i-1}^{-z_{i-1}}$ for some $z_i\in\ZZ$ and $h\in\HH^*$. Thus we get $\sigma(h)=\alpha_1^{z_1}\dots\alpha_i^{z_i}\,h$ with $z_i\neq0$ which proves statement~(1). 
\end{proof}

\begin{remark}
(1) Theorem~\ref{Thm:SigmaTest} contains the following special case (see~\cite{Karr:85} for the field and~\cite{DR1} for the ring case): a $P$ extension $\dfield{\AR\lr{p}}{\sigma}$ of $\dfield{\AR}{\sigma}$ with $f:=\frac{\sigma(p)}{p}\in\AR^*$ is a \piE-extension iff there are no $g\in\AR$, $m\in\ZZ\setminus\{0\}$ with $\sigma(g)=f\,g$.\\
(2) Often Theorem~\ref{Thm:PiTest} is applied to the special case when the ground ring $\dfield{\HH}{\sigma}$ forms a field. Note that in this particular instance, the assumption~\eqref{Equ:SemiConstantProp} trivially holds.
\end{remark}

Let $\dfield{\AR}{\sigma}$ be a difference ring and $\vect{f}=(f_1,\dots,f_d)\in(\AR^*)^d$. Then we define
$$M(\vect{f},\AR):=\{(m_1,\dots,m_d)\in\ZZ^d|\,\sigma(g)=f_1^{m_1}
\dots f_d^{m_d}\,g\text{ for some }g\in \AR\setminus\{0\}\};$$
see also~\cite{Karr:81}.
Note that Theorem~\ref{Thm:PiTest} states that the $P$-extension $\dfield{\EE}{\sigma}$ of the difference ring $\dfield{\HH}{\sigma}$ with $\EE=\HH\lr{t_1}\dots\lr{t_d}$ and 
$f_i=\frac{\sigma(t_i)}{t_i}\in\HH^*$ for $1\leq i\leq d$ is a \piE-extension if and only if $M(\vect{f},\HH)=\{\vect{0}\}$.
If $\vect{f}\in([\FF^*]_{\FF}^{\HH})^d$ (which holds for $\FF^*$-basic $P$-extensions), this latter property can be checked by utilizing the following result.
\begin{theorem}\label{Thm:SolveProblemM}
	Let $\dfield{\HH}{\sigma}$ be a basic \rpisiE-extension of
	a difference field $\dfield{\FF}{\sigma}$ and $\vect{f}\in([\FF^*]_{\FF}^{\HH})^d$. Then the following holds: 
	\begin{enumerate}
	\item $M(\vect{f},\HH)$ is a $\ZZ$-module over $\ZZ^d$.
	\item If one can compute a basis of $M(\vect{h},\FF)$ for any $\vect{h}\in(\FF^*)^m$ with $m\geq1$, then one can compute a basis of $M(\vect{f},\HH)$.
	\end{enumerate}
\end{theorem}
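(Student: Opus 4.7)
The plan is to prove Part~(1) directly from the group-like structure of solutions and Part~(2) by induction along the tower $\FF = \HH_0 \leq \HH_1 \leq \dots \leq \HH_n = \HH$ of basic \rpisiE-monomials. For Part~(1) I would invoke the standard result for \rpisiE-extensions that any $g\in\HH\setminus\{0\}$ satisfying $\sigma(g)=u\,g$ with $u\in\HH^*$ is already a unit (a fact used implicitly in Theorem~\ref{Thm:PiTest} and Theorem~22 of~\cite{DR1}); then $(0,\dots,0)$ is witnessed by $g=1$, the sum of two exponent vectors by the product of the corresponding witnesses, and negation by passing to $g^{-1}\in\HH^*$. This gives closure under addition and additive inverse, hence the $\ZZ$-module property.

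For Part~(2) the base case $n=0$ is the given oracle. Assume inductively that a basis of $M(\vect{h},\HH_{i-1})$ can be computed for any admissible tuple. If $t_i$ is an $S$-monomial then the basic product group excludes $t_i$, so $\vect{f}$ already lies in $([\FF^*]_{\FF}^{\HH_{i-1}})^d$; writing a candidate $g=\sum_k g_k\,t_i^k\in\HH_i\setminus\{0\}$ and comparing leading $t_i$-coefficients in $\sigma(g)=\prod f_j^{m_j}\,g$ shows that the top coefficient $g_n\in\HH_{i-1}\setminus\{0\}$ satisfies the same equation, so $M(\vect{f},\HH_i)=M(\vect{f},\HH_{i-1})$ and the hypothesis applies verbatim. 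For an $A$-monomial $t_i$ with $\alpha=\sigma(t_i)/t_i\in\FF^*$ a primitive root of unity, expand $g=\sum_{k=0}^{\nu-1}g_k\,t_i^k$ and compare coefficients of $t_i^k$ in $\sigma(g)=\prod f_j^{m_j}\,g$: some $g_k\neq 0$ must satisfy $\sigma(g_k)=\bigl(\prod f_j^{m_j}\bigr)\alpha^{-k}\,g_k$. Hence $\vect{m}\in M(\vect{f},\HH_i)$ iff there is a $k\in\ZZ$ with $(\vect{m},-k)\in M((f_1,\dots,f_d,\alpha),\HH_{i-1})$, and $M(\vect{f},\HH_i)$ is simply the projection onto the first $d$ coordinates of the module returned by the inductive call; this projection is a $\ZZ$-module and its basis is extracted by Hermite normal form.

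The $P$-monomial case is where the real work sits and is the main obstacle. Writing $\alpha=\sigma(t_i)/t_i\in[\FF^*]_{\FF}^{\HH_{i-1}}$ (the basic property) and decomposing $f_j=f'_j\,t_i^{n_j}$ with $f'_j\in[\FF^*]_{\FF}^{\HH_{i-1}}$ and $n_j\in\ZZ$, one uses the structural fact that a solution $g\in\HH_i\setminus\{0\}$ of the first-order equation is a unit of $\HH_{i-1}[t_i,t_i^{-1}]$ of the form $g=g'\,t_i^k$ with $g'\in\HH_{i-1}^*$ and $k\in\ZZ$. Substituting into $\sigma(g)=\prod f_j^{m_j}\,g$ and comparing $t_i$-degrees yields the linear constraint $\sum_j m_j\,n_j=0$, while the $t_i^k$-coefficient equation gives $\sigma(g')=\bigl(\prod(f'_j)^{m_j}\bigr)\alpha^{-k}\,g'$. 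Thus $\vect{m}\in M(\vect{f},\HH_i)$ iff there exists $k\in\ZZ$ with $(\vect{m},-k)\in M'$ and $\sum m_j n_j=0$, where $M':=M((f'_1,\dots,f'_d,\alpha),\HH_{i-1})$. A basis of $M'$ is obtained from the inductive hypothesis (after enlarging the tuple by $\alpha$); intersecting $M'$ with the hyperplane $\{\,(\vect{m},k)\in\ZZ^{d+1}:\sum m_j n_j=0\,\}$ and then projecting onto the first $d$ coordinates are both standard lattice operations over $\ZZ$. The crux—and the step I expect to be the main obstacle to write out carefully—is precisely this $P$-case: lifting the auxiliary parameter $k$ into the tuple passed to the recursive oracle and then correctly coupling the lattice intersection with the subsequent projection back to $\ZZ^d$. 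Once that bookkeeping is set up, everything else reduces to routine Hermite/Smith normal form computations.
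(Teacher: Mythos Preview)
Your argument is correct. The paper itself does not give a self-contained proof but simply cites Lemma~2.6 and Theorems~2.22--2.23 of~\cite{DR1}; your induction along the tower with the case split into $S$-, $A$-, and $P$-monomials, together with the semiconstant/unit property you invoke for Part~(1), is precisely the content of those references, so your proposal reconstructs the same proof the paper defers to.
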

\begin{proof}
Part (1) follows by Lemma~2.6 and Theorem~2.22 of~\cite{DR1} and part (2) by~\cite[Theorem~2.23]{DR1}.
\end{proof}

In other words, we can apply Theorem~\ref{Thm:PiTest} to test if a basic $P$-extension over $\FF$ is a \piE-extension if one can compute a basis of $M(\vect{h},\FF)$ in a difference field $\dfield{\FF}{\sigma}$. In particular, using the algorithms from~\cite{Karr:81} this is possible if $\dfield{\FF}{\sigma}$ is a \pisiE-field over $\KK$ where the constant field satisfies certain algorithmic properties.

\begin{definition}
A field $\KK$ is called \emph{$\sigma$-computable} if the following holds:
\begin{enumerate}
\item One can factorize multivariate polynomials over $\KK$;
\item given $(f_1,\dots,f_d)\in(\KK^*)^d$ one can compute for $\{(z_1,\dots,z_d)\in\ZZ^d\mid f_1^{z_1}\dots f_d^{z_d}=1\}$ a $\ZZ$-basis;
\item one can decide if $c\in\KK$ is an integer.
\end{enumerate}
\end{definition}

More precisely, the following holds if $\dfield{\FF}{\sigma}$ is a \pisiE-field over a $\sigma$-computable constant field; special cases are $\rGG$, $\bGG$ or $\mGG$ where $\KK$ is $\sigma$-computable.

\begin{corollary}\label{Cor:MPiSiRPiSi}
Let $\dfield{\EE}{\sigma}$ be a basic \rpisiE-extension of a \pisiE-field $\dfield{\FF}{\sigma}$ over $\KK$. If $\KK$ is $\sigma$-computable, one can compute a basis of $M(\vect{h},\EE)$ for any $\vect{h}\in([\FF^*]_{\FF}^{\EE})^d$ with $d\geq1$. This in particular is the case, if $\KK=\calA(y_1,\dots,y_o)$ is a rational function over an algebraic number field $\calA$.
\end{corollary}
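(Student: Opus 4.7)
The plan is to reduce the general problem for the \rpisiE-tower $\dfield{\EE}{\sigma}$ down to the ground field $\FF$ via Theorem~\ref{Thm:SolveProblemM}, then solve the problem in the \pisiE-field $\dfield{\FF}{\sigma}$ by invoking Karr's algorithm, and finally verify the three computability conditions for $\KK=\calA(y_1,\dots,y_o)$.

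First, for $\vect{h}\in([\FF^*]_{\FF}^{\EE})^d$ we apply Theorem~\ref{Thm:SolveProblemM}(2): this reduces the task of computing a $\ZZ$-basis of $M(\vect{h},\EE)$ to the task of computing a $\ZZ$-basis of $M(\vect{h}',\FF)$ for vectors $\vect{h}'\in(\FF^*)^m$ that are produced by the reduction procedure. Thus it suffices to treat the ground-field case. Here $\dfield{\FF}{\sigma}$ is by assumption a \pisiE-field over $\KK$, hence built up from $\dfield{\KK}{\id}$ by a tower of \piE- and \sigmaE-field monomials. For this setting Karr's algorithm~\cite{Karr:81,Karr:85} solves precisely the problem of computing a basis of $M(\vect{h}',\FF)$, provided its three algorithmic prerequisites on the constant field $\KK$ are available: polynomial factorization, the computation of a $\ZZ$-basis of the multiplicative relation lattice of finitely many nonzero elements, and a decision procedure for $\ZZ\subseteq\KK$. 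These are exactly the requirements encoded in the definition of $\sigma$-computability, so the hypothesis on $\KK$ is precisely what is needed to make Karr's algorithm effective.

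For the second claim, take $\KK=\calA(y_1,\dots,y_o)$ with $\calA$ an algebraic number field. Condition~(3) of $\sigma$-computability is trivial, since one can test whether a rational function in $y_1,\dots,y_o$ equals a constant in $\calA$ and then whether that constant is an integer. Condition~(1), the factorization of multivariate polynomials over $\KK$, reduces by Gauss's lemma to factorization over $\calA[y_1,\dots,y_o]$, and thence (by standard effectivity for number fields) to factorization of univariate polynomials over $\calA$, which is algorithmic. Condition~(2) is the multiplicative-relations problem in $\calA(y_1,\dots,y_o)^*$: given $f_1,\dots,f_d\in\KK^*$, determine $\{(z_1,\dots,z_d)\in\ZZ^d\mid f_1^{z_1}\cdots f_d^{z_d}=1\}$. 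Writing each $f_i$ as $c_i\prod p_j^{e_{i,j}}$ with $c_i\in\calA^*$ and $p_j$ the irreducible factors occurring in the $f_i$'s (computable by~(1)), the exponent conditions on the $p_j$ reduce the problem to a linear integer-kernel computation, leaving the constraint $\prod c_i^{z_i}=1$ in $\calA^*$. The latter is the classical problem of computing multiplicative relations in a number field, which is solved by standard algorithms based on the Dirichlet unit theorem and $S$-unit computations.

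I do not anticipate a single hard obstacle so much as the care required to stitch the ingredients together; the most delicate step is the multiplicative-relations computation in $\calA^*$ in~(2), where one relies nontrivially on number-theoretic algorithms for units and class groups. Everything else is either direct (Theorem~\ref{Thm:SolveProblemM}, Karr's algorithm) or standard (factorization).
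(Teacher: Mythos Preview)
Your proposal is correct and follows essentially the same approach as the paper: reduce from $\EE$ to $\FF$ via Theorem~\ref{Thm:SolveProblemM}(2), invoke Karr's algorithm in the \pisiE-field (the paper cites \cite[Theorem~9]{Karr:81}), and then verify $\sigma$-computability of $\KK=\calA(y_1,\dots,y_o)$. The only notable difference is in the last step: where you sketch the multiplicative-relations computation in $\calA^*$ via Dirichlet's unit theorem and $S$-units, the paper simply cites \cite[Thm.~3.5]{Schneider:05c}, which in turn rests on Ge's algorithm~\cite{Ge:93} for multiplicative relations among algebraic numbers---this is the standard effective reference for exactly the delicate step you flagged.
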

\begin{proof}
If $\KK$ is $\sigma$-computable, it follows by~\cite[Theorem~9]{Karr:81} that one can compute a basis of $M(\vect{f},\FF)$ for any $\vect{f}\in(\FF^*)^m$ with $m\geq1$. Thus by part~2 of Theorem~\ref{Thm:SolveProblemM} one can compute a basis of $M(\vect{h},\EE)$ for any $\vect{h}\in([\FF^*]_{\FF}^{\EE})^d$ with $d\geq1$. In particular, it follows by~\cite[Thm.~3.5]{Schneider:05c} (based on the algorithm of~\cite{Ge:93}) that a rational function field over an algebraic number field is $\sigma$-computable.
\end{proof}

\begin{remark}\label{Remark:ProdSigma}
(1) By~\cite[Theorem~2.26]{DR1} Corollary~\ref{Cor:MPiSiRPiSi} is also valid for $\vect{f}\in(\{\FF^*\}_{\FF}^{\EE})^d$ in simple \rpisiE-extension defined over a \pisiE-field. As elaborated in~\cite[Sect.~2.3.3]{DR1} (using ideas of~\cite{Schneider:06d}) it holds even in the more general setting that $\dfield{\FF}{\sigma}$ is a \pisiE-field extension of a difference field $\dfield{\FF_0}{\sigma}$ where all roots of unity in $\FF$ are constants and $\dfield{\FF_0}{\sigma}$ is $\sigma$-computable; for the definition of these algorithmic properties we refer to~\cite[Def.~1]{Schneider:06d}. Further aspects can be also found in~\cite{AbraBron2000}.
In particular, all these properties hold, if $\dfield{\FF_0}{\sigma}$ is a free difference field~\cite{Schneider:06d,Schneider:06e} (covering generic/unspecified sequences $X_n$) or is built by radical extensions~\cite{Schneider:07f} (covering objects like $\sqrt{n}$). For the underlying implementations enhancing \texttt{Sigma} we refer to~\cite{Schneider:06d,Schneider:07f}.\\
(2) Within \texttt{Sigma} the case of \pisiE-fields is implemented properly where the constant field is given by a rational function field over the rational numbers. In parts also algebraic numbers work, but here we rely on sub-optimal routines of Mathematica. 
\end{remark}

\subsubsection{Algorithmic representations}

In this section we present several algorithms that provide proofs of Theorems~\ref{Thm:SigmaReduce1} and~\ref{Thm:SigmaReduceNested} if one restricts to the cases $\Prod_1(\GG)$ with $\GG\in\{\rGG,\bGG,\mGG\}$ or $\Prod(\rGG)$, i.e., if one drops expressions where sums arise. More precisely, we will introduce several solutions of step (1) for our method \textsf{SigmaReduce}. 

First, we treat the case $\Prod_1(\GG)$. In this setting (where also sums can arise) single-basic \rpisiE-extensions, a subclass of basic \rpisiE-extensions, are sufficient.

\begin{definition}
	An \rpisiE-extension $\dfield{\EE}{\sigma}$ of a difference ring $\dfield{\AR}{\sigma}$ with $\EE=\AR\lr{t_1}\dots\lr{t_e}$ is called \emph{single-basic} if for any $R$-monomial $t_i$ we have $\frac{\sigma(t_i)}{t_i}\in\const{\AR}{\sigma}^*$ and for any $P$-monomial $t_i$ we have $\frac{\sigma(t_i)}{t_i}\in\AR^*$.
\end{definition}

We will present the following two main strategies.

\medskip

\noindent$\bullet$ \textit{Optimal product representations.} In~\cite[Theorem~69]{SchneiderProd:20} we showed that one can construct \rpiE-extensions with minimal extension degree and minimal order.

\begin{theorem}\label{Thm:OptimalProdExt}
	Let $\GG\in\{\rGG,\bGG,\mGG\}$ and $A_1,\dots,A_u\in\Prod_1(\GG)$. Then there is a single-basic \rpiE-extension $\dfield{\EE}{\sigma}$ of $\dfield{\GG}{\sigma}$ with $\EE=\GG\lr{t_1}\dots\lr{t_e}$ together with an evaluation function $\ev$ and $o$-function $L$ (based on the construction given in Lemmas~\ref{Lemma:EvConstruction} and~\ref{Lemma:ZConstruction}) with the following properties:
	\begin{enumerate}
		\item $A_1,\dots,A_u$ are modeled by $a_1,\dots,a_u\in\EE$, i.e., for all $1\leq i\leq u$ we have~\eqref{Equ:ModelAi} for some explicitly given $\delta_i\in\NN$ with $\delta_i\geq L(a_i)$.
		\item There is at most one \rE-monomial in $\EE$. This implies that the order $\lambda$ is minimal among all such extensions in which one can model $a_1,\dots,a_u$.
		\item The number of \piE-monomials in $\EE$ is minimal among all such extensions in which one can model $a_1,\dots,a_u$.
	\end{enumerate}
If the constant field of $\dfield{\KK}{\sigma}$ is a rational function field over an algebraic number field, then the above components are computable.
\end{theorem}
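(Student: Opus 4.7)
The plan is to reduce each $A_i \in \Prod_1(\GG)$ to its atomic ingredients, analyze the multiplicative relations among them via the module $M(\vect{f},\GG)$, and then build the extension in two stages: first a single $R$-monomial absorbing all roots of unity, then a minimal tower of $P$-monomials corresponding to a free basis of the multiplicative dependencies. First I would parse the input and collect all occurring atomic factors of the form $\RPow(r_j)$ with $r_j \in \cRR$ and $\Prod(\ell_j, f_j)$ with $f_j \in \GG^*$. For the root-of-unity part, set $\lambda := \lcm(\ord(r_1),\dots,\ord(r_s))$ (with $\lambda=1$ meaning no $R$-monomial is needed); choosing a primitive $\lambda$th root of unity $\zeta \in \KK$ (extending the constant field to $\KK'$ only if necessary to contain $\zeta$) gives integers $e_j$ with $r_j = \zeta^{e_j}$, and one adjoins a single $A$-monomial $\rho$ with $\sigma(\rho) = \zeta\,\rho$, $\rho^{\lambda}=1$. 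Any other $A$-monomial with primitive order $\lambda'$ must satisfy $\lambda' \mid \lambda$ (otherwise some $r_j$ cannot be represented), so one $R$-monomial of order $\lambda$ is both sufficient and order-optimal.

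Next, for the multiplicative data $\vect{f} = (f_1,\dots,f_m) \in (\GG^*)^m$, I would compute a $\ZZ$-basis of $M(\vect{f},\GG)$ using Corollary~\ref{Cor:MPiSiRPiSi} (applicable because $\KK$ is a rational function field over an algebraic number field, hence $\sigma$-computable). Performing Smith-normal-form reduction on this basis yields a unimodular substitution transforming $\vect{f}$ into a new tuple $(\sigma(h_1)/h_1 \cdot g_1,\dots, \sigma(h_p)/h_p \cdot g_p,\, \sigma(h_{p+1})/h_{p+1},\dots,\sigma(h_m)/h_m)$, where $h_i \in \GG^*$ and the remaining $g_1,\dots,g_p$ are multiplicatively independent modulo $\{\sigma(g)/g \mid g \in \GG^*\}$. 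In other words, $M((g_1,\dots,g_p),\GG) = \{\vect{0}\}$, so by Theorem~\ref{Thm:PiTest} adjoining $P$-monomials $t_1,\dots,t_p$ with $\sigma(t_i)/t_i = g_i$ yields a \piE-extension. Combined with the $A$-monomial $\rho$, the overall tower $\EE = \GG[\rho][t_1,t_1^{-1}]\dots[t_p,t_p^{-1}]$ is a single-basic \rpiE-extension.

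Modelling the inputs is then mechanical: each atomic product $\Prod(\ell_j, f_j)$ is rewritten, via the chosen unimodular substitution, in the form $c_j\, t_1^{a_{j,1}}\cdots t_p^{a_{j,p}}\, h_j(n+1)$ (with $c_j$ a constant correction absorbing the lower bound), and each $\RPow(r_j)$ becomes $\rho^{e_j}$. Assembling the resulting monomials according to the $\odot$- and $\opower$-structure of each $A_i$ produces $a_i \in \EE$. The evaluation function $\ev$ and $o$-function $L$ are built stepwise via Lemmas~\ref{Lemma:EvConstruction} and~\ref{Lemma:ZConstruction}, starting from the ones on $\GG$ of Example~\ref{Exp:RationalDField1}; the bound $\delta_i \geq L(a_i)$ is then read off from the explicit iterative definition in~\eqref{Equ:LDef}, and it is straightforward to verify~\eqref{Equ:ModelAi} by unwinding the evaluations.

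The hard part is the minimality of the number $p$ of \piE-monomials, which I expect to be the main obstacle. The idea is: in any competing single-basic \rpiE-extension $\dfield{\EE'}{\sigma}$ of $\dfield{\GG}{\sigma}$ in which all $A_i$ can be modelled, each $\Prod(\ell_j,f_j)$ must be representable as $\tilde c_j\, s_1^{b_{j,1}}\cdots s_{p'}^{b_{j,p'}}\, \tilde h_j(n+1)$ for some $\tilde h_j \in \GG^*$ and \piE-monomials $s_1,\dots,s_{p'}$, which in turn forces each $f_j$ to equal a corresponding monomial in $\tfrac{\sigma(s_i)}{s_i}$ times a telescoping factor. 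Pulling this statement back to $\GG$ through the injectivity granted by Theorem~\ref{Thm:InjectiveProp} shows that $M(\vect{f},\GG)$ has rank at least $m - p'$, so $p' \geq p$; that is, $p$ is intrinsically determined as $m - \operatorname{rank} M(\vect{f},\GG)$ after correcting for the root-of-unity part. Combining this with the order-minimality of $\rho$ established earlier completes the theorem, and the algorithmic computability of every step follows from $\sigma$-computability of $\KK$ and Corollary~\ref{Cor:MPiSiRPiSi}.
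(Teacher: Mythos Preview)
The paper does not actually prove this theorem; it states the result and defers to \cite[Theorem~69]{SchneiderProd:20} for the argument. So there is nothing in the paper to compare your outline against directly. That said, your sketch has two genuine gaps.

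\textbf{Minimality is argued at the wrong granularity.} You claim that in any competing extension modelling all $A_i$, each atomic factor $\Prod(\ell_j,f_j)$ must itself be representable there. But the theorem only demands that the $A_i$ be modelled, not their syntactic building blocks. For $A_1=\Prod(1,x)\odot\Prod(1,1/x)\in\Prod_1(\QQ(x))$ one has $A_1(n)=1$ for all $n\geq1$, so the minimal extension is $\QQ(x)$ itself with zero \piE-monomials, whereas your procedure collects $f_1=x$, $f_2=1/x$, computes $\operatorname{rank}M((x,1/x),\QQ(x))=1$, and adjoins one \piE-monomial. The quantity $m-\operatorname{rank}M(\vect f,\GG)$ bounds what is needed to model the atomic products, not the $A_i$. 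The correct invariant is the rank (and torsion) of the subgroup of $\GG^*/\{\sigma(g)/g:g\in\GG^*\}$ generated by the shift quotients of the power-product monomials that survive with nonzero coefficient in the assembled $a_i$; minimality then comes from the classification of hypergeometric elements in \rpiE-extensions over a field.

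\textbf{Hidden root-of-unity content.} Your Smith-normal-form step asserts that a unimodular change splits $\vect f$ into free generators $g_i$ and pure telescopers $\sigma(h_j)/h_j$. This is valid only when $\ZZ^m/M(\vect f,\GG)$ is torsion-free. With $A_1=\Prod(1,-x)$ and $A_2=\Prod(1,x)$ one gets $M((-x,x),\QQ(x))=\ZZ\cdot(2,-2)$, hence $\ZZ^2/M\cong\ZZ\oplus\ZZ/2\ZZ$; after the unimodular change one generator becomes $-1$, which is neither free nor a telescoper (though its square is). This torsion forces an \rE-monomial of order $2$ even though no $\RPow$ appears in the inputs, so your construction, which reads the \rE-order solely from the explicit $\RPow(r_j)$, cannot model both $A_1$ and $A_2$ with a single \piE-monomial. (Also, the divisibility in your order-minimality argument is backwards: you need $\lambda\mid\lambda'$, not $\lambda'\mid\lambda$.)
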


\begin{example}\label{Exp:OptimalProdRep}
For the following products in $\Prod_1(\QQ[\iiota](x)$ with the imaginary unit $\iiota$:
\begin{align*}
A_1&=\Prod\big(1,\tfrac{-13122 x (1+x)}{(3+x)^3}\big),&
A_2&=\Prod\big(1,\tfrac{26244 x^2 (2+x)^2}{(3+x)^2}\big),\\
A_3&=\Prod\big(1,\tfrac{\iiota k (2+x)^3}{729 (5+x)}\big),&
A_4&=\Prod\big(1,\tfrac{-162 x (2+x)}{5+x}\big),
\end{align*}
we compute the alternative expressions
$B_1=\tfrac{5 (1+x)^2 (2+x)^5 (3+x)^8}{52488 (4+x) (5+x)} T_1
T_2 T_3^{-2}$,
$B_2=\tfrac{(4+x)^2 (5+x)^2}{400} T_2^2$,
$B_3=\tfrac{2754990144 (4+x)^2 (5+x)^2}{25 (1+x)^4 (2+x)^{10} (3+x)^{16}} T_3^3$ and
$B_4=T_2$ in terms of the $\sigma$-reduced set $W=\{T_1,T_2,T_3\}$ with
$$T_1=\RPow(-1),\quad T_2=\Prod(1,\tfrac{-162 x (2+x)}{5+x}),\quad T_3=\Prod(1,\tfrac{-\iiota (3+x)^6}{9 x (1+x)^2 (2+x) (5+x)});$$
internally, $T_1$ is modeled by an \rE-monomial of order $2$ and $T_2,T_3$ are modeled by two \piE-monomials. Details on this construction are given in~\cite[Ex.~70]{SchneiderProd:20}.
\end{example}

We remark that this optimal representation has one essential drawback: if further products have to be treated in a later situation, the existing difference ring cannot be reused, but a completely new difference ring has to be designed. 

\medskip

\noindent$\bullet$ \textit{Extension stable representations for completely factorizable constant fields.}
In the following we will follow another approach: instead of computing the smallest ring in which one can model a finite set of single nested products, we design a difference ring where the multiplicands are as small as possible such that the constructed difference rings are $\Prod_1(\GG)$-extension-stable. In order to accomplish this task, we will restrict the constant field $\KK$ further as follows.

A \emph{ring $R$ is called completely factorizable} if $R$ is a unique factorization domain (UFD) and all units in $R$ are roots of unity. In particular, any element $a\in R\setminus\{0\}$ can be written in the form
$a=u\,a_1^{n_1}\dots a_l^{n_l}$ with a root of unity $u$, $n_1,\dots,n_l\in\NNP$ and $a_1,\dots,a_l\in R$ being coprime irreducible elements.
In addition, a \emph{field $K$ is called completely factorizable} if it is the quotient field of a completely factorizable ring $R$. In such a field any element $a\in K^*$ can be written in the form $a=u\,a_1^{n_1}\dots a_l^{n_l}$ with a root of unity $u$,  $n_1,\dots,n_l\in\ZZ\setminus\{0\}$ and $a_1,\dots,a_l\in R$ being coprime irreducible elements. We call $K$ \emph{completely factorizable of order $\lambda\in\NN$}, if the set of roots of unity is finite and the maximal order is $\lambda$. We say that \emph{complete factorizations are computable over such a field $K$} if for any rational function from $K(x_1,\dots,x_r)$ a complete factorization can be computed.

The following lemma allows to lift the property of completely factorizable rings.

\begin{lemma}
If a ring (resp.\ field) $\AR$ is completely factorizable, the polynomial ring $\AR[x_1,\dots,x_r]$ (resp.\ rat.\ function field $\AR(x_1,\dots,x_r)$) is completely factorizable.	
\end{lemma}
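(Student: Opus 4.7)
The plan is to reduce both assertions to the classical fact that polynomial rings over UFDs are again UFDs (Gauss's lemma). I would argue by induction on $r$, writing $\AR[x_1,\dots,x_r]=\AR[x_1,\dots,x_{r-1}][x_r]$ and similarly passing to fraction fields in the field case, so that it suffices to treat $r=1$.

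For the ring statement, let $\AR$ be completely factorizable, i.e., a UFD whose units are all roots of unity. By Gauss's lemma, $\AR[x]$ is again a UFD. Moreover, since $\AR$ is an integral domain, a straightforward leading-coefficient argument gives $\AR[x]^{*}=\AR^{*}$ (a polynomial of positive degree cannot be invertible in an integral domain), so every unit of $\AR[x]$ is already a unit of $\AR$, and hence a root of unity. This yields the base case, and iterating $r$ times shows that $\AR[x_1,\dots,x_r]$ is completely factorizable.

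For the field statement, suppose $\AR$ is a completely factorizable field, so by definition $\AR=\operatorname{Frac}(R)$ for some completely factorizable ring $R$. Applying the ring case to $R$, the polynomial ring $R[x_1,\dots,x_r]$ is completely factorizable. Its field of fractions is
\[
\operatorname{Frac}(R[x_1,\dots,x_r])=\operatorname{Frac}(R)(x_1,\dots,x_r)=\AR(x_1,\dots,x_r),
\]
which exhibits $\AR(x_1,\dots,x_r)$ as the fraction field of a completely factorizable ring and hence proves the claim.

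The only nontrivial input is Gauss's lemma; the preservation of units $\AR[x]^{*}=\AR^{*}$ and the identification of iterated fraction fields are standard. I therefore do not foresee any real obstacle; if one also wished to track the order $\lambda$ (when $\AR$ is completely factorizable of finite order $\lambda$), the equality $\AR[x_1,\dots,x_r]^{*}=\AR^{*}$ immediately propagates the same $\lambda$, so the refined version requires no extra work.
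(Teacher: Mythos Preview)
Your argument is correct and complete. The paper actually states this lemma without proof, treating it as a standard fact; your write-up supplies exactly the expected justification via Gauss's lemma, the identification $\AR[x]^{*}=\AR^{*}$ over an integral domain, and passage to fraction fields, so there is nothing to compare against and nothing to fix.
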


\begin{example}
The ring $\ZZ$ and the Gaussian ring $\ZZ[\iiota]$ with the roots of unity $1,-1$ and $1,-1,\iiota,-\iiota$, respectively, are examples of completely factorizable rings. Thus $\ZZ$, $\ZZ[\iiota]$ and, in particular $\ZZ[x_1,\dots,x_r]$ and $\ZZ[\iiota][x_1,\dots,x_r]$ are completely factorizable rings. Furthermore, their quotient fields $\QQ$, $\QQ[\iiota]$, $\QQ(x_1,\dots,x_r)$ and $\QQ[\iiota](x_1,\dots,x_r)$ are completely factorizable of order $2$ or $4$, respectively. In particular, one can compute complete factorizations over $\QQ$ and $\QQ[\iiota]$.
\end{example}

\begin{definition}
Let $\FF$ be the quotient field of a completely factorizable ring $R$ of order $\lambda$. 
A single-basic \rpisiE-extension $\dfield{\EE}{\sigma}$ of $\dfield{\FF}{\sigma}$ with $\EE=\FF\lr{t_1}\dots\lr{t_e}$ is called \emph{completely factorized} if there is at most one \rE-monomial $\rho$ with $\frac{\sigma(\rho)}{\rho}\in(\const{\FF}{\sigma})^*$ of order $\lambda$ and for any \piE-monomial $t_i$ we have that $\frac{\sigma(t_i)}{t_i}\in R$ is irreducible.
\end{definition}

We are now ready to state the following result implemented within \texttt{Sigma}; the case $\rGG$ is covered by \cite[Theorem~2]{DR2};  the extension to $\bGG$ and $\mGG$ is straightforward.

\begin{theorem}\label{Thm:ExtensionStableProd}
	Let $\GG\in\{\rGG,\bGG,\mGG\}$ where $\KK$ is completely factorizable of order $\lambda$. Then the class of completely factorized \rpiE-extensions over $\dfield{\GG}{\sigma}$ is $\Prod_1(\GG)$-extension-stable. More precisely,
	let $\dfield{\HH}{\sigma}$ be a completely factorized \rpiE-extension of $\dfield{\GG}{\sigma}$ equipped with an evaluation function $\ev$ an $o$-function $L$. Let $A\in\Prod_1(\GG)$. Then
	there is an \rpiE-extension $\dfield{\EE}{\sigma}$ of $\dfield{\HH}{\sigma}$ with an extended evaluation function $\ev$ and $o$-function $L$ (using Lemmas~\ref{Lemma:EvConstruction} and~\ref{Lemma:ZConstruction}) with the following properties:
	\begin{enumerate}
		\item $\dfield{\EE}{\sigma}$ is a completely factorizable \rpiE-extension of $\dfield{\GG}{\sigma}$.
		\item $A$ is modeled by $a\in\EE$, i.e., $A(n)=\ev(a,n)$ for all $n\geq\delta$ for some $\delta\in\NN$.
	\end{enumerate}
	If complete factorization over $\KK$ can be computed, all components are computable.
\end{theorem}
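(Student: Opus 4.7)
The idea is to parse $A \in \Prod_1(\GG)$, apply complete factorization in $\KK$ to each multiplicand, and adjoin to $\HH$ only those \piE- and \rE-monomials that are genuinely new while reusing the existing ones. Concretely, write $A$ in reduced representation as a product of $\RPow(u)$'s and $\Prod(l_i,f_i)$'s with $u$ roots of unity and $f_i \in \GG^*$. Completely factor $f_i = \zeta_i\,p_{i,1}^{e_{i,1}}\cdots p_{i,r_i}^{e_{i,r_i}}$ with $\zeta_i$ a root of unity of order dividing $\lambda$ and the $p_{i,j}$ irreducible in the underlying UFD of $\KK[x,x_1,\dots,x_v]$. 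By the multiplicative splitting of $\Prod$, each $\Prod(l_i,f_i)$ decomposes into an \rE-contribution together with the product monomials $\Prod(l_i,p_{i,j})^{e_{i,j}}$, up to an explicit rational correction in $\GG^*$ coming from adjusting the lower bound.

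Then I would build the extension in two stages. For the root-of-unity part, if $\HH$ already contains an \rE-monomial $\rho$ of order $\lambda$, reuse it and express each $\zeta_i$ as a power of $\rho$; otherwise adjoin one such $\rho$. For the irreducible factors, group the $p_{i,j}$ into shift-equivalence classes; a class already represented in $\HH$ lets me rewrite $\Prod(l,p_{i,j})$ as a rational multiple of a shifted power of the existing \piE-monomial via a finite telescoping that stays in $\HH$. For each class not yet represented, pick a representative $p$ and adjoin a new \piE-monomial $t$ with $\sigma(t)/t = p$, extending $\ev$ and $L$ via Lemmas~\ref{Lemma:EvConstruction} and~\ref{Lemma:ZConstruction}. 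The resulting tower $\EE$ is completely factorized by construction, and the explicit lower-bound bookkeeping built into Lemma~\ref{Lemma:SwitchAToAPS} together with Lemma~\ref{Lemma:ShortCutforL} yields the $\delta \in \NN$ with $A(n) = \ev(a,n)$ for $n \geq \delta$.

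The main obstacle is to verify that this enlarged tower $\EE$ is still an \rpiE-extension, i.e.\ that no new constants appear. By Theorem~\ref{Thm:PiTest} applied to the full vector $\vect f$ of multiplicands of all \piE-monomials in $\EE$ (old and new), this amounts to proving $M(\vect f,\EE) = \{\vect 0\}$; note that hypothesis~\eqref{Equ:SemiConstantProp} is automatic because $\EE$ is built above a field. Suppose, for contradiction, $\sigma(g) = f_1^{z_1}\cdots f_d^{z_d}\,g$ for some nonzero $g \in \EE$ and some $\vect z \neq \vect 0$. Writing $g$ in reduced form along the tower and pulling out the factorization data in the UFD, one matches irreducible factors on both sides by unique factorization. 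Since by construction the newly adjoined $f_i$'s are pairwise shift-inequivalent irreducibles, and are not shift-equivalent to any multiplicand of \piE-monomials already present in $\HH$, they cannot arise in $\sigma(g)/g$, forcing the corresponding $z_i$ to vanish; the remaining relation then sits within $\HH$, where the \piE-property already forces the rest of the $z_i$ to be zero, a contradiction. Finally, computability of the whole construction follows because complete factorization over $\KK$ makes factoring, shift-equivalence tests (via symbolic substitution $x \mapsto x+k$ and coefficient comparison), and the telescoping corrections effective, while the evaluation and $o$-function data are delivered by Lemmas~\ref{Lemma:EvConstruction} and~\ref{Lemma:ZConstruction}.
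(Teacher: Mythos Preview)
The paper does not supply its own proof of this theorem; it cites \cite[Theorem~2]{DR2} for $\GG=\rGG$ and asserts that the multibasic cases follow analogously. Your outline reproduces the strategy underlying that reference: completely factor each multiplicand, split the irreducible factors into shift-equivalence classes, reuse or adjoin one \piE-monomial per class, and absorb all root-of-unity content into a single \rE-monomial of order~$\lambda$. So at the level of approach you are aligned with what the paper invokes.

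Two technical points in your write-up need tightening. First, your application of Theorem~\ref{Thm:PiTest} is misstated: the criterion there is $M(\vect f,\HH_0)=\{\vect 0\}$ with $g$ ranging over the \emph{base} ring $\HH_0$ below the \piE-tower, not $M(\vect f,\EE)$ with $g\in\EE$. More importantly, your claim that hypothesis~\eqref{Equ:SemiConstantProp} is ``automatic because $\EE$ is built above a field'' breaks down once the \rE-monomial $\rho$ sits in the base: $\GG[\rho]\cong\GG[X]/(X^\lambda-1)$ has zero divisors, so~\eqref{Equ:SemiConstantProp} is not immediate there. The clean fix (and what is done in~\cite{DR1,DR2}) is to verify the \piE-property of the full product tower over the field $\GG$, where~\eqref{Equ:SemiConstantProp} holds trivially and your unique-factorization/shift-inequivalence argument goes through, and then invoke a reordering result from~\cite{DR1} to conclude that inserting the single \rE-monomial preserves the \rpiE-structure. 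Second, you skip the verification that adjoining $\rho$ is itself an \rE-extension of $\HH$ when $\HH$ had none; this requires that no $g\in\HH\setminus\{0\}$ satisfies $\sigma(g)=\alpha^m g$ for a nontrivial power of the primitive $\lambda$-th root~$\alpha$, which again follows from the shift-structure of irreducibles in $\GG$ but should be argued explicitly.
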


\begin{example}
Given the products~\eqref{Exp:OptimalProdRep}, we can split the multiplicands into irreducible factors and get (after some technical details) the product representations
$B_1=\frac{216 T_1^2 T_2
		T_3^8}{(n+1)^2 (n+2)^3 (n+3)^3
		T_4}$, 
	$B_2=\frac{9 T_2^2
		T_3^8 T_4^2}{(n+3)^2}$, 
	$B_3=\frac{15
		(n+1)^2 (n+2)^2 T_1^2 T_4^3}{(n+3) (n+4) (n+5)
		T_3^6}$ and $B_4=\frac{60
		T_1^2 T_2 T_3^4
		T_4}{(n+3) (n+4) (n+5)}$
in terms of the $\sigma$-reduced set $W=\{T_1,T_2,T_3,T_4\}$ with
$$T_1=\RPow(\iiota),\quad T_2=\Prod(1,2),\quad T_3=\Prod(1,3),\quad
T_4=\Prod(1,x);$$
internally, $T_1$ is modeled by an \rE-monomial of order $4$, and $T_2,T_3,T_4$ are modeled by three \piE-monomials.
\end{example}

It would be interesting to see extension-stable difference ring constructions that work in more general settings.  A first step in this direction has been elaborated in~\cite[Theorem~6.2]{OS:18}. Here a toolbox (implemented within \texttt{NestedProducts}) is summarized where one tries to follow the above construction of completely factorized \rpiE-extensions as much as possible. In this way, a  modification of the existing \rpiE-extension will arise only for products whose multiplicands are taken from an algebraic number field.

\medskip

\noindent$\bullet$ \textit{Representation of nested products.}
We obtained the first algorithm in~\cite[Theorem~9]{OS:20} (implemented in \texttt{NestedProducts}) to represent products from $\Prod(\rGG)$ fully algorithmically in a basic \rpiE-extension. This result can be stated as follows.

\begin{theorem}\label{Thm:NestedProdExt}
	Let $\GG=\rGG=\KK(x)$ where $\KK={\cal A}(y_1,\dots,y_{o})$ with $o\geq0$ is a rational function field over an algebraic number field ${\cal A}$. Then for $A_1,\dots,A_u\in\Prod(\GG)$ one can compute a basic \rpiE-extension $\dfield{\EE}{\sigma}$ of $\dfield{\GG'}{\sigma}$ with an evaluation function $\ev$ and $o$-function $L$ (using Lemmas~\ref{Lemma:EvConstruction} and~\ref{Lemma:ZConstruction}) with the following properties:
	\begin{enumerate}
	\item  The ground field $\GG$ is extended to $\GG'=\KK'(x)$ where $\KK'={\cal A'}(y_1,\dots,y_{o})$ with ${\cal A'}$ being an algebraic field extension of ${\cal A}$.	
		\item Within the \rpiE-monomials in $\dfield{\EE}{\sigma}$ there is at most one \rE-monomial.
		\item $A_1,\dots,A_u$ are modeled by $a_1,\dots,a_u\in\EE$, i.e., for all $1\leq i\leq u$ we have~\eqref{Equ:ModelAi} for some explicitly given $\delta_i\in\NN$ with $\delta_i\geq L(a_i)$.
	\end{enumerate}
\end{theorem}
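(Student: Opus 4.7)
The plan is to adapt the iterative construction of Lemma~\ref{Lemma:SwitchAToAPS} so that at each step one (i) verifies that the adjoined $P$-extension is actually a \piE-extension and (ii) collects all root-of-unity contributions into a single \rE-monomial to be installed at the end. First I would gather all sub-products appearing in $A_1,\dots,A_u$ and order them by nesting depth, processing them innermost first. When we reach a product $\Prod(\lambda,H)$, its multiplicand $H\in\Prod^*(\GG)$ decomposes by definition as a power product $f\odot p_{i_1}^{z_1}\odot\dots\odot p_{i_k}^{z_k}$ with $f\in\GG^*$ and inner products $p_{i_j}$ that have already been modeled. Hence $H$ has a ready model $h=f\,t_{i_1}^{z_1}\cdots t_{i_k}^{z_k}$ in the product group of the currently constructed \rpiE-extension $\dfield{\EE}{\sigma}$, and the intended multiplicand of the new $P$-monomial is $\alpha=\sigma(h)$.

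Before adjoining $\alpha$, I would peel off any root-of-unity factor, writing $\alpha=\zeta\cdot\beta$ where $\zeta$ is a root of unity (belonging to the cyclotomic closure that we are building over $\calA$) and $\beta$ is free of \rE-monomial contributions. The $\zeta^n$ part is absorbed into the root-of-unity bookkeeping, while $\beta\in[\GG^*]_{\GG}^{\EE}$ is a candidate multiplicand of a new \piE-monomial. Next I would apply Theorem~\ref{Thm:PiTest} together with Corollary~\ref{Cor:MPiSiRPiSi} (and its refinement in Remark~\ref{Remark:ProdSigma}) to compute a $\ZZ$-basis of the module of all $(z_1,\dots,z_d,m)\in\ZZ^{d+1}$ with $\sigma(g)=\alpha_1^{z_1}\cdots\alpha_d^{z_d}\beta^m g$ for some $g\in\GG^*$, where $\alpha_1,\dots,\alpha_d$ are the multiplicands of the already installed \piE-monomials. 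If no relation with $m\neq 0$ exists, $\beta$ yields a genuine new \piE-monomial $t$ with $\sigma(t)/t=\beta$; otherwise the relation lets us express the new product, up to an $m$-th root of unity and a factor in $\GG$, in terms of existing monomials. Any new root of unity that surfaces is adjoined to $\calA$ and recorded for the final consolidation.

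After all products are processed, the collected root-of-unity orders $\nu_1,\dots,\nu_s$ are consolidated into a single primitive $\nu$-th root of unity $\zeta$ with $\nu=\mathrm{lcm}(\nu_1,\dots,\nu_s)$. Adjoining $\zeta$ to $\calA$ gives the final algebraic extension $\calA'$ and hence $\GG'=\calA'(y_1,\dots,y_o)(x)$, and a single \rE-monomial $\rho$ with $\sigma(\rho)/\rho=\zeta$ then models every previously collected $\zeta_j^n$ as an appropriate power of $\rho$. The resulting ring is a basic \rpiE-extension of $\dfield{\GG'}{\sigma}$ with at most one \rE-monomial, and since every step fits the template of Lemmas~\ref{Lemma:EvConstruction} and~\ref{Lemma:ZConstruction}, the evaluation function $\ev$, the $o$-function $L$ and explicit bounds $\delta_i\geq L(a_i)$ with $\ev(a_i,n)=A_i(n)$ for $n\geq\delta_i$ follow for free from this iterative construction.

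The hard part will be keeping both the \piE-test effective and the extension basic throughout the recursion. For effectiveness, Corollary~\ref{Cor:MPiSiRPiSi} guarantees that $M(\vect{f},\EE)$ is computable whenever $\vect{f}$ lies in the simple product group over a \pisiE-field whose constant field is a rational function field over an algebraic number field; this property is preserved under the algebraic extensions $\calA\hookrightarrow\calA'$ forced by the root-of-unity splittings, which only requires rerunning Karr's algorithm after finitely many extensions. For basicness, one must ensure that the root-of-unity splitting $\alpha=\zeta\cdot\beta$ is carried out \emph{before} the corresponding $P$-monomial is adjoined, so that $\beta$ lives in the basic (and not merely the simple) product group over $\GG'$; this subtle interplay between the single-\rE-monomial normalization and the basicness of the surrounding \piE-monomials is the step where the construction must be engineered most carefully.
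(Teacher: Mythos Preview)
The paper does not supply a proof here; it simply records the statement and cites \cite[Theorem~9]{OS:20} as the source of the algorithm. So the comparison is really with the construction carried out in that reference, and your outline misses the step that makes that construction work.

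The gap sits in the sentence ``otherwise the relation lets us express the new product, up to an $m$-th root of unity and a factor in $\GG$, in terms of existing monomials.'' Suppose your $M$-module test returns a relation $(z_1,\dots,z_d,m)$ with smallest positive last coordinate $m_0>1$. Then the witness $g$ (which, by the structural result used in the proof of Theorem~\ref{Thm:PiTest}, is a monomial $h\,t_1^{n_1}\cdots t_d^{n_d}$ with $h\in\GG^*$, not merely an element of $\GG^*$ as you wrote) gives you
\[
t^{m_0}\;=\;c\cdot h\cdot t_1^{n_1}\cdots t_d^{n_d}
\]
in the would-be $P$-extension, for some constant $c$. This pins down $t^{m_0}$ inside $\EE$, but not $t$ itself: extracting $t$ would require $m_0$-th roots of the existing \piE-monomials $t_i$ and of the constant $c\cdot h$, none of which live in your current ring. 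An $m$-th root of unity does not repair this; you genuinely need to rebuild the \piE-tower so that the new product becomes an integer power-product of the generators. Concretely, already for single-nested products the pair $4^n$ (installed first) and $2^n$ (installed second) yields $m_0=2$; complete factorization rescues that case, but in the nested setting the multiplicand $\beta\in[\GG^*]_{\GG}^{\EE}$ itself involves the $t_j$, and no simple preprocessing prevents $m_0>1$ from arising.

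What \cite{OS:20} actually does at this point is a global restructuring: one collects the relation lattice, applies a Smith-normal-form type change of basis to the \piE-generators, and propagates this change upward through the nesting (since outer multiplicands reference inner generators). The algebraic extension $\calA\hookrightarrow\calA'$ then comes not only from consolidating roots of unity but also from the $m_0$-th roots of the constants $c$ that appear in the relations above; this is why the theorem only promises an algebraic extension of $\calA$, not a cyclotomic one, whereas your final paragraph asserts $\calA'=\calA(\zeta)$. Your outline captures the easy half of the dichotomy (no relation $\Rightarrow$ adjoin a \piE-monomial) and the bookkeeping for the single \rE-monomial, but the substantive content of the theorem is precisely the $m_0>1$ restructuring step, and that step is absent.
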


\begin{remark}
Theorem~\ref{Thm:OptimalProdExt} holds also for general ground rings $\dfield{\GG}{\sigma}$ with certain algorithmic properties; see~\cite{SchneiderProd:20}. 
Fascinating structural properties of mixed hypergeometric products (and related objects within the differential case) are presented in~\cite{ZimingLi:11}. Further simplification aspects within \pisiE-fields (e.g., finding products where the degrees of the top most sum or product in the numerator and denominator of a multiplicand are minimal) are elaborated in~\cite{Schneider:05c,Petkov:10}. In addition, methods to find algebraic relations of sequences built by products are given in~\cite{Kauers:08,Schneider:10c,Singer:16,SchneiderProd:20,OS:20}.
\end{remark}

\subsection{Representation of sums}\label{Sec:SumCase}

\subsubsection{Algorithmic tests via (parameterized) telescoping}

We will proceed as in the product case.
The additive version of Theorem~\ref{Thm:PiTest}, which is nothing else than parameterized telescoping (see Section~\ref{Sec:ParaTele}), reads as follows.

\begin{theorem}[{\cite[Thm.~7.10]{DR3}}]\label{Thm:SigmaTest}
	Let $\dfield{\EE}{\sigma}$ be an $S$-extension of a difference ring $\dfield{\HH}{\sigma}$ with $\EE=\HH\lr{t_1}\dots\lr{t_d}$ and
	$f_i=\sigma(t_i)-t_i\in\HH$ for $1\leq i\leq d$. If $\KK:=\const{\HH}{\sigma}$ is a field, then the following statements are equivalent:
	\begin{enumerate}
		\item $\dfield{\EE}{\sigma}$ is a \sigmaE-extension of $\dfield{\HH}{\sigma}$, i.e., $\const{\EE}{\sigma}=\const{\HH}{\sigma}$.
		\item There do not exist $g\in\HH$ and $(c_1,\dots,c_d)\in\KK^d\setminus\{\vect{0}\}$ with
		$$\sigma(g)-g=c_1\,f_1+\dots +c_d\,f_d.$$ 
	\end{enumerate}
\end{theorem}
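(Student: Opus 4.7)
The plan is to follow the structure of the proof of Theorem~\ref{Thm:PiTest} but in the additive setting: products $f_1^{z_1}\dots f_d^{z_d}$ become linear combinations $c_1 f_1+\dots+c_d f_d$, homogeneous semi-invariant equations $\sigma(g)=u\,g$ become telescoping equations $\sigma(g)-g=\ldots$, and the exponent vector $(z_1,\dots,z_d)\in\ZZ^d$ becomes a scalar vector $(c_1,\dots,c_d)\in\KK^d$. Throughout, the crucial structural fact is that each $S$-monomial $t_j$ is transcendental over $\HH\lr{t_1}\dots\lr{t_{j-1}}$, so $\EE$ equals the polynomial ring $\HH[t_1,\dots,t_d]$. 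Both implications would be shown by contrapositive.

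For $(1)\Rightarrow(2)$, assume there exist $g\in\HH$ and $(c_1,\dots,c_d)\in\KK^d\setminus\{\vect 0\}$ with $\sigma(g)-g=\sum_j c_j f_j$. Set $h:=g-\sum_j c_j t_j\in\EE$. Using $\sigma(c_j)=c_j$ and $\sigma(t_j)-t_j=f_j$, one computes $\sigma(h)-h=(\sigma(g)-g)-\sum_j c_j f_j=0$, so $h\in\const{\EE}{\sigma}$. If $i$ is maximal with $c_i\neq 0$, then since $t_i$ is transcendental over $\HH\lr{t_1}\dots\lr{t_{i-1}}$, the element $h$ cannot lie in $\HH$; hence $\const{\EE}{\sigma}\supsetneq\const{\HH}{\sigma}$, contradicting~(1).

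For $(2)\Rightarrow(1)$, suppose $\const{\EE}{\sigma}\supsetneq\KK$ and pick $i\in\{1,\dots,d\}$ minimal with $\const{\HH\lr{t_1}\dots\lr{t_i}}{\sigma}\supsetneq\const{\HH\lr{t_1}\dots\lr{t_{i-1}}}{\sigma}$. By this minimality the smaller constant ring coincides with $\KK$, and $t_i$ then fails the standard single-$S$-extension criterion over $\HH[t_1,\dots,t_{i-1}]$, so there exists $h\in\HH[t_1,\dots,t_{i-1}]$ with $\sigma(h)-h=f_i\in\HH$. The main technical obstacle is to descend this polynomial $h$ to an element of $\HH$ together with scalars in $\KK$.

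My plan for the descent is to exploit that each formal partial derivative $\partial_{t_j}$ (for $j<i$) commutes with $\sigma$ on $\HH[t_1,\dots,t_{i-1}]$: since $\sigma(t_j)-t_j=f_j\in\HH$ is independent of $t_1,\dots,t_{i-1}$, a direct expansion yields $\partial_{t_j}\circ\sigma=\sigma\circ\partial_{t_j}$. Applying $\partial_{t_j}$ to $\sigma(h)-h=f_i$ gives $\sigma(\partial_{t_j}h)-\partial_{t_j}h=0$, so $\partial_{t_j}h\in\const{\HH[t_1,\dots,t_{i-1}]}{\sigma}=\KK$ for every $j<i$. Writing $h=\sum_{\vect k}h_{\vect k}\,t_1^{k_1}\dots t_{i-1}^{k_{i-1}}$ with $h_{\vect k}\in\HH$, the conditions $\partial_{t_j}h\in\KK$ force $h_{\vect k}=0$ whenever $\vect k$ has more than one nonzero entry or a single entry larger than $1$, and force $h_{\vect e_j}\in\KK$ for each $j<i$. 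Consequently $h=g+\sum_{j<i}c_j t_j$ with $g:=h_{\vect 0}\in\HH$ and $c_j:=h_{\vect e_j}\in\KK$, and substituting back into $\sigma(h)-h=f_i$ yields $\sigma(g)-g=f_i-\sum_{j<i}c_j f_j$. The tuple $(-c_1,\dots,-c_{i-1},1,0,\dots,0)\in\KK^d\setminus\{\vect 0\}$ is then the desired witness contradicting~(2).
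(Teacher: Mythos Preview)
The paper does not give its own proof of this theorem; it simply cites \cite[Thm.~7.10]{DR3}. Your argument is correct and entirely self-contained.

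For comparison: the analogous multiplicative result, Theorem~\ref{Thm:PiTest}, \emph{is} proved in the paper, and there the descent step in $(2)\Rightarrow(1)$ is handled by invoking structural theorems from \cite{DR1} (Theorems~2.12 and~22) that describe the shape of semi-invariants in a tower of $\Pi$-extensions. The proof in \cite{DR3} for the additive case proceeds in the same spirit. You take a genuinely different route: instead of quoting a structure theorem, you observe that the formal partial derivatives $\partial_{t_j}$ commute with $\sigma$ (precisely because each $f_j$ lies in $\HH$, not merely in $\HH\lr{t_1}\dots\lr{t_{j-1}}$), and applying them to $\sigma(h)-h=f_i$ forces $\partial_{t_j}h\in\const{\HH[t_1,\dots,t_{i-1}]}{\sigma}=\KK$, from which the linear form of $h$ drops out. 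This is a clean and elementary trick that avoids external citations; what it costs is an implicit characteristic-zero assumption (so that $\partial_{t_j}p=0$ genuinely forces $p$ to be free of $t_j$). That assumption is harmless here---the theorem is in fact false in positive characteristic (e.g.\ $\HH=\KK$, $f_1=1$, where $t_1^p-t_1$ is a new constant yet no $g\in\KK$ solves $\sigma(g)-g=1$), and the entire paper works over fields of characteristic zero---but you should state it explicitly if you use this argument elsewhere.
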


Note that Theorem~\ref{Thm:SigmaTest} contains the following special case (compare~\cite{Karr:81} for the field case and~\cite{DR1} for the ring case): an $S$ extension $\dfield{\AR[s]}{\sigma}$ of $\dfield{\AR}{\sigma}$ with $f:=\sigma(s)-s\in\AR$ is a \sigmaE-extension if and only if there is no $g\in\AR$ such that the telescoping equation $\sigma(g)-g=f$ holds; this property will be crucial for the construction that establishes Theorem~\ref{Thm:SigmaExtensionStable} given below.

Let $\dfield{\AR}{\sigma}$ be a difference ring with constant field $\KK$, $u\in\AR\setminus\{0\}$ and
$\vect{f}=(f_1,\dots,f_d)\in\AR^d$. Then following~\cite{Karr:81} we define the \emph{set of solutions of parameterized first-order linear difference equations}:
$$V_1(u,\vect{f},\AR)=\{(c_1,\dots,c_d,g)\in\KK^d\times
\AR\mid\,\sigma(g)-u\,g=c_1\,f_1+\dots+c_d\,f_d\}.$$

With this notion, Theorem~\ref{Thm:SigmaTest} can be restated as follows: $\dfield{\EE}{\sigma}$ is a \sigmaE-extension of $\dfield{\HH}{\sigma}$ if and only if $V_1(1,(f_1,\dots,f_d),\HH)=\{0\}^d\times\KK$. In order to check that this is the case, we can utilize the following theorem. 

\begin{theorem}\label{Thm:SolveProblemV1}
	Let $\dfield{\HH}{\sigma}$ be a basic \rpisiE-extension of
	a difference field $\dfield{\FF}{\sigma}$ with constant field $\KK$, $u\in[\FF^*]_{\FF}^{\HH}$ and $\vect{f}\in\HH^d$. Then the following holds: 
	\begin{enumerate}
		\item $V_1(u,\vect{f},\HH)$ is a $\KK$-vector space of dimension $\leq d+1$.
		\item If one can compute a basis of
		 $M(\vect{h},\FF)$ for any $\vect{h}\in(\FF^*)^n$ and a basis of $V_1(v,\vect{h},\FF)$ for any $v\in\FF^*$, $\vect{h}\in\FF^n$, then one can compute a basis of $V_1(u,\vect{f},\HH)$.
	\end{enumerate}
\end{theorem}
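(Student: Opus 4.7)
My plan is to prove the two parts separately, with part~(1) being a structural observation and part~(2) following by induction on the tower of \rpisiE-monomials.

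For part~(1), $\KK$-linearity of $V_1(u,\vect{f},\HH)$ is immediate: if $(\vect{c},g)$ and $(\vect{c}',g')$ lie in $V_1$ and $\lambda,\mu\in\KK$, then applying $\sigma$ is $\KK$-linear and $\sigma$ fixes $\KK$, so $\lambda(\vect{c},g)+\mu(\vect{c}',g')$ again satisfies the defining equation. For the dimension bound I would consider the projection
\[
\pi\colon V_1(u,\vect{f},\HH)\to \KK^d,\qquad (c_1,\dots,c_d,g)\mapsto(c_1,\dots,c_d).
\]
Its kernel is $\ker\pi = \{(0,\dots,0,g)\mid \sigma(g)=u\,g\}$, i.e.\ the $\KK$-space of $u$-homogeneous solutions in $\HH$. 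Because $\dfield{\HH}{\sigma}$ is a basic \rpisiE-extension of $\dfield{\FF}{\sigma}$, one can invoke the structural results used in Theorem~\ref{Thm:PiTest} (via the analog of Theorem~22 of \cite{DR1}) to conclude that any nonzero $u$-homogeneous solution is unique up to a constant factor, so $\dim_\KK\ker\pi\le 1$. The bound $\dim_\KK V_1 \le d+1$ then follows by rank-nullity.

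For part~(2), I would induct on the number $e$ of generators in a tower $\HH=\FF\lr{t_1}\dots\lr{t_e}$. The base case $e=0$ is the hypothesis. For the inductive step, set $\EE=\FF\lr{t_1}\dots\lr{t_{e-1}}$, $t=t_e$, and assume both parameterized problems are algorithmically solvable over $\EE$. Given $u\in[\FF^*]_\FF^\HH$ and $\vect{f}\in\HH^d$, I plan to reduce the search for $(\vect{c},g)\in V_1(u,\vect{f},\HH)$ to a finite number of instances over $\EE$ by case analysis on the type of $t$. In each case one writes $g=\sum_j g_j t^j$ with $g_j\in\EE$ and compares coefficients in powers of $t$: if $t$ is an $A$- or $P$-monomial with $\alpha=\sigma(t)/t$, then $\sigma(t^j)=\alpha^j t^j$ and comparing the $t^j$-coefficient produces a parameterized first-order equation
\[
\sigma(g_j)-(u\,\alpha^{-j})\,g_j = \text{(coefficients of the right-hand side)}
\]
in $\EE$; the new coefficient $u\alpha^{-j}$ still lies in $[\FF^*]_\FF^\EE$, so the induction hypothesis applies. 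For an $S$-monomial one first uses the degree-bound machinery of \cite{Karr:81,DR1} to bound the polynomial degree of $g$ in $t$, then a triangular coefficient comparison (top-down) reduces the task to parameterized first-order equations over $\EE$. In all cases the finitely many partial $\KK$-solution spaces can be glued into a basis of $V_1(u,\vect{f},\HH)$ using linear algebra over $\KK$; the module-computation hypothesis for $M(\vect{h},\FF)$ enters when determining which combinations of the scalar parameters can be lifted consistently across different $j$'s (this is where the $P$-monomial case requires solving a homogeneous product problem to match leading terms).

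The main obstacle I anticipate is the $S$-monomial case: establishing an effective degree bound for $g$ in $t$ in the presence of the parameters $c_1,\dots,c_d$ and the shift by $u$ requires careful adaptation of Karr's and Schneider's degree/denominator bounds, because naive bounds depend on the $c_i$ one is still trying to determine. The standard trick is to treat the $c_i$ symbolically, derive a bound that is uniform in them, and then collapse to linear algebra over $\KK$. The $P$-monomial case is technically simpler but relies critically on the module-basis routine for $M(\vect{h},\FF)$ to detect when a solution exists with a consistent choice of the parameters across infinitely many coefficient equations; this is where the hypothesis on $M$ is genuinely needed and cannot be avoided.
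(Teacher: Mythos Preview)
Your approach is essentially the one underlying the paper's proof, which simply cites \cite[Lemma~2.17, Thm.~2.22]{DR1} for part~(1) and \cite[Thm.~2.23]{DR1} for part~(2); you have unpacked those citations into the standard projection-plus-kernel argument for~(1) and the tower induction with coefficient comparison for~(2), which is exactly what those results do.

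One small oversight worth flagging: in your $P$-monomial step you write the coefficient equation as $\sigma(g_j)-(u\,\alpha^{-j})g_j=\cdots$ and assert $u\alpha^{-j}\in[\FF^*]_\FF^\EE$. This is only correct when $u$ does not involve the top monomial $t=t_e$. But $u\in[\FF^*]_\FF^\HH$ may contain a nontrivial power $t^m$, in which case the coefficient equations couple $g_j$ with $g_{j-m}$ rather than decoupling cleanly. The fix is the same as what you already plan for the $S$-case: first derive effective upper and lower degree bounds for $g$ in $t$ (this is where leading-term analysis and the $M$-module computation genuinely enter in the $P$-case), and only then compare coefficients. With that correction your sketch matches the cited reduction in \cite{DR1}.
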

\begin{proof}
Lemma~2.17 and Thm.~2.22 of~\cite{DR1} gives (1); \cite[Thm.~2.23]{DR1}\footnote{For an alternative algorithm we refer to~\cite[Section~6]{DR3}.} shows (2).
\end{proof}

In particular, we can activate this machinery if $\dfield{\FF}{\sigma}$ is a \pisiE-field over a $\sigma$-computable constant field; a special case is, e.g., $\FF=\mGG$.

\begin{corollary}\label{Cor:VPiSiRPiSi}
	Let $\dfield{\EE}{\sigma}$ be an \rpisiE-extension of a \pisiE-field $\dfield{\FF}{\sigma}$ over $\KK$. If $\KK$ is $\sigma$-computable, one can compute a basis of $V_1(1,\vect{f},\EE)$ for any $\vect{f}\in(\EE^*)^d$. This in particular is the case, if $\KK$ is a rational function field over an algebraic number field.
\end{corollary}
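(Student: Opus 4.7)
The plan is to mirror the proof of Corollary~\ref{Cor:MPiSiRPiSi}, but now invoking Theorem~\ref{Thm:SolveProblemV1} in place of Theorem~\ref{Thm:SolveProblemM}. First I would recall Karr's algorithm~\cite{Karr:81}: when the constant field $\KK$ of a \pisiE-field $\dfield{\FF}{\sigma}$ is $\sigma$-computable, one can effectively compute a basis of $M(\vect{h},\FF)$ for any $\vect{h}\in(\FF^*)^n$, and a basis of $V_1(v,\vect{h},\FF)$ for any $v\in\FF^*$ and $\vect{h}\in\FF^n$. These are precisely the two subroutines required by the hypothesis of part~(2) of Theorem~\ref{Thm:SolveProblemV1}.

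Second, since $1\in\FF^*\subseteq[\FF^*]_{\FF}^{\EE}$ and $(\EE^*)^d\subseteq\EE^d$, the input $(u,\vect{f})=(1,\vect{f})$ with $\vect{f}\in(\EE^*)^d$ satisfies the assumptions of Theorem~\ref{Thm:SolveProblemV1}; applying its part~(2) then yields a computable basis of $V_1(1,\vect{f},\EE)$, which establishes the main claim.

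Finally, for the concluding sentence I would argue, exactly as in the proof of Corollary~\ref{Cor:MPiSiRPiSi}, that by~\cite[Thm.~3.5]{Schneider:05c} (based on the algorithm of~\cite{Ge:93}) a rational function field over an algebraic number field is $\sigma$-computable, so the preceding reduction applies verbatim to that special case. I do not anticipate any genuine obstacle: the nontrivial work has already been packaged into Theorem~\ref{Thm:SolveProblemV1} and into Karr's solver for \pisiE-fields, and the only small check is the trivial membership $1\in[\FF^*]_{\FF}^{\EE}$.
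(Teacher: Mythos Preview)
Your proposal is correct and follows essentially the same route as the paper's own proof: invoke Karr's algorithm in the \pisiE-field $\dfield{\FF}{\sigma}$ to supply the subroutines demanded by part~(2) of Theorem~\ref{Thm:SolveProblemV1}, then apply that theorem with $u=1$, and finally cite~\cite[Thm.~3.5]{Schneider:05c} for the special case of a rational function field over an algebraic number field. If anything, you are slightly more careful than the paper in explicitly naming both the $M(\vect{h},\FF)$ and the $V_1(v,\vect{h},\FF)$ subroutines and in spelling out the trivial membership $1\in[\FF^*]_{\FF}^{\EE}$.
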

\begin{proof}
	If $\KK$ is $\sigma$-computable, it follows by~\cite{Karr:81} (or~\cite{Schneider:05c}) that one can compute a basis of $V_1(u,\vect{f},\FF)$ for any $u\in\FF^*$, $\vect{f}\in(\FF^*)^d$. Thus by part~2 of Theorem~\ref{Thm:SolveProblemV1} one can compute a basis of $V_1(1,\vect{h},\EE)$ for any $\vect{h}\in(\EE^*)^d$. In particular, it follows by~\cite[Thm.~3.5]{Schneider:05c} (based on the algorithm of~\cite{Ge:93}) that a rational function field over an algebraic number field is $\sigma$-computable. 
\end{proof}

\begin{remark}
(1) By~\cite[Thm.~2.26]{DR1}, Corollary~\ref{Cor:VPiSiRPiSi} is also valid for $\vect{f}\in(\{\FF^*\}_{\FF}^{\EE})^d$ in simple \rpisiE-extensions over a \pisiE-field. As elaborated in~\cite[Sect.~2.3.3]{DR1} it holds even in the more general setting where $\dfield{\FF}{\sigma}$ is a \pisiE-field extension of a difference field $\dfield{\FF_0}{\sigma}$ which is $\sigma^*$-computable (see \cite[Def.~1]{Schneider:06d}) and one can compute a basis of $V(u,\vect{f})$ in $\dfield{\FF_0}{\sigma^k}$ for any\footnote{If the extension is basic, we only need the case $k=1$.} $k>0$, $u\in\FF^*$ and $\vect{f}\in\FF_0^m$; see also Remark~\ref{Remark:ProdSigma}.(1).
\end{remark}

\subsubsection{Basic representations}\label{Sec:BasicSumRep}

The following theorem (based on Theorem~\ref{Thm:SigmaTest} and the property that one can solve the telescoping problem~\eqref{Equ:TeleDF} given below) enables one to lift the results of $\Prod_1(\GG)$ and $\Prod(\rGG)$ form Section~\ref{Sec:ProductCase} to the cases $\Sum(\GG)$, $\Sum\Prod_1(\GG)$ and $\Sum\Prod(\rGG)$.

\begin{theorem}\label{Thm:SigmaExtensionStable}
Let $\GG\in\{\rGG,\bGG,\mGG\}$ and $A_1,\dots,A_u\in\Sum\Prod(\GG)$. Let $\dfield{\HH}{\sigma}$ be a basic \rpisiE-extension of $\dfield{\GG}{\sigma}$ equipped with an evaluation function $\ev$ and an $o$-function $L$ where all arising products in $A_1,\dots,A_u$ can be modeled. Then there is a \sigmaE-extension $\dfield{\EE}{\sigma}$ of $\dfield{\HH}{\sigma}$ with an extended evaluation function $\ev$ and $o$-function $L$ (using Lemmas~\ref{Lemma:EvConstruction} and~\ref{Lemma:ZConstruction}) such that $a_1,\dots,a_u\in\EE$ model $A_1,\dots,A_u$, i.e., for all $1\leq i\leq u$ we have~\eqref{Equ:ModelAi} for some explicitly given $\delta_i\in\NN$ with $\delta_i\geq L(a_i)$.\\
If $\KK$ is $\sigma$-computable, and $L:\HH\to\NN$ and $\ev:\HH\times\NN\to\KK$ are computable, the above components can be computed.
\end{theorem}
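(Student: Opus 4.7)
The plan is to mimic the construction of Lemma~\ref{Lemma:SwitchAToAPS} but, at each step where an $S$-monomial would be introduced for a sum, to first try to eliminate it by telescoping; the telescoping test of Theorem~\ref{Thm:SigmaTest} then certifies that whenever we do introduce a new generator it is actually a $\sigma$-monomial, so the resulting tower is a $\Sigma$-extension rather than merely an $S$-extension.

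First, I would enumerate all sums occurring in $A_1,\dots,A_u$ as $\Sum(\lambda_1,H_1),\dots,\Sum(\lambda_r,H_r)$ in an order compatible with nesting depth, so that each $H_i$ is built from $\GG$, from products (already modeled in $\HH$ by hypothesis), and from $\Sum(\lambda_1,H_1),\dots,\Sum(\lambda_{i-1},H_{i-1})$. Then I proceed by induction on $i$. Assume that a basic \rpisiE-extension $\dfield{\HH_{i-1}}{\sigma}$ of $\dfield{\HH}{\sigma}$ has been constructed with extended $\ev$, $L$, $Z$ (via Lemmas~\ref{Lemma:EvConstruction} and~\ref{Lemma:ZConstruction}) and elements $b_1,\dots,b_{i-1}\in\HH_{i-1}$ modeling $\Sum(\lambda_1,H_1),\dots,\Sum(\lambda_{i-1},H_{i-1})$. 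Translating $\oplus,\odot,\opower$ into $+,\cdot,\hat{\;}$ and substituting the already available models for every $f\in\GG$, every product, and every inner sum as in Lemma~\ref{Lemma:SwitchAToAPS}, one produces $h_i\in\HH_{i-1}$ together with a bound $\mu_i\in\NN$ with $\ev(h_i,n)=H_i(n)$ for all $n\geq\mu_i$.

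The crucial step is now to decide, algorithmically, the telescoping equation
\[
\sigma(g)-g=\sigma(h_i),\qquad g\in\HH_{i-1}.
\]
This is precisely the question whether $(1,g)\in V_1(1,(\sigma(h_i)),\HH_{i-1})$; under the standing assumption that $\KK$ is $\sigma$-computable and that $\ev,L$ on $\HH$ are computable, this is solvable by Corollary~\ref{Cor:VPiSiRPiSi} (together with Theorem~\ref{Thm:SolveProblemV1} and the extensibility of $\ev,L,Z$). Two cases arise. If a solution $g\in\HH_{i-1}$ exists, then for all sufficiently large $n$ we have $\sum_{k=\lambda_i}^{n}H_i(k)=\ev(g,n)-c$ for some explicitly computable constant $c\in\KK$, so no new generator is needed and we set $\HH_i:=\HH_{i-1}$ and $b_i:=g-c$. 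Otherwise, adjoin a new indeterminate $s_i$ as an $S$-monomial with $\sigma(s_i)=s_i+\sigma(h_i)$, forming $\HH_i:=\HH_{i-1}[s_i]$. By Theorem~\ref{Thm:SigmaTest}, the non-solvability of the telescoping equation guarantees $\const{\HH_i}{\sigma}=\const{\HH_{i-1}}{\sigma}$, so $s_i$ is a genuine $\sigma$-monomial and $\dfield{\HH_i}{\sigma}$ is again a basic \rpisiE-extension of $\dfield{\GG}{\sigma}$. Extend $\ev,L,Z$ by Lemmas~\ref{Lemma:EvConstruction} and~\ref{Lemma:ZConstruction} with lower bound $l_i\geq\max(\mu_i,L(\sigma^{-1}(\sigma(h_i))),\lambda_i)$, and set $b_i:=s_i+c_i$ with the (computable) shift constant $c_i=\sum_{k=\lambda_i}^{l_i-1}H_i(k)\in\KK$.

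After processing all $r$ sums we set $\EE:=\HH_r$, which is a $\Sigma$-extension of $\dfield{\HH}{\sigma}$ equipped with the extended $\ev$ and $L$. Finally, for each $A_j$ I replace every sum/product subexpression by its model and $\oplus,\odot,\opower$ by $+,\cdot,\hat{\;}$; by the homomorphism properties~\eqref{Equ:evPlus}--\eqref{Equ:evTimes} (exactly as in the last paragraph of the proof of Lemma~\ref{Lemma:SwitchAToAPS}), the resulting $a_j\in\EE$ satisfies $A_j(n)=\ev(a_j,n)$ for every $n\geq\delta_j$, where $\delta_j\in\NN$ is the explicitly computable maximum of $L(a_j)$ and the $L$-values of all ground-field ingredients of $A_j$. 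Under the computability hypotheses, every step (including the telescoping decision, the constants $c,c_i$, and the bounds $\delta_j$) is effective, which establishes the algorithmic part. The main technical obstacle is the telescoping decision inside the potentially large tower $\HH_{i-1}$, but this is exactly what Corollary~\ref{Cor:VPiSiRPiSi} provides; everything else is bookkeeping on top of Lemmas~\ref{Lemma:EvConstruction}, \ref{Lemma:ZConstruction} and~\ref{Lemma:SwitchAToAPS}.
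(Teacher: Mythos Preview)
Your proposal is correct and follows essentially the same route as the paper's proof: both adapt the construction of Lemma~\ref{Lemma:SwitchAToAPS}, process the arising sums by nesting depth, test telescoping $\sigma(g)-g=\sigma(h)$ in the current tower before adjoining a new $S$-monomial, invoke (the $d=1$ case of) Theorem~\ref{Thm:SigmaTest} to certify that a failed telescoping yields a genuine $\Sigma$-monomial, and appeal to Corollary~\ref{Cor:VPiSiRPiSi} for computability. The constant adjustments and the final assembly of the $a_j$ match the paper's treatment as well.
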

\begin{proof}
This result follows from the construction given in~\cite[pp. 657--658]{DR3} which can be summarized as follows. We suppose that we have constructed already a basic \rpisiE-extension of $\dfield{\GG}{\sigma}$ equipped with an evaluation function $\ev$ and an $o$-function $L$ where all arising products in $A_1,\dots,A_u$ can be modeled. Then we can adapt the construction of Lemma~\ref{Lemma:SwitchAToAPS} and deal with all arising sums and products arising in the $A_1,\dots,A_u$. Suppose that we have constructed already a \sigmaE-extension $\dfield{\AR}{\sigma}$ of $\dfield{\GG}{\sigma}$ and we are treating now the product or sum $T_i$. If it is a product, we sort it out in the bookkeeping step and obtain an element $b_i\in\HH^*\subseteq\EE^*$ that models $T_i$ by assumption. Otherwise, $T_i=\Sum(\lambda,H)$. By induction (on the depth of the arising sums) we can construct a \sigmaE-extension $\dfield{\AR'}{\sigma}$ of $\dfield{\AR}{\sigma}$ together with an extended evaluation function $\ev$ and $o$-function $L$ such that we can take $h\in\AR'$ with $\ev(h,n)=H(n)$ for all $n\geq L(h)$. 
Now we enter the sum-case and perform the following extra test. We check if there is a $g\in\AR'$ with 
\begin{equation}\label{Equ:TeleDF}
\sigma(g)=g+f\quad\Leftrightarrow\quad \sigma(g)-g=f
\end{equation}
for $f:=\sigma(h)$.
Suppose there is such a $g$. We define $\delta_i:=\max(L(f),L(g),\lambda)$. Then for $b_i:=g+\sum_{j=\lambda}^{\delta_i}H(j)-\ev(g,\delta_i)\in\AR'$ we get
$\ev(b_i,n+1)-\ev(b_i,n)=\ev(g,n+1)-\ev(g,n)=H(n+1)$ and $T_i(n+1)=T_i(n)+H(n+1)$ for all $n\geq\delta_i$. Since $\ev(b_i,\delta_i)=\sum_{j=\lambda}^{\delta_i}F(j)=\ev(T_i,\delta_i)$, we get $\ev(b_i,n)=\ev(T_i,n)$ for all $n\geq\delta_i$.\\ 
Otherwise, if there is no such $g$, we proceed as in the sum-case of Lemma~\ref{Lemma:SwitchAToAPS}: we adjoin the \sigmaE-monomial $t$ to $\AR'$ with $\sigma(t)=t+f$ with $f=\sigma(h)$ and get the claimed $b_{i}=t+c$ with $c\in\KK$ such that 
$\ev(b_i,n)=\ev(T_i,n)$ holds for all $n\geq L(b_i)=\delta_i$.\\
Summarizing, we can construct a nested \sigmaE-extension in which the elements from $\Sum\Prod(\GG)$ can be modeled. If $\KK$ is $\sigma$-computable, one can decide constructively by Corollary~\ref{Cor:VPiSiRPiSi} if there exists such a $g$. Furthermore, if $L:\HH\to\NN$ and $\ev:\HH\times\NN\to\KK$ are computable also their extensions for $\dfield{\EE}{\sigma}$ are computable by recursion. Consequently, all components are computable. 
\end{proof}

We get immediately the following result for $\Sum(\GG)$-stable extensions.

\begin{corollary}
Let $\GG\in\{\rGG,\bGG,\mGG\}$. The class of \sigmaE-extensions over $\dfield{\GG}{\sigma}$ is $\Sum(\GG)$-extension-stable. More precisely, let $\dfield{\HH}{\sigma}$ be a \sigmaE-extension of $\dfield{\GG}{\sigma}$ with an evaluation function $\ev$ and an $o$-function $L$, and let $A\in\Sum(\GG)$. Then there is a \sigmaE-extension $\dfield{\EE}{\sigma}$ of $\dfield{\HH}{\sigma}$ with an extended evaluation function $\ev$ and an $o$-function $L$ (using Lemmas~\ref{Lemma:EvConstruction} and~\ref{Lemma:ZConstruction}) together with $a\in\EE$ and $\delta\in\NN$ with $A(n)=\ev(a,n)$ for all $n\geq\delta$.\\ 
If $\KK$ is $\sigma$-computable, these components can be computed.
\end{corollary}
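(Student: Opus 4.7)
The plan is to reduce the corollary directly to Theorem~\ref{Thm:SigmaExtensionStable}, which already does all the heavy lifting. The key observation is that the input expression $A$ belongs to $\Sum(\GG)$, so by definition no products occur in $A$. Consequently, the hypothesis of Theorem~\ref{Thm:SigmaExtensionStable} that ``all arising products in $A_1,\dots,A_u$ can be modeled in $\HH$'' is satisfied vacuously: there are no products to model. Thus we may invoke the theorem with $u=1$ and $A_1:=A$ starting from the given basic \rpisiE-extension $\dfield{\HH}{\sigma}$ (a \sigmaE-extension is in particular a basic \rpisiE-extension since its sole monomials are \sigmaE-monomials, for which the basic/simple distinction is trivial).

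Applying Theorem~\ref{Thm:SigmaExtensionStable} then yields a \sigmaE-extension $\dfield{\EE}{\sigma}$ of $\dfield{\HH}{\sigma}$ together with extended $\ev$ and $L$ (constructed iteratively by Lemmas~\ref{Lemma:EvConstruction} and~\ref{Lemma:ZConstruction}) and an element $a\in\EE$ such that $A(n)=\ev(a,n)$ for all $n\geq \delta$ with some explicitly given $\delta\in\NN$, $\delta\geq L(a)$. To finish the extension-stability claim I still need to verify that $\dfield{\EE}{\sigma}$ is a \sigmaE-extension of $\dfield{\GG}{\sigma}$ (not merely of $\dfield{\HH}{\sigma}$). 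This is immediate from transitivity of the tower of extensions: $\GG\leq\HH\leq\EE$ is a tower whose adjoined monomials are all $S$-monomials (those from $\HH/\GG$ by assumption, those from $\EE/\HH$ by the theorem), and each successive adjunction preserves the constant field $\KK=\const{\GG}{\sigma}$, so $\const{\EE}{\sigma}=\KK$ by composition.

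For the computability statement I proceed by transferring hypotheses. Assuming $\KK$ is $\sigma$-computable together with $\ev:\HH\times\NN\to\KK$ and $L:\HH\to\NN$ computable, Theorem~\ref{Thm:SigmaExtensionStable} directly supplies the algorithms that compute $\EE$, the extended $\ev$ and $L$, the modeling element $a$, and the bound $\delta$. The only step that requires a decision procedure is the one inside the construction of Theorem~\ref{Thm:SigmaExtensionStable} that asks whether a telescoping equation $\sigma(g)-g=f$ is solvable in the already constructed \sigmaE-extension of $\dfield{\GG}{\sigma}$; this is exactly a computation of a basis of $V_1(1,(f),\cdot)$, which is algorithmic by Corollary~\ref{Cor:VPiSiRPiSi} (applicable because $\dfield{\GG}{\sigma}$ is a \pisiE-field over the $\sigma$-computable $\KK$ by Example~\ref{Exp:RationalDField3}).

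In short, the corollary is not a new construction but a specialization: the product-handling machinery of Theorem~\ref{Thm:SigmaExtensionStable} is never triggered because $\Sum(\GG)$ is product-free, and the only nontrivial point is to check that transitivity yields a \sigmaE-extension of $\dfield{\GG}{\sigma}$. I do not foresee a real obstacle; the only subtlety is making explicit that the empty collection of products satisfies the modeling hypothesis of Theorem~\ref{Thm:SigmaExtensionStable}, so that the theorem can be invoked with the given $\dfield{\HH}{\sigma}$ taken as the starting basic \rpisiE-extension.
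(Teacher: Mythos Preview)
Your proposal is correct and matches the paper's intended approach: the paper introduces this corollary with ``We get immediately the following result'' and gives no explicit proof, precisely because it is the specialization of Theorem~\ref{Thm:SigmaExtensionStable} to the product-free setting that you describe. One small point: you added the assumption that $\ev$ and $L$ on $\HH$ are computable, but this is automatic once $\KK$ is $\sigma$-computable, since $\dfield{\HH}{\sigma}$ is built from $\dfield{\GG}{\sigma}$ by iterated application of Lemmas~\ref{Lemma:EvConstruction} and~\ref{Lemma:ZConstruction}, and $\sigma$-computability of $\KK$ (which includes polynomial factorization) makes the base $\ev$, $L$, $Z$ on $\GG$ computable.
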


Combining Theorems~\ref{Thm:ExtensionStableProd} and~\ref{Thm:SigmaExtensionStable} we get \texttt{Sigma}'s main translation mechanism.

\begin{corollary}\label{Cor:FactorizableFieldConstruction}
Let $\GG\in\{\rGG,\bGG,\mGG\}$ where $\KK$ is completely factorizable of order $\lambda$. Then the class of completely factorized \rpisiE-extensions is $\Sum\Prod_1(\GG)$-extension-stable. More precisely,
let $\dfield{\HH}{\sigma}$ be a completely factorized \rpisiE-extension of $\dfield{\GG}{\sigma}$ equipped with an evaluation function $\ev$ an $o$-function $L$. Let $A\in\Sum\Prod_1(\GG)$. Then
there is an \rpisiE-extension $\dfield{\EE}{\sigma}$ of $\dfield{\HH}{\sigma}$ with an extended evaluation function $\ev$ and $o$-function $L$ (using Lemmas~\ref{Lemma:EvConstruction} and~\ref{Lemma:ZConstruction}) with the following properties:
\begin{enumerate}
	\item $\dfield{\EE}{\sigma}$ is a completely factorizable \rpisiE-extension of $\dfield{\GG}{\sigma}$.
	\item $A$ is modeled by $a\in\EE$, i.e., $A(n)=\ev(a,n)$ for all $n\geq\delta$ for some $\delta\in\NN$.
\end{enumerate}
If $\KK$ is $\sigma$-computable and complete factorizations over $\KK$ can be computed, all the components can be given explicitly.
\end{corollary}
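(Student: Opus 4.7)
The plan is to reduce the claim to the two extension-stability results already at our disposal: Theorem~\ref{Thm:ExtensionStableProd} for depth-$1$ products and Theorem~\ref{Thm:SigmaExtensionStable} for sums. The argument proceeds in two stages, first handling the product part of $A$ and then the sum part.

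First I would collect the finite multiset of depth-$1$ products $P_1,\dots,P_m\in\Pi_1(\GG)$ that occur in $A$ (including the $\RPow(r)$ occurrences); since $A\in\Sum\Prod_1(\GG)$, all products appearing in $A$ are of this form. Starting from $\dfield{\HH_0}{\sigma}:=\dfield{\HH}{\sigma}$, which by hypothesis is a completely factorized \rpiE-extension of $\dfield{\GG}{\sigma}$, I would iteratively apply Theorem~\ref{Thm:ExtensionStableProd}: for $i=1,\dots,m$ pass to an \rpiE-extension $\dfield{\HH_i}{\sigma}$ of $\dfield{\HH_{i-1}}{\sigma}$, extended by Lemmas~\ref{Lemma:EvConstruction} and~\ref{Lemma:ZConstruction}, such that $\dfield{\HH_i}{\sigma}$ is again a completely factorized \rpiE-extension of $\dfield{\GG}{\sigma}$ and $P_i$ is modeled by some $p_i\in\HH_i$. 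After $m$ steps this yields a completely factorized \rpiE-extension $\dfield{\HH'}{\sigma}:=\dfield{\HH_m}{\sigma}$ of $\dfield{\GG}{\sigma}$, with an extended evaluation function and $o$-function, in which every product appearing in $A$ is modeled.

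Second, I would invoke Theorem~\ref{Thm:SigmaExtensionStable} with base ring $\dfield{\HH'}{\sigma}$ applied to the single input $A$. By construction, all products occurring in $A$ are already modeled in $\dfield{\HH'}{\sigma}$, so the hypothesis of that theorem is satisfied. The theorem then produces a \sigmaE-extension $\dfield{\EE}{\sigma}$ of $\dfield{\HH'}{\sigma}$, an extended evaluation function $\ev$ and $o$-function $L$ (via Lemmas~\ref{Lemma:EvConstruction} and~\ref{Lemma:ZConstruction}), and an element $a\in\EE$ together with $\delta\in\NN$ with $\delta\geq L(a)$ such that $A(n)=\ev(a,n)$ for all $n\geq\delta$, giving claim~(2).

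For claim~(1) I need to verify that $\dfield{\EE}{\sigma}$ remains a completely factorized \rpisiE-extension of $\dfield{\GG}{\sigma}$. This is the only nontrivial bookkeeping step in the proof: the \rE-/\piE-part of $\dfield{\EE}{\sigma}$ coincides with that of $\dfield{\HH'}{\sigma}$ (Theorem~\ref{Thm:SigmaExtensionStable} only adjoins \sigmaE-monomials), which is completely factorized by construction, and the definition of a completely factorized single-basic \rpisiE-extension restricts only the \rE- and \piE-monomials. Hence no constraint is violated by the freshly adjoined \sigmaE-monomials, and $\dfield{\EE}{\sigma}$ is of the required form. For the algorithmic statement, the first stage is computable by the computability clause of Theorem~\ref{Thm:ExtensionStableProd} under the hypothesis that complete factorizations over $\KK$ are computable, and the second stage is computable by the computability clause of Theorem~\ref{Thm:SigmaExtensionStable} under the hypothesis that $\KK$ is $\sigma$-computable; combining them gives the full computability claim. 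The main conceptual obstacle is precisely the preservation check just described, but it dissolves once one observes that the \sigmaE-extension step is orthogonal to the \rE-/\piE-normal-form constraints.
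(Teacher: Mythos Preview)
Your argument has a genuine gap at the first stage. You write that $\dfield{\HH_0}{\sigma}:=\dfield{\HH}{\sigma}$ ``by hypothesis is a completely factorized \rpiE-extension of $\dfield{\GG}{\sigma}$'', but the hypothesis of the corollary is that $\dfield{\HH}{\sigma}$ is a completely factorized \rpisiE-extension: it may already contain \sigmaE-monomials. Theorem~\ref{Thm:ExtensionStableProd}, however, takes as input a completely factorized \rpiE-extension, not an \rpisiE-extension, so you cannot invoke it directly on $\HH$. This is not a cosmetic mismatch: the extension-stability statement for products is only established over towers built from \rE- and \piE-monomials, and it is a priori unclear that adjoining new \piE-monomials on top of existing \sigmaE-monomials still produces an \rpisiE-extension (i.e., that no new constants are introduced).

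The paper's proof deals with exactly this point. It splits $\HH=\GG\lr{t_1}\dots\lr{t_e}[s_1,\dots,s_u]$ into its \rpiE-part $\HH_0=\GG\lr{t_1}\dots\lr{t_e}$ and the \sigmaE-monomials $s_1,\dots,s_u$, applies Theorem~\ref{Thm:ExtensionStableProd} to $\HH_0$ to obtain $\HH_1=\HH_0\lr{p_1}\dots\lr{p_v}$ in which the products of $A$ are modeled, and then invokes \cite[Cor.~6.5]{DR3} together with \cite[Prop.~3.23]{DR3} to conclude that the reordered tower $\HH_2=\HH\lr{p_1}\dots\lr{p_v}$ is still a \piE-extension of $\HH$. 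Only after this reordering step does one have a completely factorized \rpisiE-extension of $\dfield{\GG}{\sigma}$ containing $\HH$ and modeling all products of $A$, at which point your second stage (Theorem~\ref{Thm:SigmaExtensionStable}) applies as you describe. Your claim~(1) verification and the computability argument are fine once this lifting is in place; what is missing is precisely the separation of the \rpiE- and \sigmaE-parts of $\HH$ and the appeal to the reordering results from~\cite{DR3}.
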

\begin{proof}
We can write $\HH=\GG\lr{t_1}\dots\lr{t_e}[s_1,\dots,s_u]$ where the $t_i$ are \rpiE-monomials and the $s_i$ are \sigmaE-monomials. 
Take all products that arise in $A$. Since $\dfield{\HH_0}{\sigma}$ with $\HH_0=\GG\lr{t_1}\dots\lr{t_e}$ is a completely factorized \rpiE-extension of $\dfield{\GG}{\sigma}$, we can apply Theorem~\ref{Thm:ExtensionStableProd} and get an \rpiE-extension $\dfield{\HH_1}{\sigma}$ of $\dfield{\HH_0}{\sigma}$ with $\HH_1=\HH_0\lr{p_1}\dots\lr{p_v}$ together with an extended evaluation function $\ev$ and $o$-function $L$ such that $\dfield{\HH_1}{\sigma}$ is a completely factorized \rpiE-extension of $\dfield{\GG}{\sigma}$ and such that all products in $A$ can be modeled in $\HH_1$. By~\cite[Cor.~6.5]{DR3} (together with~\cite[Prop~3.23]{DR3}) it follows that also  $\dfield{\HH_2}{\sigma}$ with $\HH_2=\HH\lr{p_1}\dots\lr{p_v}$ is a \piE-extension of $\dfield{\HH}{\sigma}$. In particular, $\dfield{\HH_2}{\sigma}$ is a completely factorized \rpisiE-extension of $\dfield{\GG}{\sigma}$ and we can merge the evaluation functions and $o$-functions to $\ev:\HH_2\times\NN\to\KK$ and $L:\HH_2\to\NN$. Finally, we apply Theorem~\ref{Thm:SigmaExtensionStable} and get a \sigmaE-extension $\dfield{\EE}{\sigma}$ of $\dfield{\HH_2}{\sigma}$ with an appropriately extended evaluation function $\ev$ and $o$-function $L$ together with $a\in\EE$ and $\delta\in\NN$ such that $\ev(a,n)=A(n)$ holds for all $n\geq\delta$. By definition $\dfield{\EE}{\sigma}$ is a completely factorized \rpisiE-extension of $\dfield{\GG}{\sigma}$.\\
If $\KK$ is $\sigma$-computable and one can compute complete factorizations over $\KK$, Theorems~\ref{Thm:ExtensionStableProd} and~\ref{Thm:SigmaExtensionStable} are constructive and all components can be computed.
\end{proof}

Furthermore, combining Theorems~\ref{Thm:OptimalProdExt} and~\ref{Thm:SigmaExtensionStable} gives the following result (we omit the optimality properties given in Theorem~\ref{Thm:OptimalProdExt}).

\begin{corollary}\label{Cor:NestedFieldConstruction}
	Let $\GG\in\{\rGG,\bGG,\mGG\}$ where $\KK$ is built by a rational function field defined over an algebraic number field.
	Then for $A_1,\dots,A_u\in\Sum\Prod_1(\GG)$
	there is a single-basic \rpisiE-extension $\dfield{\EE}{\sigma}$ of $\dfield{\GG}{\sigma}$ together with an extended evaluation function $\ev:\EE\times\NN\to\KK$ and $o$-function $L:\EE\to\NN$ (using Lemmas~\ref{Lemma:EvConstruction} and~\ref{Lemma:ZConstruction}) with the following properties: $A_1,\dots,A_u$ are modeled by $a_1,\dots,a_u\in\EE$, i.e., for all $1\leq i\leq u$ we have~\eqref{Equ:ModelAi} for some explicitly given $\delta_i\in\NN$ with $\delta_i\geq L(a_i)$.
\end{corollary}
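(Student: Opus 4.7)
The plan is to combine Theorems~\ref{Thm:OptimalProdExt} and~\ref{Thm:SigmaExtensionStable} in the obvious way: first handle the products, then stack sums on top. Since $\KK$ is a rational function field over an algebraic number field, it is $\sigma$-computable by~\cite[Thm.~3.5]{Schneider:05c}, so both theorems apply algorithmically.

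First, I would extract the finite list of products $P_1,\dots,P_w\in\Prod_1(\GG)$ that appear (at any position) inside the expressions $A_1,\dots,A_u\in\Sum\Prod_1(\GG)$; this is a finite set since each $A_i$ is a finite expression. Apply Theorem~\ref{Thm:OptimalProdExt} to this list to obtain a single-basic \rpiE-extension $\dfield{\HH}{\sigma}$ of $\dfield{\GG}{\sigma}$ with $\HH=\GG\lr{t_1}\dots\lr{t_m}$, together with an evaluation function $\ev:\HH\times\NN\to\KK$ and $o$-function $L:\HH\to\NN$ built by iterated application of Lemmas~\ref{Lemma:EvConstruction} and~\ref{Lemma:ZConstruction}, such that each $P_j$ is modeled by some $p_j\in\HH$.

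Next, apply Theorem~\ref{Thm:SigmaExtensionStable} to $\dfield{\HH}{\sigma}$ with the input list $A_1,\dots,A_u$. The hypothesis of the theorem is satisfied because all products arising in the $A_i$ are among $P_1,\dots,P_w$ and are therefore already modeled in $\HH$. The theorem produces a \sigmaE-extension $\dfield{\EE}{\sigma}$ of $\dfield{\HH}{\sigma}$ with $\EE=\HH[s_1]\dots[s_r]$, together with appropriately extended $\ev$ and $L$, and elements $a_1,\dots,a_u\in\EE$ modeling the $A_i$, i.e., satisfying~\eqref{Equ:ModelAi} for explicit $\delta_i\geq L(a_i)$.

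It remains to verify that the composed extension $\dfield{\EE}{\sigma}$ of $\dfield{\GG}{\sigma}$ is single-basic. This is immediate from the definition: the \rE-monomial (if present) and the \piE-monomials of $\EE$ are precisely those $t_1,\dots,t_m$ coming from Theorem~\ref{Thm:OptimalProdExt}, and they already satisfy $\frac{\sigma(t_i)}{t_i}\in\const{\GG}{\sigma}^*$ respectively $\frac{\sigma(t_i)}{t_i}\in\GG^*$ relative to the base $\GG$; the \sigmaE-monomials $s_1,\dots,s_r$ added subsequently impose no additional condition for single-basicness. Computability of all components follows since $\KK$ is $\sigma$-computable (so Corollaries~\ref{Cor:MPiSiRPiSi} and~\ref{Cor:VPiSiRPiSi} apply) and complete factorization / telescoping in $\dfield{\HH}{\sigma}$ can be carried out by the algorithms underlying both theorems. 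The only mild subtlety — hardly an obstacle — is to check that the \sigmaE-monomials added on top of $\HH$ really leave the classification of the $t_i$ unchanged, but this is built into Definition~\ref{Def:RPiSiExt} since the monomial type is fixed at the moment of adjunction.
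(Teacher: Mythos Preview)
Your proof is correct and follows exactly the approach indicated in the paper, which states just before the corollary that it is obtained by ``combining Theorems~\ref{Thm:OptimalProdExt} and~\ref{Thm:SigmaExtensionStable}''. Your verification that the composed extension remains single-basic (the \rpiE-monomials inherit the condition from Theorem~\ref{Thm:OptimalProdExt}, and \sigmaE-monomials impose no constraint) fills in the only detail the paper leaves implicit; the passing mention of ``complete factorization'' is unnecessary here since Theorem~\ref{Thm:OptimalProdExt} only requires $\KK$ to be a rational function field over an algebraic number field, but this does not affect the argument.
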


In addition, the applications of Theorems~\ref{Thm:NestedProdExt} and~\ref{Thm:SigmaExtensionStable} yield the following statement.

\begin{corollary}
	Let $\rGG=\KK(x)$ with $\KK={\cal A}(y_1,\dots,y_{o})$ be a rational function field over an algebraic number field ${\cal A}$. Then for $A_1,\dots,A_u\in\Sum\Prod(\GG)$ there is a basic \rpiE-extension $\dfield{\EE}{\sigma}$ of $\dfield{\rGG'}{\sigma}$ with an evaluation function $\ev:\EE\times\NN\to\KK'$ and $o$-function $L:\EE\to\NN$ (using Lemmas~\ref{Lemma:EvConstruction} and~\ref{Lemma:ZConstruction}) with the following properties:
\begin{enumerate}
	\item  The ground field $\rGG$ is extended to $\rGG'=\KK'(x)$ where $\KK'={\cal A'}(y_1,\dots,y_{o})$ with ${\cal A'}$ being an algebraic field extension of ${\cal A}$.	
	\item Within the \rpisiE-monomials in $\dfield{\EE}{\sigma}$ there is at most one \rE-monomial.
	\item $A_1,\dots,A_u$ are modeled by $a_1,\dots,a_u\in\EE$, i.e., for all $1\leq i\leq u$ we have~\eqref{Equ:ModelAi} for some explicitly given $\delta_i\in\NN$ with $\delta_i\geq L(a_i)$.
\end{enumerate}
\end{corollary}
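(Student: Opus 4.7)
The plan is to combine Theorem~\ref{Thm:NestedProdExt} with Theorem~\ref{Thm:SigmaExtensionStable}, in direct analogy with the proofs of Corollaries~\ref{Cor:FactorizableFieldConstruction} and~\ref{Cor:NestedFieldConstruction}. First, I would collect all products (including those nested inside sums) that appear in $A_1,\dots,A_u$; call this finite set $P_1,\dots,P_w \in \Prod(\rGG)$. Applying Theorem~\ref{Thm:NestedProdExt} to $P_1,\dots,P_w$ yields an algebraic extension $\calA'$ of $\calA$, the corresponding rational function field $\KK' = \calA'(y_1,\dots,y_o)$, the ground field $\rGG' = \KK'(x)$, and a basic \rpiE-extension $\dfield{\HH}{\sigma}$ of $\dfield{\rGG'}{\sigma}$ equipped with an evaluation function $\ev$ and $o$-function $L$ (via Lemmas~\ref{Lemma:EvConstruction} and~\ref{Lemma:ZConstruction}), containing at most one \rE-monomial, and with elements $p_1,\dots,p_w \in \HH$ that model $P_1,\dots,P_w$.

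Next, I would apply Theorem~\ref{Thm:SigmaExtensionStable} to $\dfield{\HH}{\sigma}$ and the input $A_1,\dots,A_u \in \Sum\Prod(\rGG) \subseteq \Sum\Prod(\rGG')$. Since by construction $\dfield{\HH}{\sigma}$ already models every product occurring in $A_1,\dots,A_u$, the hypothesis of Theorem~\ref{Thm:SigmaExtensionStable} is satisfied, and we obtain a \sigmaE-extension $\dfield{\EE}{\sigma}$ of $\dfield{\HH}{\sigma}$ together with extended $\ev$ and $L$ and elements $a_1,\dots,a_u \in \EE$ that model $A_1,\dots,A_u$ with explicitly given $\delta_i \geq L(a_i)$. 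By the chain of definitions, $\dfield{\EE}{\sigma}$ is a basic \rpisiE-extension of $\dfield{\rGG'}{\sigma}$ (adjoining \sigmaE-monomials on top of a basic \rpiE-extension preserves the basic property, since \sigmaE-monomials impose no condition on the simple/basic product group). The \rE-monomial count is unchanged, so at most one \rE-monomial remains in $\EE$, giving properties (1)--(3) simultaneously.

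The only subtle point is the algorithmic availability of Theorem~\ref{Thm:SigmaExtensionStable} over the enlarged ground field $\rGG' = \KK'(x)$: one needs $\KK'$ to be $\sigma$-computable so that the parameterized telescoping test used inside the proof of Theorem~\ref{Thm:SigmaExtensionStable} actually terminates. This is where I expect the main care to be required. However, since $\calA'$ is still an algebraic number field and $\KK' = \calA'(y_1,\dots,y_o)$ is a rational function field over it, Corollary~\ref{Cor:VPiSiRPiSi} (together with \cite[Thm.~3.5]{Schneider:05c}) guarantees the $\sigma$-computability of $\KK'$; likewise Theorem~\ref{Thm:NestedProdExt} delivers a computable $\ev$ and $L$ on $\HH$. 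Hence both invocations are effective, yielding the desired basic \rpisiE-extension together with the required data, and completing the proof.
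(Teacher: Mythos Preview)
Your proof is correct and follows precisely the approach the paper indicates: the paper simply states that ``the applications of Theorems~\ref{Thm:NestedProdExt} and~\ref{Thm:SigmaExtensionStable} yield the following statement'' without further detail, and you have spelled out exactly this combination. Your additional remarks on $\sigma$-computability of $\KK'$ go slightly beyond what the bare existence statement requires, but they are correct and in the spirit of the surrounding corollaries.
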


In particular, activating our method \MText{SigmaReduce} in combination with Corollaries~\ref{Cor:FactorizableFieldConstruction} and~\ref{Cor:NestedFieldConstruction} establishes Theorems~\ref{Thm:SigmaReduce1} and~\ref{Thm:SigmaReduce1}, respectively.

Most of the above results are implemented within the summation package \texttt{Sigma} or are available by using in addition the package \texttt{NestedProducts}. Further details can be found in the following remark.

\begin{programcode}{Technical details of the summation package {\tt Sigma}}

\vspace*{-0.8cm}

\begin{remark}\label{Remark:ControlTower}
	(1) Within \texttt{Sigma} the function call \texttt{SigmaReduce} follows the method given on page~\pageref{Method:SigmaReduce}. 
	Note that in this construction the $\sigma$-reduced set $W$ is constructed by treating stepwise the sums and products that occur in the $A_i$. \\
	(2)	The user can control the $\sigma$-reduced set $W$ manually by introducing extra sums and products with the option {\tt Tower$\to\{S_1,\dots,S_v\}$} that will be parsed before the arising sums in $A_1,\dots,A_u$ are considered; as an example we refer to \myIn{\ref{MMA:TowerExp}} in Example~\ref{Exp:SigmaReduceVar}. \\
	(3) \texttt{Sigma} is tuned for expressions from $\Sum\Prod_1(\GG)$ where the constant field $\KK$ is a completely factorizable field. In particular for the case that $\KK$ is a rational function field over the rational numbers, the machinery given in Corollary~\ref{Cor:FactorizableFieldConstruction} is highly robust. \texttt{Sigma} also works partially with rational function fields over algebraic number fields; but  here it depends on the stability of the subroutines in Mathematica.\\ 
	(4) For nested products the machinery of \texttt{SigmaReduce} works if the objects can be transformed straightforwardly to \rpisiE-extensions. For more complicated situations the objects $\Sum\Prod(\rGG)$ can be handled fully algorithmically in combination with Ocansey's package \texttt{NestedProducts}. 
\end{remark}

\vspace*{-0.3cm}

\end{programcode}

\begin{remark}
As observed in~\cite{Bluemlein:04} an algebraically independent basis of certain classes of indefinite nested sums can be obtained by exploiting the underlying quasi-shuffle algebra. In~\cite{Bluemlein:04} this aspect has been utilized for the class of harmonic sums, and it has been enhanced for generalized, cyclotomic and binomial sums in~\cite{ABS:11,ABS:13,ABRS:14}. Later it has been shown in~\cite{AS:18} that the relations in the class of cyclotomic harmonic sums produced by difference ring theory (compare Theorem~\ref{Thm:InjectiveProp}) and by the quasi-shuffle algebra are equivalent. As a consequence, the quasi-shuffle algebra of cyclotomic sums induces a canonical representation.
We emphasize that many of the above aspects can be carried over to a summation theory of unspecified sequences~\cite{PS:19}.
\end{remark}

\subsubsection{Depth-optimal representations}\label{Sec:DepthOptimalRep}

In~\cite{Schneider:05f,Schneider:08c} we have refined Karr's definition of \pisiE-field extensions to depth-optimal \pisiE-field extensions and have developed improved telescoping algorithms therein. In this way, we could provide a general toolbox in~\cite{Schneider:10b} that can find representations such that the nesting depths of the arising sums are minimal. As it turns out, the underlying telescoping algorithms can be adapted (and even simplified) for \rpisiE-extensions. 
For the specification of the refined representation (without entering into technical details) we need the following definition.

\begin{definition}
A finite set $W\subset\Sigma\Pi(\GG)$ is called \emph{depth-optimal} if for any $G\in\Sum\Prod(W,\GG)$ and $G'\in\Sum\Prod(\GG)$ with $G(n)=G'(n)$ for all $n\geq\delta$ for some $\delta\in\NN$ it follows that $\delta(G)\leq\delta(G')$ holds.
\end{definition}

Then combining the results from Section~\ref{Sec:BasicSumRep} with the tools from~\cite{Schneider:05f,Schneider:08c,Schneider:10b} we obtain algorithms that can solve the following problem if $\KK$ is $\sigma$-computable; for simplicity we skipped the general case $\Sum\Prod(\GG)$. Further technical details concerning the implementation in \texttt{Sigma} can be found in Remark~\ref{Remark:ControlTower}.

\begin{programcode}{Problem DOS: \textsf{Depth-optimal SigmaReduce}}
\hspace*{-0.1cm}\noindent\begin{minipage}[t]{1.1cm}Given:\end{minipage}\begin{minipage}[t]{10.4cm}
$A_1,\dots,A_u\in\Sum\Prod_1(\mGG)$.
\end{minipage}\\
\noindent\begin{minipage}[t]{1.1cm}Find:\end{minipage}\begin{minipage}[t]{10.4cm}
 a finite $\sigma$-reduced depth-optimal set $W\subset\Sigma\Pi_1(\mGG)$ together with $B_1,\dots,B_u\in\Sum\Prod(W,\mGG)$ and $\delta_1,\dots,\delta_u\in\NN$ such that $A_i(n)=B_i(n)$ holds for all $n\geq\delta_i$ and $1\leq i\leq u$
\end{minipage}\\[0.2cm]
\end{programcode}

\begin{example}\quad
Given the sums $A_1,A_2,A_3\in\Sum(\QQ(x))$ defined by
\begin{mma}
	\In \{A_1,A_2,A_3\}=\Bigg\{\sum_{k=1}^n \frac{\displaystyle\Big(
		\ssumB{i=1}k \frac{1}{i^2}\Big) 
		\ssumB{i=1}k \frac{(-1)^i}{i}}{1+k},\sum_{k=1}^n \frac{\displaystyle\Big(
		\ssumB{i=1}k \frac{1}{i^2}\Big) 
		\ssumB{i=1}k \frac{(-1)^i}{i}}{2+k},\sum_{k=1}^n \frac{\displaystyle\Big(
		\ssumB{i=1}k \frac{1}{i^2}\Big) 
		\ssumB{i=1}k \frac{(-1)^i}{i}}{3+k}
	\Bigg\};\\
\end{mma}
\noindent we get the alternative expressions $B_1,B_2,B_3\in\Sum\Prod(W,\QQ(x))$ by executing
\begin{mma}
	\In \{B_1,B_2,B_3\}=SigmaReduce[\{A_1,A_2,A_3\},n]\\
	\Out \Bigg\{
	\sum_{k=1}^n \frac{\displaystyle\Big(
		\ssumB{i=1}k \frac{1}{i^2}\Big) 
		\ssumB{i=1}k \frac{(-1)^i}{i}}{1+k},
	\sum_{k=1}^n \frac{\displaystyle\Big(
		\ssumB{i=1}k \frac{1}{i^2}\Big) 
		\ssumB{i=1}k \frac{(-1)^i}{i}}{2+k},\newline
	\frac{3}{16}
	+\frac{(-3-2 n) (-1)^n}{8 (1+n) (2+n)}
	+\frac{(-1)^n}{2 (2+n)}\sum_{i=1}^n \frac{1}{i^2}+
	\frac{1}{2} 
	\sum_{i=1}^n \frac{\displaystyle(-1)^i}{i^2}
	+\frac{
		-3+2 n+2 n^2}{4 (1+n) (2+n)}\sum_{i=1}^n \frac{(-1)^i}{i}\newline
	-\frac{(1+n) (5+2 n)}{2 (2+n) (3+n)}\Big(
	\sum_{i=1}^n \frac{1}{i^2}\Big) 
	\sum_{i=1}^n \frac{(-1)^i}{i}
	+\frac{1}{2} 
	\sum_{i=1}^n \frac{\displaystyle\Big(
		\ssumB{j=1}i \frac{1}{j^2}\Big) 
		\ssumB{j=1}i \frac{(-1)^j}{j}}{1+i}
	+\frac{1}{2} 
	\sum_{i=1}^n \frac{\displaystyle\Big(
		\ssumB{j=1}i \frac{1}{j^2}\Big) 
		\ssumB{j=1}i \frac{(-1)^j}{j}}{2+i}
	\Bigg\}\\
\end{mma}
\noindent with the $\sigma$-reduced set
$$W=\Big\{\sum_{k=1}^n\frac1{k^2},\sum_{k=1}^n\frac{(-1)^k}{k},\sum_{i=1}^n\frac{(-1)^i}{i^2},\sum_{k=1}^n \frac{\displaystyle\Big(
	\ssumB{i=1}k \frac{1}{i^2}\Big) 
	\ssumB{i=1}k \frac{(-1)^i}{i}}{1+k},\sum_{k=1}^n \frac{\displaystyle\Big(
	\ssumB{i=1}k \frac{1}{i^2}\Big) 
	\ssumB{i=1}k \frac{(-1)^i}{i}}{2+k}\Big\}.$$
Note: instead of $A_3$ (a sum of nesting depth 3) the simpler sum $\sum_{i=1}^n\frac{(-1)^i}{i^2}$ (with nesting depth $2$) has been introduced automatically.
\end{example}

\begin{remark}
Further refined \pisiE-extensions, such as reduced \pisiE-extensions, have been elaborated in~\cite{Schneider:10a} 
(based on improved telescoping algorithms given in~\cite{Schneider:04a,Schneider:15}).
\end{remark}

%
%
%

\section{The summation paradigms}\label{Sec:DefiniteProblem}

We have explained in detail how sums and products can be modeled automatically within \rpisiE-extensions. Thus steps~1 and~3 on page~\eqref{BasicSigmaStrategy} are settled and we focus on step~2: We will introduce the summation paradigms in difference rings and fields; further details how these problems are handled in \texttt{Sigma} are given below.

\subsection{Refined telescoping}\label{Sec:RefinedTelescoping}

As indicated in Section~\ref{Sec:BasicSumRep}, in particular in Theorem~\ref{Thm:SigmaExtensionStable}, the construction of basic \rpisiE-extensions for the representation of $\Sum\Prod(\GG)$ is based on algorithms that solve the telescoping problem~\eqref{Equ:TeleDF}. In particular, the quality of the constructed extensions and the used telescoping algorithms are mutually intertwined. As illustrated for instance in Section~\ref{Sec:DepthOptimalRep}, the underlying telescoping algorithms could be refined further (using~\cite{Schneider:05f,Schneider:08c,Schneider:10b}) to compute depth-optimal representations.

In the following we will focus on the available telescoping technologies in \texttt{Sigma} (based on
~\cite{Schneider:04a,Schneider:05f,Schneider:07d,Schneider:08c,Schneider:10a,Schneider:10b,Schneider:10c,Schneider:15})
that enable one to simplify sums further. For simplicity we will focus on sums from $\Sigma\Pi_1(\mGG)$ and skip, e.g., the case $\Sigma\Pi(\rGG)$.

\begin{programcode}{Problem RT: Refined Telescoping}
	\hspace*{-0.1cm}\noindent\begin{minipage}[t]{1.1cm}Given:\end{minipage}\begin{minipage}[t]{10.4cm}
		$F\in\Sum\Prod_1(\mGG)$.
	\end{minipage}\\
	\noindent\begin{minipage}[t]{1.1cm}Find:\end{minipage}\begin{minipage}[t]{10.4cm}
		$\delta\in\NN$ and 
		a $\sigma$-reduced set $W=\{T_1,\dots,T_e\}\subset\Sigma\Pi_1(\mGG)$ where $\depth(T_1)\leq\depth(T_2)\leq\dots\leq\depth(T_e)$ together with 
		$F',G\in\Sum\Prod(W,\mGG)$ such that for all $k\geq\delta$ we have 
		$F(k)=F'(k)$ and

\vspace*{-0.2cm}

		$$G(k+1)-G(k)=F'(k).$$
	\end{minipage}
	\begin{itemize}
		\item\textsf{Refinement 1:} $W$ is depth-optimal (by using \MText{\small SimplifyByExt$\to$MinDepth}).
		\item\textsf{Refinement 2:} In addition, if $\depth(G)=\depth(F')+1$, then $\depth(T_{e-1})<\depth(T_e)=\depth(G)$ and $T_e=\Sum(\delta,H)$ with $H\in\Sum\Prod( \{T_1,\dots,T_{i}\},\mGG)$ where $i$ with $1\leq i<e$ is minimal (by using \MText{\small SimplifyByExt$\to$DepthNumber}).
		\item\textsf{Refinement 3:} One can compute, among all possible choices with $i$ minimal, $H$ such that also $\deg_{T_i}$ is minimal (by using \MText{\small SimplifyByExt$\to$DepthNumberDegree}).

	\vspace*{-0.3cm}

	\end{itemize} 
\end{programcode}

\noindent Given such $G$ and $\delta\in\NN$ for $F$ we obtain the simplification~\eqref{Equ:TeleSummed} for all $n\geq\delta$.

\begin{example}\label{Exp:SigmaReduceVar}
	We start with the following sum:
	\begin{mma}
		\In mySum1=\sum_{k=1}^n \Big(
		\sum_{j=1}^k \frac{(-1)^j}{j^2}
		\Big)
		\Big(
		\sum_{j=1}^k \frac{(-1)^j}{j}\Big)^2;\\
	\end{mma}
	\noindent Telescoping without any refinements (by setting  \MText{SimplifyByExt$\to$None}) does not yield a simplification.
	\noindent However, by activating the first refinement with the option \MText{SimplifyByExt$\to$MinDepth} (which actually is the default option) we get
	\begin{mma}
		\In SigmaReduce[mySum1,n,SimplifyByExt\to MinDepth]\\
		\Out \frac{1}{3} 
		\sum_{i=1}^n \frac{(-1)^i}{i^3}
		+(-1)^{1+n} \Big(
		\sum_{j=1}^n \frac{(-1)^j}{j^2}\Big) 
		\sum_{j=1}^n \frac{(-1)^j}{j}
		+(1+n) \Big(
		\sum_{j=1}^n \frac{(-1)^j}{j^2}
		\Big)
		\Big(
		\sum_{j=1}^n \frac{(-1)^j}{j}\Big)^2
		-\frac{1}{3} \Big(
		\sum_{j=1}^n \frac{(-1)^j}{j}\Big)^3\\
	\end{mma}
	We illustrate the second refinement with the sum:
	\begin{mma}	
		\In mySum2=\sum_{k=1}^n \Big(
		\sum_{j=1}^k \frac{(-1)^j}{j^2}
		\Big)
		\Big(
		\sum_{j=1}^k \frac{(-1)^j}{j}\Big)^3;\\
	\end{mma}
	\noindent \begin{mma}\MLabel{MMA:TowerExp}
		\In SigmaReduce[mySum2,n,SimplifyByExt\to DepthNumber,Tower\to\Big\{\ssumB{i=1}n\frac{(-1)^i}{i},\ssumB{i=1}n\frac{(-1)^i}{i^2}\Big\}\newline \hspace*{1.5cm}SimpleSumRepresentation\to False]\\
		\Out \frac{1}{4} \Big(
		\sum_{j=1}^n \frac{(-1)^j}{j^2}\Big)^2
		-\frac{3}{2} (-1)^n \Big(
		\sum_{j=1}^n \frac{(-1)^j}{j^2}
		\Big)
		\Big(
		\sum_{j=1}^n \frac{(-1)^j}{j}\Big)^2
		+(1+n) \Big(
		\sum_{j=1}^n \frac{(-1)^j}{j^2}
		\Big)
		\Big(
		\sum_{j=1}^n \frac{(-1)^j}{j}\Big)^3\newline
		-\frac{1}{4} 
		\sum_{i=1}^n \Big(
		\frac{1}{i^4}
		-\frac{6 \Big(
			\ssumB{j=1}i \frac{(-1)^j}{j}\Big)^2}{i^2}
		+\frac{4 (-1)^i \Big(
			\ssumB{j=1}i \frac{(-1)^j}{j}\Big)^3}{i}
		\Big)\\
	\end{mma}
	\noindent Namely, within the given extension (specified by \MText{Tower}$\to\big\{\ssumB{i=1}n\frac{(-1)^i}{i},\ssumB{i=1}n\frac{(-1)^i}{i^2}\big\}$, compare Remark~\ref{Remark:ControlTower}) we find a sum extension which is free of $\ssumB{i=1}n\tfrac{(-1)^i}{i^2}$. Without the option \MText{SimpleSumRepresentation$\to$False} further simplifications on the found sum (using in addition partial fraction decomposition) are applied and one gets:
	\begin{mma}
		\In SigmaReduce[mySum2,n,SimplifyByExt\to DepthNumber,Tower\to\Big\{\ssumB{i=1}n\frac{(-1)^i}{i},\ssumB{i=1}n\frac{(-1)^i}{i^2}\Big\}]\\
		\Out -\frac{1}{4} 
		\sum_{i=1}^n \frac{1}{i^4}
		+\frac{1}{4} \Big(
		\sum_{j=1}^n \frac{(-1)^j}{j^2}\Big)^2
		-\frac{3}{2} (-1)^n \Big(
		\sum_{j=1}^n \frac{(-1)^j}{j^2}
		\Big)
		\Big(
		\sum_{j=1}^n \frac{(-1)^j}{j}\Big)^2\newline
		+(1+n) \Big(
		\sum_{j=1}^n \frac{(-1)^j}{j^2}
		\Big)
		\Big(
		\sum_{j=1}^n \frac{(-1)^j}{j}\Big)^3
		+\frac{3}{2} 
		\sum_{i=1}^n \frac{\Big(
			\ssumB{j=1}i \frac{(-1)^j}{j}\Big)^2}{i^2}
		-
		\sum_{i=1}^n \frac{(-1)^i \Big(
			\ssumB{j=1}i \frac{(-1)^j}{j}\Big)^3}{i}\\
	\end{mma}
	\noindent If one changes the order of the extension with the option \MText{Tower}$\to\big\{\ssumB{i=1}n\frac{(-1)^i}{i^2},\ssumB{i=1}n\frac{(-1)^i}{i}\big\}$, no simplification is possible with the option \MText{SimplifyByExt$\to$DepthNumber}. However, using the option \MText{SimplifyByExt$\to$DepthNumberDegree} one finds a sum extension where in the summand the degree w.r.t.\ $T=\ssumB{i=1}n\frac{(-1)^i}{i}$ is minimal. In this case we find
	\begin{mma}
		\In SigmaReduce[mySum2,n,SimplifyByExt\to DepthNumberDegree,Tower\to\Big\{\ssumB{i=1}n\frac{(-1)^i}{i^2},\ssumB{i=1}n\frac{(-1)^i}{i}\Big\}\newline \hspace*{1.5cm}SimpleSumRepresentation\to False]\\
		\Out -\frac{3}{2} (-1)^n \Big(
		\sum_{j=1}^n \frac{(-1)^j}{j^2}
		\Big)
		\Big(
		\sum_{j=1}^n \frac{(-1)^j}{j}\Big)^2
		+(1+n) \Big(
		\sum_{j=1}^n \frac{(-1)^j}{j^2}
		\Big)
		\Big(
		\sum_{j=1}^n \frac{(-1)^j}{j}\Big)^3
		-\frac{1}{4} \Big(
		\sum_{j=1}^n \frac{(-1)^j}{j}\Big)^4\newline
		+\frac{1}{4} 
		\sum_{i=1}^n \Big(
		-\frac{3}{i^4}
		+\frac{2 (-1)^i}{i^2}\ssumB{j=1}i \frac{(-1)^j}{j^2}
		+\frac{4 (-1)^i}{i^3}\ssumB{j=1}i \frac{(-1)^j}{j}
		\Big)\\
	\end{mma}
	\noindent where in the summand of the found sum the degree w.r.t.\ $T$ is $1$. With the option \MText{SimpleSumRepresentation$\to$True} (which is the standard option) this sum is simplified further (by splitting it into atomics by partial fraction decomposition) and we get:
	\begin{mma}
		\In SigmaReduce[mySum2,n,SimplifyByExt\to DepthNumberDegree,Tower\to\Big\{\ssumB{i=1}n\frac{(-1)^i}{i^2},\ssumB{i=1}n\frac{(-1)^i}{i}\Big\}\\
		\Out -\frac{1}{2} 
		\sum_{i=1}^n \frac{1}{i^4}
		+\frac{1}{4} \Big(
		\sum_{j=1}^n \frac{(-1)^j}{j^2}\Big)^2
		-\frac{3}{2} (-1)^n \Big(
		\sum_{j=1}^n \frac{(-1)^j}{j^2}
		\Big)
		\Big(
		\sum_{j=1}^n \frac{(-1)^j}{j}\Big)^2\newline
		+(1+n) \Big(
		\sum_{j=1}^n \frac{(-1)^j}{j^2}
		\Big)
		\Big(
		\sum_{j=1}^n \frac{(-1)^j}{j}\Big)^3
		-\frac{1}{4} \Big(
		\sum_{j=1}^n \frac{(-1)^j}{j}\Big)^4
		+
		\sum_{i=1}^n \frac{(-1)^i}{i^3}\sum_{j=1}^i \frac{(-1)^j}{j}\\
	\end{mma}
\end{example}

\subsection{Parameterized telescoping (including creative telescoping)}\label{Sec:ParaTele}

The summation paradigm of telescoping can be generalized as follows.

\begin{programcode}{Problem PT: Parameterized Telescoping}
	\hspace*{-0.1cm}\noindent\begin{minipage}[t]{1.1cm}Given:\end{minipage}\begin{minipage}[t]{10.4cm}
	$F_1,\dots,F_d\in\Sum\Prod(\GG)$ with $\GG\in\{\rGG,\bGG,\mGG\}$. \end{minipage}\\
	\noindent\begin{minipage}[t]{1.1cm}Find:\end{minipage}\begin{minipage}[t]{10.4cm}
Find, if possible, a suitable $\sigma$-reduced finite set $W\subset\Sigma\Pi(\GG')$ and $\delta\in\NN$ with the following properties;
as in Problem~\textsf{SigmaReduce}, one might have to extend the constant field $\KK$ of $\GG$ to $\KK'$ yielding $\GG'$.

	\end{minipage}

\vspace*{-0.2cm}

\begin{itemize}
	\item One can take $F'_1,\dots,F'_d\in\Sum\Prod(W,\GG')$ such that for all $1\leq i\leq d$ and all $k\geq\delta$ we have $F_i(k)=F'_i(k)$;
	\item one can take $c_1,\dots,c_d\in\KK'$ with $c_1\neq0$ and $G\in\Sum\Prod(W,G')$ such that for all $k\geq\delta$ we have
	\begin{equation}\label{Equ:ParaTele}
	G(k+1)-G(k)=c_1\,F'_1(k)+\dots+c_d\,F'_d(k).
	\end{equation}

\vspace*{-0.7cm}

\end{itemize}
\end{programcode}
\noindent Given such $c_1,\dots,c_d\in\KK$, $G$ and $\delta\in\NN$ for $F_1,\dots,F_d$, we obtain
\begin{equation}\label{Equ:ParaSumRel}
c_1\,\sum_{k=\delta}^n F_1(k)+\dots+c_d\,\sum_{k=\delta}^n F_d(k)=G(n+1)-G(\delta)
\end{equation}
for all $n\geq\delta$. In particular, if one is given a bivariate sequence $F(n,k)$ with $F_i(k)=F(n+i-1,k)\in\Sum\Prod(\GG)$ for $i=1,\dots,d$, equation~\eqref{Equ:ParaTele} turns into~\eqref{Equ:Crea}.
In particular, the sum relation~\eqref{Equ:ParaSumRel} can can be transformed to the recurrence~\eqref{Equ:RecGeneral} for the sum $S(n)=\sum_{k=\delta}^nF(n,k)$. Summarizing, parameterized telescoping contains creative telescoping~\cite{Zeilberger:91} as a special case.

A straightforward solution to the above problem can be obtained by the application of Theorem~\ref{Thm:SolveProblemV}.  In the context of $\sigma$-reduced sets this can be rephrased as follows.

\begin{proposition}\label{Prop:SolvePLDEForW}
	Let $W=\{T_1,\dots,T_e\}\subseteq\Sigma\Pi(\GG)$ be $\sigma$-reduced where for each $1\leq i\leq e$ the arising sums and products within $T_i$ are contained in $\{T_1,\dots,T_{i-1}\}$ and are in sum-product reduced form. Let $F'_1,\dots,F'_d\in\Sum\Prod(W,\GG)$. Then one can compute, in case of existence, $(c_1,\dots,c_d)\in\KK^d$ with $c_1\neq0$ together with $G\in\Sum\Prod(W,\GG)$ and $\delta\in\NN$ such that~\eqref{Equ:ParaTele}
	holds for all $k\geq\delta$.
\end{proposition}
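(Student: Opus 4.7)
The plan is to translate the parameterized telescoping problem into a parameterized first-order linear difference equation over a suitable basic \rpisiE-extension and then invoke the solver of Theorem~\ref{Thm:SolveProblemV1}. First, I would apply Corollary~\ref{Cor:RPS=Canonical} to the $\sigma$-reduced, shift-stable set $W$ (shift-stability being guaranteed by the imposed ordering on the $T_i$) to obtain a basic \rpisiE-extension $\dfield{\EE}{\sigma}$ of $\dfield{\GG}{\sigma}$ with $\EE=\GG\lr{t_1}\dots\lr{t_e}$ and an evaluation function $\ev$ such that $\expr(t_i)=T_i$ for each $i$. Using the bookkeeping step of the construction in Lemma~\ref{Lemma:SwitchAToAPS}, each $F'_i\in\Sum\Prod(W,\GG)$ would then be lifted to an element $f'_i\in\EE$, reusing the already present $t_j$'s rather than adjoining new monomials, so that $\ev(f'_i,n)=F'_i(n)$ for all $n\geq L(f'_i)$.

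Next, I would observe that an identity $\sigma(g)-g=c_1 f'_1+\dots+c_d f'_d$ in $\EE$ translates, via~\eqref{Equ:evShift}, into the required equation $G(k+1)-G(k)=c_1 F'_1(k)+\dots+c_d F'_d(k)$ for all sufficiently large $k$, with $G:=\expr(g)\in\Sum\Prod(W,\GG)$ delivered by Lemma~\ref{Lemma:SwitchAPSToA}. Conversely, by the injectivity of the induced $\KK$-homomorphism $\tau$ (Theorem~\ref{Thm:InjectiveProp}, applicable since $\dfield{\EE}{\sigma}$ is an \rpisiE-extension), the existence of a term-algebra solution is equivalent to the existence of a tuple $(c_1,\dots,c_d,g)\in V_1(1,(f'_1,\dots,f'_d),\EE)$ with $c_1\neq0$. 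Hence the entire problem collapses to a linear algebra question over $\KK$ on the vector space $V_1$.

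This question is answered algorithmically by Theorem~\ref{Thm:SolveProblemV1}: one computes a basis of $V_1(1,(f'_1,\dots,f'_d),\EE)$, whose dimension is at most $d+1$. Projecting onto the first coordinate yields a $\KK$-linear map; if its image is $\{0\}$ then no solution with $c_1\neq0$ exists, otherwise an appropriate $\KK$-linear combination of basis elements produces the desired tuple $(c_1,\dots,c_d,g)$. Finally, setting $G:=\expr(g)$ and choosing $\delta\geq\max(L(g)+1,L(f'_1),\dots,L(f'_d))$ guarantees that $\ev(g,k)=G(k)$, $\ev(g,k+1)=G(k+1)$ and $\ev(f'_i,k)=F'_i(k)$ are all valid simultaneously, so the required identity holds for every $k\geq\delta$. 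The main obstacle is really only the effective computation of a basis of $V_1$—all other steps are bookkeeping translations between the term algebra $\Sum\Prod(W,\GG)$ and the formal ring $\EE$ controlled by the $o$-function $L$—and this obstacle is overcome by Corollary~\ref{Cor:VPiSiRPiSi} under the $\sigma$-computability assumption on $\KK$ that underlies this section.
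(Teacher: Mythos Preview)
Your proposal is correct and follows essentially the same route as the paper: pass to the basic \rpisiE-extension via Corollary~\ref{Cor:RPS=Canonical}, lift the $F'_i$ to elements $f'_i\in\EE$, reduce the parameterized telescoping problem to computing a basis of $V_1(1,(f'_1,\dots,f'_d),\EE)$, and translate back via $\expr$. The only cosmetic difference is that the paper cites the higher-order solver (Theorem~\ref{Thm:SolveProblemV}) in place of your combination of Theorem~\ref{Thm:SolveProblemV1} and Corollary~\ref{Cor:VPiSiRPiSi}, and the paper's bound $\delta$ is stated as $\max(L(F'_1),\dots,L(F'_d),L(G))$ rather than your slightly more conservative $\max(L(g)+1,\dots)$.
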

\begin{proof}
By Corollary~\ref{Cor:RPS=Canonical} we get an \rpisiE-extension $\dfield{\EE}{\sigma}$ of $\dfield{\GG}{\sigma}$ with $\EE=\GG\lr{t_1}\dots,\lr{t_e}$ together with an evaluation function $\ev$ and $o$-function $L$ with $\expr(t_i)=T_i$ for all $1\leq i\leq e$. In particular, we get $\vect{f}=(f_1,\dots,f_d)\in\EE^d$ with $\ev(f_i,k)=F'_i(k)$ for all $1\leq i\leq u$ and all $n\geq L(f_i)$. Note that $(c_1,\dots,c_d,G)\in\KK^d\times\Sum\Prod(W,\GG)$ with~\eqref{Equ:ParaTele} for all $k\geq\delta$ for some $\delta\in\NN$ iff $(c_1,\dots,c_d,g)\in\KK^d\times\EE$.
By Theorem~\ref{Thm:SolveProblemV} we can compute a basis $V=V_1(1,\vect{f},\EE)$ and can check if there is $(c_1,\dots,c_d,g)\in V$ with $c_1\neq0$. If this is not the case, then there is no $(c_1,\dots,c_d,G)\in\KK^d\times\Sum\Prod(W,\GG)$ with $c_1\neq0$. Otherwise, we 
rephrase the result as $(c_1,\dots,c_d,G)\in\KK^d\times\Sum\Prod(W,\GG)$ such that~\eqref{Equ:ParaTele} holds for all $k\geq\delta$ with $\delta=\max(L(F'_1),\dots,L(F'_u),L(G))$.
\end{proof}

\begin{remark}\label{Remark:ParaSigmaRemark}
In Proposition~\ref{Prop:SolvePLDEForW} we assume that the input expressions from $\Sum\Prod(\GG)$ can be rephrased directly in an \rpisiE-extension. If this is not the case, the representation machinery has to be applied in a preprocessing step. To support this construction, the user can control the $\sigma$-reduced set $W$ as outlined in the Remark~\ref{Remark:ControlTower}.(2) above. But this should be done with care in order to avoid useless results.
If $W$ contains, e.g., $T_j\in\Sum(l,F'_1)$, one gets trivially $G=T_j$ and $(c_1,c_2,\dots,c_d)=(1,0,\dots,0)$.
\end{remark}

\begin{example}\label{Exp:DefiniteSummation1} 
	We activate Proposition~\ref{Prop:SolvePLDEForW} to apply Zeilberger's creative telescoping paradigm. Take the summand $F(n,k)$ defined in
\begin{mma}
\In F=\frac{(-1)^k}{k}\binom{n}{k} 
	\sum_{i=1}^k \frac{1}{i}
		\sum_{j=1}^i \frac{1}{j
			+n
	};\\
\end{mma}
\noindent and define the definite sum
\begin{mma}\MLabel{MMA:definiteSum}
	\In definiteSum=SigmaSum[F,\{k,1,n\}]\\
	\Out \sum_{k=1}^n\frac{(-1)^k}{k}\binom{n}{k} 
	\sum_{i=1}^k \frac{1}{i}
	\sum_{j=1}^i \frac{1}{j
		+n
	}\\	
\end{mma}	
\noindent Then we can compute a linear recurrence for \MText{SUM[n]=definiteSum} with the call
\begin{mma}\MLabel{MMA:RecByCrea}
	\In rec=GenerateRecurrence[definiteSum,n,SimplifyByExt\to None]\\
	\Out \Big\{(1+n)^3 (8+3 n)^2 SUM[n]
	+\Big(
	-1692-4306 n-4369 n^2-2202 n^3-549 n^4-54 n^5\Big) SUM[n+1]\newline
	+(7+3 n) \Big(
	554+1072 n+764 n^2+237 n^3+27 n^4\Big) SUM[n+2]\newline
	-2 (3+n)^2 (5+2 n) (5+3 n)^2 SUM[n+3]
	==\frac{808+2008 n+2007 n^2+1017 n^3+261 n^4+27 n^5}{(2+n)^2 (3+n)}\Big\}\\
\end{mma}
\noindent Here \texttt{Sigma} searches for a solution of~\eqref{Equ:Crea} with $d=0,1,2,\dots$ and finally computes a solution for $d=3$. 
Internally, it takes the shifted versions $F(n+i,k)$ with $i=0,1,2,3$
\begin{mma}
\In FList=\{F,(F/.n\to n+1),(F/.n\to n+2),(F/.n\to n+3)\};\\
\end{mma}
\noindent and rewrites the expressions in a $\sigma$-reduced representation:
\begin{mma}
\In FListRed=SigmaReduce[FList,k]\\
\Out \Bigg\{\frac{(-1)^k}{k}\binom{n}{k} 
\sum_{i=1}^k \frac{1}{i}
\sum_{j=1}^i \frac{1}{j
	+n
},\newline
(-1)^k \binom{n}{k}\Bigg(
\tfrac{1}{(1+n) (1
-k
+n
) (1
+k
+n
)}
+\frac{1}{k (-1
+k
-n
)}\sum_{i=1}^k \frac{1}{i
+n
}
+\frac{(-1-n)}{k (-1
+k
-n
)}\sum_{i=1}^k \frac{1}{i}\sum_{j=1}^i \frac{1}{n
+j
}
\Bigg),\dots\Bigg\}\\
\end{mma}
\noindent Here we have printed only the first two entries of the output list. Afterwards it activates Proposition~\ref{Prop:SolvePLDEForW} by executing the command  
\begin{mma}
\In ParameterizedTelescoping[FListRed,n]\\
\Out \{\{0,0,0,0,1\},\{c_1,c_2,c_3,c_4,G\}\}\\
\end{mma}
\noindent The expressions $c_1,c_2,c_3,c_4$ and $G$ equal
\begin{align*}
c_1&=-(1 + n)^3 (8 + 3 n)^2,\\
c_2&=1692 + 4306 n + 4369 n^2 + 2202 n^3 + 
549 n^4 + 
54 n^5,\\
c_3&=-(7 + 3 n) (554 + 1072 n + 764 n^2 + 237 n^3 + 
27 n^4),\\
c_4&= 2 (3 + n)^2 (5 + 2 n) (5 + 3 n)^2,\\
G(n,k)=&(-1)^k \binom{n}{k}\Bigg(Q_1 \sum_{i=1}^k\frac{1}{i}\sum_{j=1}^i\frac{1}{n
		+j
}+Q_2\,\sum_{i=1}^k\frac{1}{i
		+n
}+Q_3\Bigg)
\end{align*}	
for some $Q_1,Q_2,Q_3\in\QQ(n,k)$. Alternatively, \MText{ParameterizedTelescoping[FList,k]} (without \MText{SigmaReduce} as a preprocessing step) could be used. The same result could be produced with \MText{CreativeTelescoping[definiteSum,n,SimplifyByExt$\to$ None]}.

Finally, summing~\eqref{Equ:Crea} with $d=3$ over $k$ from $0$ to $n$ yields the recurrence given in~\myOut{\ref{MMA:RecByCrea}}.
Note that the correctness of the solution $(c_1,c_2,c_3,c_4,G)$ of~\eqref{Equ:Crea} with $d=4$ can be verified straightforwardly: Since $W$ is $\sigma$-reduced, one simply has to plug in the solutions and checks that the left-hand and right-hand sides agree. Thus we have shown rigorously that the definite sum given in~\myIn{\ref{MMA:definiteSum}} is a solution of~\myOut{\ref{MMA:RecByCrea}}. 
\end{example}

In order to introduce refined methods, we need the following definition. 

\begin{definition}
Let $W\subset\Sigma\Pi(\GG)$ be $\sigma$-reduced depth-optimal and $\vect{F}'=(F'_1,\dots,F'_d)\in\Sum\Prod_1(W,\GG)^d$. $W$ is called \emph{$\vect{F}'$-one complete} if the following holds: If there is $(c_1,\dots,c_d,G)\in\KK^{d}\times\Sum\Prod_1(\GG)$ with $c_1\neq0$, $\depth(G)\leq \min(\depth(F'_1),\dots,\depth(F'_d))$ such that~\eqref{Equ:ParaTele} holds for all $n$ sufficiently large, then there is $G'\in\Sum\Prod_1(W,\GG)$ with the same $c_i$ such that\footnote{Since $W$ is depth-optimal, it follows in particular that $\depth(G')\leq\depth(G)$.}~\eqref{Equ:ParaTele} holds ($G$ replaced by $G'$) for all $n$ sufficiently large.
\end{definition}

Using the techniques from~\cite{Schneider:04a,Schneider:05f,Schneider:07d,Schneider:08c,Schneider:10a,Schneider:10b,Schneider:10c,Schneider:15} the following refined parameterized telescoping techniques are available for the class $\Sum\Prod_1(\mGG)$ over a $\sigma$-computable field $\KK$; for simplicity we skip more general cases, like $\Sum\Prod(\rGG)$.

\begin{programcode}{Problem RPT: Refined Parameterized Telescoping}
	\hspace*{-0.1cm}\noindent\begin{minipage}[t]{1.1cm}Given:\end{minipage}\begin{minipage}[t]{10.4cm}
		$F_1,\dots,F_d\in\Sum\Prod_1(\mGG)$. \end{minipage}\\
	\noindent\begin{minipage}[t]{1.1cm}Find:\end{minipage}\begin{minipage}[t]{10.4cm}
		$\delta\in\NN$ and 
		a depth-optimal $\sigma$-reduced set $W=\{T_1,\dots,T_e\}\subset\Sigma\Pi_1(\mGG)$ with $\depth(T_1)\leq\depth(T_2)\leq\dots\leq\depth(T_e)$  
		with the following properties:
	\end{minipage}
	\begin{itemize}
		\item One gets $\vect{F'}=(F'_1,\dots,F'_d)\in\Sum\Prod(W,\mGG)^d$ such that 	
		$F_i(k)=F'_i(k)$ holds for all $1\leq i\leq d$ and $k\geq\delta$.
\end{itemize}
		In addition, based on the refinements given below, one obtains $(c_1,\dots,c_d,G)\in\KK^d\times\Sum\Prod(W,\mGG)$ with $c_1\neq1$
		such that~\eqref{Equ:ParaTele} holds for all $k\geq\delta$.
	\begin{itemize}
		\item\textsf{Refinement 1:} $W$ is $\vect{F'}$-one complete. Further, one can compute (it it exists) such a solution with  $\depth(G)\leq \depth(F'_1)$ (by using \MText{\small SimplifyByExt$\to$MinDepth}). 
		\item\textsf{Refinement 2:} If this is not possible, one gets $\depth(G)=\depth(F'_1)+1$ with the following extra property: $\depth(T_{e-1})<\depth(T_e)=\depth(G)$ and $T_e=\Sum(\delta,H)$ with $H\in\Sum\Prod( \{T_1,\dots,T_{i}\},\mGG)$ where $i$ with $1\leq i<e$ is minimal (by using the option \MText{\small SimplifyByExt$\to$DepthNumber}).
		\item\textsf{Refinement 3:} One can compute, among all possible choices with $i$ minimal, $H$ such that also $\deg_{T_i}$ is minimal (by using \MText{\small SimplifyByExt$\to$DepthNumberDegree}).
	\end{itemize}

\vspace*{-0.3cm}

\end{programcode}

\noindent For technical details concerning \texttt{Sigma} we refer to Remarks~\ref{Remark:ControlTower} and~\ref{Remark:ParaSigmaRemark} above.

\begin{example}
While the standard approach finds for the definite sum given in~\myIn{\ref{MMA:definiteSum}} only a recurrence of order $3$, the refined parameterized telescoping toolbox (refinement 1) computes a recurrence of order $1$:
\begin{mma}
\In GenerateRecurrence[definiteSum,n,SimplifyByExt\to MinDepth]\\
\Out \Big\{SUM[n]
-SUM[n+1]
==\frac{1}{(1+n)^3}
-\frac{1}{2 (1+n)^2}\sum_{i=0}^n \frac{(-1)^i \binom{n}{i}}{1
	+i
	+n
}
+\frac{1}{1+n}\sum_{i=1}^n \frac{(-1)^i \binom{n}{i} 
	\ssumB{j=1}i \frac{1}{n
		+j
}}{i}
\Big\}\\
\end{mma}
\noindent by  introducing in addition the sum $\sum_{i=0}^n \frac{(-1)^i \binom{n}{i}}{1+i+n}$. The right-hand side is given by definite sums which are simpler than the input sum. In this situation, they can be simplified further to 
$$\frac{1}{(1+n)^3}
-\frac{1}{2 (1+n)^2 (1+2 n)}\frac{1}{\binom{2 n}{n}}
+\frac{
	1}{1+n}\sum_{i=1}^n \frac{1}{i^2}
-\frac{3}{1+n}\sum_{i=1}^n \frac{1}{i^2 \binom{2 i}{i}}
$$
in $\Sum\Prod_1(\QQ(x))$
by applying again the creative telescoping paradigm plus recurrence solving (which we will introduce in the next subsection).
\end{example}

This refined version turns out to be highly valuable in concrete applications. First, one can discover in many problems the minimal recurrence relation. Sometimes this enables one even to read off hypergeometric series solutions, like, e.g., in~\cite{PS:03}. In addition, the calculation of such recurrences of lower order is more efficient, and the extra time to simplify the more complicated right hand sides is often negligible. 
In applications from particle physics, like in~\cite{BigSums1,BigSums2}, the standard approach is even out of scope and only our improved methods produced the desired results.

\begin{remark}
	(1) Structural theorems (together with algorithmic versions) that are strongly related to Liouville's theorem of integration~\cite{Liouville:1835,Rosenlicht:68} can be found in~\cite{Schneider:10a}.\\
	(2) Based on~Theorems~\ref{Thm:InjectiveProp} and~\ref{Thm:SigmaTest}  
	  additional aspects of the algebraic independence of indefinite nested sums
	(related to~\cite{Singer:08}) are worked out in~\cite{Schneider:10c} and~\cite[Section~7.2]{DR3}. Namely, if there is no solution of a parameterized telescoping problem (in particular of a creative telescoping problem), then the indefinite sums defined over these parameters are algebraically independent.
\end{remark}

\subsection{Recurrence solving}

Finally, we turn to difference ring algorithms that solve parameterized higher-order linear difference equations.
Let $\dfield{\AR}{\sigma}$ be a difference ring with constant field $\KK$, $\vect{a}=(a_0,\dots,a_m)\in\AR^{m+1}$ and
$\vect{f}=(f_1,\dots,f_d)\in\AR^d$. Then we define~\cite{Karr:81}
\begin{multline*}
V(\vect{a},\vect{f},\AR)=\{(c_1,\dots,c_d,g)\in\KK^d\times
\AR\mid\\
a_{m}\,\sigma^{m}(g)+\dots+a_1\,\sigma(g)+a_0\,g=c_1\,f_1+\dots+c_d\,f_d\};
\end{multline*}
note  that we have $V((-u,1),\vect{f},\AR)=V_1(u,\vect{f},\AR)$. 

In \texttt{Sigma} algorithms are available to solve parameterized linear difference equations that are based on the following theorem.

\begin{theorem}\label{Thm:SolveProblemV}
	Let $\dfield{\EE}{\sigma}$ be a basic \rpisiE-extension of
	a \pisiE-field $\dfield{\FF}{\sigma}$ over $\KK$, $\vect{0}\neq\vect{a}=(a_0,\dots,a_m)\in\FF^{m+1}$ and $\vect{f}\in\EE^d$. Then the following holds: 
	\begin{enumerate}
		\item $V(\vect{a},\vect{f},\EE)$ is a $\KK$-vector space of dimension $\leq m+d$.
		\item If $\KK$ is $\sigma$-computable, then one can compute a basis of $V(\vect{a},\vect{f},\EE)$.
	\end{enumerate}
\end{theorem}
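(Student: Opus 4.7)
The plan is to handle the two parts separately. Part~(1) is a finite-dimensionality statement that reduces to a classical Casoratian argument, while part~(2) is an algorithmic statement that I would prove by induction on the number of \rpisiE-monomials over $\FF$, with the base case supplied by Karr's algorithm over the \pisiE-field and the inductive step reducing problems over $\HH\lr{t}$ to finitely many problems over $\HH$.

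For part~(1), I would consider the $\KK$-linear projection
$\phi\colon V(\vect{a},\vect{f},\EE)\to\KK^d$,
$(c_1,\dots,c_d,g)\mapsto(c_1,\dots,c_d)$.
Note that $V(\vect{a},\vect{f},\EE)$ is a $\KK$-vector space because $\const{\EE}{\sigma}=\KK$, which follows from $\dfield{\EE}{\sigma}$ being an \rpisiE-extension of the \pisiE-field $\dfield{\FF}{\sigma}$. The image of $\phi$ lies in $\KK^d$ and therefore has $\KK$-dimension at most $d$. The kernel consists of pairs $(\vect{0},g)$ with $g$ satisfying the homogeneous equation $a_m\sigma^m(g)+\dots+a_0 g=0$, and since $\EE$ is an integral domain with constant field $\KK$, the standard Casoratian argument bounds the $\KK$-dimension of this homogeneous solution space by $m$. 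Combining the two bounds via rank-nullity yields $\dim_{\KK} V(\vect{a},\vect{f},\EE)\leq m+d$.

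For part~(2), I would induct on the number of \rpisiE-monomials $e$ in $\EE=\FF\lr{t_1}\dots\lr{t_e}$. The base case $e=0$ is Karr's algorithm~\cite{Karr:81}, which applies because $\KK$ is $\sigma$-computable and $\dfield{\FF}{\sigma}$ is a \pisiE-field over $\KK$. For the inductive step I write $\EE=\HH\lr{t}$, so that the parameterized problem is solvable over $\HH$ by induction. The key reduction is coefficient comparison in the reduced representation (Definition~\ref{Def:ShapeOfElements}): substitute the ansatz $g=\sum_i g_i t^i$ with $g_i\in\HH$ into the target equation, expand $\sigma^k(g)$ using the known shift of $t$ (namely $\sigma(t)=\alpha t$ in the \rE- and \piE-cases, $\sigma(t)=t+\beta$ in the \sigmaE-case), and compare coefficients of $t^i$. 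This converts the original problem into finitely many parameterized linear difference equations over $\HH$, with additional $\KK$-parameters absorbing both the $c_j$'s and the cross-couplings between the unknown coefficients $g_i$. Each of these smaller problems is handled by the induction hypothesis; the outputs are then combined by $\KK$-linear algebra, whose finiteness is guaranteed by part~(1).

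The main obstacle is making the ansatz $g=\sum_i g_i t^i$ effectively finite: in the \sigmaE-case one needs a computable upper bound on $\deg_t(g)$, and in the \piE-case one needs both an upper and a lower (``denominator'') bound on the exponents of $t$ appearing in the Laurent expansion of $g$. Without such bounds the coefficient-comparison step would produce infinitely many equations and would not terminate. Deriving these bounds is the technical heart of the construction and reuses the solvability of the auxiliary problems $M(\vect{h},\HH)$ and $V_1(u,\vect{h},\HH)$ available by Theorems~\ref{Thm:SolveProblemM} and~\ref{Thm:SolveProblemV1} (and hence by induction). The \rE-case is easier because the relation $t^{\nu}=1$ already restricts the support to $\{0,\dots,\nu-1\}$, so no extra bound is needed. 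Once the bounds are in hand, the inductive step closes and part~(1) guarantees termination of the final linear-algebra assembly, yielding a basis of $V(\vect{a},\vect{f},\EE)$.
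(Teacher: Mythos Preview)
Your plan for part~(1) --- rank-nullity plus a Casoratian bound on the homogeneous kernel --- matches what the paper cites (\cite[Prop.~3.1.1]{Schneider:01} and \cite[Thm.~XII]{Cohn:65}), but you assert that $\EE$ is an integral domain, and this fails whenever an \rE-monomial $t$ with $t^\nu=1$ is present: $(t-1)(1+t+\dots+t^{\nu-1})=0$ exhibits zero divisors. The Casoratian argument can still be pushed through because the $a_i$ lie in the field $\FF$ (so the extremal nonzero coefficients are units of $\EE$) and, more structurally, because such an $\EE$ splits as a finite direct product of integral domains via orthogonal idempotents; but this has to be said rather than assumed away.

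For part~(2) your inductive architecture is the right shape, but the base case is wrong. At $e=0$ you must compute $V(\vect{a},\vect{f},\FF)$ for an order-$m$ operator, and Karr's algorithm~\cite{Karr:81} only handles $m=1$ (the problem $V_1$). Higher-order parameterized solving over a \pisiE-field is a substantial extension --- it is precisely what the paper imports from \cite[Thm.~9]{ABPS:20}, which in turn rests on dedicated denominator bounds~\cite{Bron:00,Schneider:04b} and degree bounds~\cite{Schneider:05b} for arbitrary order, not on the first-order tools $M$ and $V_1$ that you invoke. The paper then lifts from $\vect{f}\in\FF^d$ to $\vect{f}\in\EE^d$ via~\cite{AS:21}, where again the idempotent (non-domain) structure coming from the \rE-part must be handled explicitly --- a point your coefficient-comparison sketch leaves unaddressed.
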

\begin{proof}
	(1) follows by a slight variant of \cite[Prop~3.1.1]{Schneider:01} and~\cite[Thm.~XII (page 272)]{Cohn:65}. By~\cite[Theorem~9]{ABPS:20} (based on~\cite{Abramov:89a,Petkov:92,Bron:00,Schneider:04b,Schneider:05a,Schneider:05b}) the statement~(2) holds for $\vect{f}\in\FF^n$. Thus with~\cite{AS:21} statement~(2) holds also for $\vect{f}\in\EE^n$.
\end{proof}

In addition, \texttt{Sigma} contains a solver that finds all hypergeometric solutions in the setting of \pisiE-fields. This result follows by Theorems~9 and~10 of~\cite{ABPS:20}, which can be considered as the differential version of Singer's celebrated algorithm~\cite{Singer:91} that finds Liouvillian solutions of linear differential equations with Liouvillian coefficients.

\begin{theorem}\label{Thm:SolveHyper}
	Let $\dfield{\FF}{\sigma}$ be a \pisiE-field over a $\sigma$-computable $\KK$. Let $a_0,\dots,a_m\in\FF$ with $a_0\,a_m\neq 0$. Then one can compute a $P$-extension $\dfield{\EE}{\sigma}$ of $\dfield{\FF}{\sigma}$ with $\EE=\FF\lr{t_1}\dots\lr{t_e}$ and $\frac{\sigma(t_i)}{t_i}\in\FF^*$ and finite sets $\emptyset\neq S_i\subset\FF^*$ for $1\leq i\leq e$ as follows. 
	\begin{enumerate}
		\item 
		For any $1\leq i\leq e$ and any $h\in S_i$ it follows that $g=h t_i$ is a solution of 
		\begin{equation}\label{Equ:HomSol}
		a_{m}\,\sigma^{m}(g)+\dots+a_1\,\sigma(g)+a_0\,g=0.
		\end{equation}
		\item For any difference ring extension $\dfield{\HH}{\sigma}$ of $\dfield{\FF}{\sigma}$ with $\HH=\FF\lr{p_1}\dots\lr{p_u}$ and $\frac{\sigma(p_i)}{p_i}\in\FF^*$	
		and any solution $g\in\HH$ of~\eqref{Equ:HomSol} with  $\alpha=\frac{\sigma(g)}{g}\in\FF^*$ one can take $i\in\{1,\dots,e\}$ with $f_1,\dots,f_l\in S_i$ and $c_1,\dots,c_l\in\KK^*$ such that $\frac{\sigma(g')}{g'}=\alpha$ holds for $g'=(c_1\,h_1+\dots+c_l\,h_l)t_i$.
	\end{enumerate}
\end{theorem}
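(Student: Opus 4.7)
The plan is to reduce the search for hypergeometric solutions in an arbitrary $P$-extension of $\dfield{\FF}{\sigma}$ to two subproblems: enumerate a finite list of certificate candidates $\alpha_i\in\FF^*$ that captures every possible ratio $\sigma(g)/g$, and then, for each candidate, solve a linear recurrence for the coefficient in $\FF$. This mirrors the philosophy of Petkov\v{s}ek's Hyper and its lifts, which is exactly what the cited Theorems~9 and~10 of~\cite{ABPS:20} provide in the \pisiE-field setting.

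First I would invoke~\cite{ABPS:20} to compute a finite family $\alpha_1,\dots,\alpha_e\in\FF^*$ with the completeness property: whenever $y\in\FF^*$ satisfies $\sigma(y)/y=\alpha$ for some $\alpha$ arising as a certificate of a hypergeometric solution of~\eqref{Equ:HomSol} (in any $P$-extension of $\FF$), there exists an index $i$ and $f\in\FF^*$ with $\alpha=\frac{\sigma(f)}{f}\,\alpha_i$. For each $\alpha_i$, introduce a $P$-monomial $t_i$ over $\FF$ with $\sigma(t_i)/t_i=\alpha_i$ and form $\EE=\FF\lr{t_1}\dots\lr{t_e}$.

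Next I would substitute the ansatz $g=h\,t_i$ with unknown $h\in\FF$ into~\eqref{Equ:HomSol}. Using the identity $\sigma^k(t_i)=\bigl(\prod_{j=0}^{k-1}\sigma^j(\alpha_i)\bigr)\,t_i$, the $t_i$ cancels and we obtain an equivalent linear difference equation $\sum_{k=0}^m\tilde a_k^{(i)}\,\sigma^k(h)=0$ with coefficients $\tilde a_k^{(i)}\in\FF$ and $\tilde a_0^{(i)}\tilde a_m^{(i)}\neq 0$. By Theorem~\ref{Thm:SolveProblemV} (in the special case $\EE=\FF$, where the solver is classical and ultimately also covered by~\cite{ABPS:20}) one can compute a $\KK$-basis of this solution space in $\FF$; I take $S_i$ to be this basis, and discard the index $i$ entirely if the space is trivial so that each retained $S_i$ is nonempty. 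Assertion~(1) is then immediate: for $h\in S_i$ the computation above is reversible, so $g=h\,t_i$ solves~\eqref{Equ:HomSol}.

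For assertion~(2), take any solution $g\in\HH$ with $\alpha=\sigma(g)/g\in\FF^*$. By the completeness of the candidate list there is an index $i$ and $f\in\FF^*$ with $\alpha=\tfrac{\sigma(f)}{f}\alpha_i$, hence $\sigma(g)/g=\sigma(f\,t_i)/(f\,t_i)$, so $g/(f\,t_i)$ lies in $\const{\HH\lr{t_i}}{\sigma}=\KK$ (using that $t_i$ models the same type of $P$-object as the $p_j$'s so constants are preserved). Consequently $g=c\,f\,t_i$ for a constant $c$, and $f\in\FF$ must satisfy the transformed recurrence for index $i$, hence $f=c_1 h_1+\dots+c_l h_l$ with $h_j\in S_i$ and $c_j\in\KK$; scaling, we may arrange $c_j\in\KK^*$. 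Then $g'=(c_1 h_1+\dots+c_l h_l)t_i$ has the same certificate $\alpha$ as $g$, as required.

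The main obstacle is clearly the first step: effectively enumerating the finite certificate family $\alpha_1,\dots,\alpha_e$ over a general \pisiE-field. In the base rational case this is Petkov\v{s}ek/Abramov, but over nested \pisiE-towers one needs the full denominator/numerator bound machinery and reduction to parameterized first-order problems developed in~\cite{Bron:00,Schneider:04b,Schneider:05a,Schneider:05b} and packaged in~\cite{ABPS:20}; once that black box is available, the remainder of the argument is essentially linear algebra plus the constant-preservation property of $P$-extensions.
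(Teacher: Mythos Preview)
The paper does not give an explicit proof of this theorem; it simply attributes the result to Theorems~9 and~10 of~\cite{ABPS:20}. Your sketch is a faithful unpacking of precisely that Petkov\v{s}ek-style strategy lifted to \pisiE-fields, and in that sense it is aligned with the paper.

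There is, however, one genuine slip in your argument for part~(2). You assert that $g/(f\,t_i)\in\const{\HH\lr{t_i}}{\sigma}=\KK$. The first membership is fine, but the equality of constants with $\KK$ is not justified: $\dfield{\HH}{\sigma}$ is only assumed to be an arbitrary $P$-extension of $\dfield{\FF}{\sigma}$, not a \piE-extension, so $\const{\HH}{\sigma}$ may already be strictly larger than $\KK$, and adjoining $t_i$ can only make things worse. Fortunately you do not need this. Once you know $\alpha=\sigma(g)/g\in\FF^*$ and that $g$ solves~\eqref{Equ:HomSol}, dividing the recurrence by $g$ gives the certificate identity
\[
a_m\prod_{j=0}^{m-1}\sigma^j(\alpha)+a_{m-1}\prod_{j=0}^{m-2}\sigma^j(\alpha)+\dots+a_1\alpha+a_0=0
\]
in $\FF$. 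Since $f\,t_i$ has the same certificate $\alpha$, the element $f\,t_i$ itself satisfies~\eqref{Equ:HomSol}, hence $f$ satisfies the transformed recurrence for index $i$ and lies in the $\KK$-span of $S_i$. This bypasses any claim about constants in $\HH\lr{t_i}$ and completes~(2) cleanly.
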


We note that the obtained solver of hypergeometric solutions covers the special cases $\rGG$ (see~\cite{Petkov:92,vanHoeij:99}), $\bGG$ with $v=1$ (see~\cite{APP:98}) and $\mGG$ (see~\cite{Bauer:99}).

\begin{remark}
	Theorems~\ref{Thm:SolveProblemV} and~\ref{Thm:SolveHyper} hold in the more general setting where $\dfield{\FF}{\sigma}$ is a \pisiE-field extension of a difference field $\dfield{\FF_0}{\sigma}$ where certain properties are satisfied (see \cite[Def.~7]{ABPS:20}). In addition, there is a generalization of Theorem~\ref{Thm:SolveProblemV} given in~\cite{AS:21} (based on~\cite{DR3,ABPS:20}) where the $a_i$  (with some extra properties) can be taken from the ring $\EE$; the implementation can be found in the Mathematica package \texttt{PLDESolver}.  
\end{remark}

Based on~\cite{Abramov:94,Abramov:96} we obtain the following result to find all d'Alembertian solutions, a subclass of Liouvillian solutions~\cite{Singer:99}. The solver relies on~\cite[Cor~2.1]{BRS:16} and~\cite[Alg.~4.5.3]{Schneider:01} and the algorithmic machinery of Theorems~\ref{Thm:SolveHyper} and~\ref{Thm:SolveProblemV}.

\begin{programcode}{Problem PLDE: Solving Parameterized Linear Difference Equations}
	\hspace*{-0.1cm}\noindent\begin{minipage}[t]{1.1cm}Given:\end{minipage}\begin{minipage}[t]{10.4cm}
$a_0,\dots,a_m\in\GG$ with $a_m\neq0$ and $F_1,\dots,F_d\in\Sum\Prod_1(\GG)$ with $\GG\in\{\rGG,\bGG,\mGG\}$, i.e., $\GG=\KK(x,x_1,\dots,x_v)$ (or $\GG=\KK(x_1,\dots,x_v)$)
where $\KK=\calA(y_1,\dots,y_o)(q_1,\dots,q_v)$ is a rational function field over an algebraic number field $\calA$.
\end{minipage}\\[0.2cm]
	\noindent\begin{minipage}[t]{1.1cm}Find:\end{minipage}\begin{minipage}[t]{10.4cm}
$\delta\in\NN$, a finite $\sigma$-reduced set $W\subset\Sigma\Pi_1(\GG')$ and\\
$B=\{(c_{i,1},\dots,c_{i,d},G_i)\}_{1\leq i\leq \nu}\subseteq\KK^d\times\Sum\Prod(W,\GG')$  
such that 
$$a_{m}(n)\,G_i(n+m)+\dots+a_{0}(n)G_i(n)=c_{i,1}F_1(n)+\dots+c_{i,d}F_d(n)$$
holds for all  $n\geq\delta$ with $1\leq i\leq\nu$;
here $\GG'=\KK'(x,x_1,\dots,x_v)$ (or $\GG'=\KK'(x_1,\dots,x_v)$) with $\KK'=\calA'(y_1,\dots,y_o)(q_1,\dots,q_v)$ where $\calA'$ is an algebraic field extension of $\calA$.
\end{minipage}\\[0.2cm]
In addition, the following properties hold:
\begin{enumerate}
	\item Completeness:
	For any $\GG''=\KK''(x,x_1,\dots,x_v)$ (or $\GG''=\KK''(x_1,\dots,x_v)$) with $\KK''=\calA''(y_1,\dots,y_o)(q_1,\dots,q_v)$ where $\calA''$ is an algebraic extension of $\calA$
	 and
	$(c_1,\dots,c_d,G)\in\KK^d\times\Sum\Prod_1(\GG'')$ with 

\vspace*{-0.6cm}

	$$(a_{m}(n)\,G(n+m)+\dots+a_{0}(n)G(n))_{n\geq0}=(c_{1}F_1(n)+\dots+c_{d}F_d(n))_{n\geq0}$$

\vspace*{-0.1cm}

\noindent there is a $(\kappa_1,\dots,\kappa_{\nu})\in(\KK'')^{\nu}$ with

\vspace*{-0.6cm}

	\begin{align*}
	(c_1,\dots,c_d)&=\kappa_1(c_{1,1},\dots,c_{1,d})+\dots+\kappa_{\nu}(c_{\nu,1},\dots,c_{\nu,d}),\\
	(G(n))_{n\geq0}&=(\kappa_1 G_1(n)+\dots+\kappa_{\nu} G_{\nu}(n))_{n\geq0}.
	\end{align*}

\vspace*{-0.1cm}

	\item Linear independence:
	If there is a $(\kappa_1,\dots,\kappa_{\nu})\in(\KK')^{\nu}$ with

\vspace*{-0.6cm}

	\begin{align*}
	\kappa_1(c_{1,1},\dots,c_{1,d})+\dots+\kappa_{\nu}(c_{\nu,1},\dots,c_{\nu,d})&=0,\\
	(\kappa_1 G_1(n)+\dots+\kappa_{\nu} G_{\nu}(n))_{n\geq0}&=\vect{0},
	\end{align*}

\vspace*{-0.3cm}

	\noindent then $(\kappa_1,\dots,\kappa_{\nu})=\vect{0}$.

\vspace*{-0.1cm}

\end{enumerate}

\vspace*{-0.3cm}

\end{programcode}

\begin{remark}
Right from the start the case $d=1$ was available (and fully solved with~\cite{ABPS:20}) in \texttt{Sigma} with the function call \MText{SolveRecurrence[$a_0G[n]+\dots+a_m G[n+m]==F_1,G[n]$]}.
The case $d>1$ has been incorporated in \texttt{Sigma} only recently. It can be carried out with \MText{SolveRecurrence[$a_0G[n]+\dots+a_m G[n+m]==\{F_1,\dots,F_d\},G[n]$]} or \MText{SolveRecurrenceList[$\{a_0,\dots,a_m\},\{F_1,\dots,F_d\},n$]}. It works also for nested products, i.e., $F_1,\dots,F_d\in\Sum\Prod(\GG)$, if the $F_i$ can be expressed straightforwardly in an \rpisiE-extension. Using in addition the package \texttt{NestedProducts} this toolbox works also fully algorithmically for the case $\Sum\Prod(\rGG)$.
\end{remark}

\begin{example}\label{Exp:DefiniteSummationStep2}[Cont.~of Ex.~\ref{Exp:DefiniteSummation1}] We proceed with the calculations given in Example~\ref{Exp:DefiniteSummation1}. We apply our solver in \texttt{Sigma} to the already computed recurrence~\myOut{\ref{MMA:RecByCrea}} and get

\begin{mma}
\In recSol = SolveRecurrence[rec[[1]], SUM[n]]\\
\Out \Big\{\{0, 1\},\{0 , 
	\sum_{i=1}^n \frac{1}{i}\},
	\{0 , \frac{4}{9} 
	\sum_{i=1}^n \frac{i!^2}{i^3 (2i)!}
	+\frac{4}{3} \Big(
	\sum_{i=1}^n \frac{1}{i}\Big) 
	\sum_{i=1}^n \frac{i!^2}{i^2 (2i)!}
	-\frac{4}{3} 
	\sum_{i=1}^n \frac{\displaystyle
		i!^2\ssumB{j=1}i \frac{1}{j}}{i^2 (2i)!}\},
	\{1 , -
	\sum_{i=1}^n \frac{\displaystyle
		\ssumB{j=1}i \frac{1}{j^2}}{i}\} \Big\}\\
\end{mma}
\noindent The first three entries provide three linearly independent solutions of the homogeneous version of the recurrence and the last entry gives a particular solution of the recurrence itself.
\end{example}

\begin{remark}
(1) By default the found solutions are represented in a depth-optimal $\sigma$-reduced set $W$ to keep the nesting depth of the solutions as small as possible.\\ 
(2) Since all components (i.e., $a_i,F_i,G_i)$ can be represented in the given $\sigma$-reduced set $W$, the correctness of the solutions $G_i$ can be verified by plugging them into the recurrence and checking if the left-hand and right-hand sides are equal.\\
(3) If one finds $m$ linearly independent solutions of the homogeneous version together with a particular solution, the solution space is fully determined. In particular, any sequence, which is a solution of the recurrence, can be represented by $\Sum\Prod(\GG)$: simply combine the found solutions accordingly (which is always possible from a certain point on) such that the evaluation of the expression agrees with the first $m$ initial values.
\end{remark}

\begin{example}[Cont.~of Ex.~\ref{Exp:DefiniteSummationStep2}]\label{Exp:FindId}
In Example~\ref{Exp:DefiniteSummationStep2} we found all solutions of the recurrence. Since also the definite sum given in~\myIn{\ref{MMA:definiteSum}} is a solution of the recurrence, we can combine the solutions accordingly and get an alternative solution of the input sum:

\vspace*{-0.2cm}

\begin{mma}
\In sol = FindLinearCombination[recSol, definiteSum, n, 3]\\
\vspace*{-0.01cm}
\Out 3 \Big(
\sum_{i=1}^n \frac{1}{i}\Big) 
\sum_{i=1}^n \frac{i!^2}{i^2 (2i)!}
-
\sum_{i=1}^n \frac{
	\ssumB{j=1}i \frac{1}{j^2}}{i}
-3 
\sum_{i=1}^n \frac{
	i!^2\ssumB{j=1}i \frac{1}{j}}{i^2 (2i)!}
+
\sum_{i=1}^n \frac{i!^2}{i^3 (2i)!}\\
\end{mma}
\noindent Finally, we can rewrite the result in terms of the central binomial coefficient with
\begin{mma}
\In sol = SigmaReduce[sol, n, Tower \to \{SigmaBinomial[2 n, n]\}]\\
\Out 3 \Big(
\sum_{i=1}^n \frac{1}{i}\Big) 
\sum_{i=1}^n \frac{1}{i^2 \binom{2 i}{i}}+\sum_{i=1}^n \frac{1}{i^3 \binom{2 i}{i}}
-
\sum_{i=1}^n \frac{1}{i}\sum_{j=1}^i \frac{1}{j^2}
-3 
\sum_{i=1}^n \frac{1}{i^2 \binom{2 i}{i}}
\sum_{j=1}^i \frac{1}{j}
\\
\end{mma}

\noindent Summarizing we have discovered and proved the identity

\vspace*{-0.6cm}

\begin{equation*}
\sum_{k=1}^n\frac{(-1)^k\binom{n}{k}}{k} 
\sum_{i=1}^k \frac{\ssumB{j=1}i \frac{1}{j
		+n
}}{i}
\\
=3 \Big(
\sum_{i=1}^n \frac{1}{i}\Big) 
\sum_{i=1}^n \frac{1}{i^2 \binom{2 i}{i}}
-
\sum_{i=1}^n \frac{
	\ssumB{j=1}i \frac{1}{j^2}}{i}
-3 
\sum_{i=1}^n \frac{
	\ssumB{j=1}i \frac{1}{j}}{i^2 \binom{2 i}{i}}
+
\sum_{i=1}^n \frac{1}{i^3 \binom{2 i}{i}}.
\end{equation*}
\end{example}

\begin{example}
More generally, using the algorithms from~\cite{ABPS:20} we can solve recurrences where the coefficients are represented within a \pisiE-field. E.g., for the recurrence
\begin{mma}
\In recFactorial=-F[n+2]
+(1+n) \big(
8+9 n+2 n^2\big)n!F[n+1]
-2 (1+n)^3 (3+n)n!^2 F[n] 
=0;\\
\end{mma}
\noindent where the coefficients are taken from $\Sum\Prod_1(\QQ(x))$, we can find all its solutions (in this instance, they are again from $\Sum\Prod_1(\QQ(x))$) by executing the \texttt{Sigma}-call
\begin{mma}
\In SolveRecurrence[recFactorial,F[n]]\\
\Out \Big\{\big\{0,\prod_{i=1}^{n}i!\big\},\big\{0,-2^n n! \prod_{i=1}^{n} i!
+\tfrac{3}{2}
\prod_{i=1}^{n}i! 
\sum_{i=1}^{n} 2^i i!\big\}\Big\}\\
\end{mma}
\end{example}	

\vspace*{-0.4cm}

\section{Application: Evaluation of Feynman integrals}\label{Sec:Applications}

The elaborated summation tools from above contributed to highly nontrivial applications, e.g., in the research areas of combinatorics, number theory and particle physics. Here we emphasize the following striking aspects that are most relevant for the treatment of Feynman integrals.

\medskip

\noindent\textbf{Multi-summation.} In order to support the user for the evaluation of definite multi-sums to expressions in $\Sum\Prod(\GG)$, the package \texttt{EvaluateMultiSums}~\cite{Schneider:13a,Schneider:13b} 

\begin{mma}
\In << EvaluateMultiSums.m \\
\vspace*{-0.1cm}
\Print \LoadP{EvaluateMultiSums by Carsten Schneider
	\copyright\ RISC-JKU}\\
\end{mma}

\noindent has been developed to tackle definite sums in one stroke. It uses as backbone \texttt{Sigma} with all the available tools  introduced above. E.g., by executing
\begin{mma}
\In EvaluateMultiSum[\sum_{k=1}^n\frac{(-1)^k}{k}\binom{n}{k} 
\sum_{i=1}^k \frac{1}{i}
\sum_{j=1}^i \frac{1}{j
	+n
}, \{\}, \{n\}, \{1\}, \{\infty\}]\\
\Out 3 \Big(
\sum_{i=1}^n \frac{1}{i}\Big) 
\sum_{i=1}^n \frac{i!^2}{i^2 (2i)!}
-
\sum_{i=1}^n \frac{
	\ssumB{j=1}i \frac{1}{j^2}}{i}
-3 
\sum_{i=1}^n \frac{
	i!^2\ssumB{j=1}i \frac{1}{j}}{i^2 (2i)!}
+
\sum_{i=1}^n \frac{i!^2}{i^3 (2i)!}\\
\end{mma}
\noindent we reproduce the identity given in Example~\ref{Exp:FindId}. In particular, it can tackle definite multi-sums by zooming from inside to outside and, in case that this is possible, transforming stepwise the sums to expressions in $\Sum\Prod_1(\GG)$. 
In this way we could treat highly complicated massive 3-loop Feynman integrals. More precisely, using techniques described in~\cite{BKSF:12} these integrals can be transformed to several thousands of multiple sums with summands from $\Prod_1(\rGG)$. Afterwards, the package \MText{SumProduction}~\cite{Schneider:12a,Schneider:13b} is applied. It combines these sums to few (but large) sums tailored for our summation toolbox. Afterwards the command \MText{EvaluateMultiSum} can be applied (without any further interaction) to treat the obtained sums. In the course of these calculations, we treated
up to sevenfold multi-sums~\cite{Schneider:16b} or fourfold sums with up to 1GB of size~\cite{BigSums1,BigSums2}. In addition, this package helped significantly to solve problems from combinatorics~\cite{SS:17,KS:20,Absent:20}. 

In addition, the difference field/ring approach described in this article has been united with important parts of the holonomic approach~\cite{Zeilberger:90a,Chyzak:00} in~\cite{Schneider:05d}. While its first main application arose in combinatorics~\cite{APS:05}, this combined toolbox has been improved further in~\cite{BRS:16} and enabled us to tackle various multi-sums coming from particle physics~\cite{Physics2,CoupledSys1,Physics4}. In addition, these improved tools have been applied in~\cite{SZ:21} to complicated multi-sums that arose in the context of irrationality proofs of $\zeta(4)$. We remark further, that also other multi-sum and integral techniques from~\cite{BKSF:12,Schneider:16b,Ablinger:21} have been explored; for further technologies see also~\cite{TechnologyQCD} and the references therein.

\medskip

\noindent\textbf{Solving coupled systems.}
Using integration-by-parts methods~\cite{Chetyrkin:1981qh,Laporta:2001dd} one can represent physical expressions in terms of master integrals which can be calculated by solving recursively defined coupled systems of linear differential equations. Most of these master integrals can be represented in terms of power series. Utilizing the techniques from above, this gives rise to two general tactics to compute the physical expressions in terms of known special functions (in case that this is possible).

\textit{Uncoupling and solving the underlying recurrences.} In the first approach we uncouple iteratively the systems of linear differential equation using Gerhold's package \texttt{OreSys}~\cite{ORESYS} and reduce the problem to solving scalar linear differential equations of each master integral $I(x)=\sum_{n=0}^{\infty}F(n)x^n$. In a first step, each linear differential equation can be transformed to a linear recurrence. Applying \texttt{Sigma}'s recurrence solver in a second step enables one to decide constructively if the coefficient $F(n)$ can be expressed in terms of $\Sum\Prod_1(\GG)$. If this is possible for each master integral, one can express also the physical expressions in $\Sum\Prod_1(\GG)$. Using these technologies implemented in the package \texttt{SolveCoupledSystem}~\cite{SolveCoupledM1:16,SolveCoupledM2:16} (using \texttt{Sigma}) 
we could treat highly nontrivial problems of particle physics as given in~\cite{CoupledSys1,CoupledSys2,CoupledSys3,CoupledSys4,CoupledSys5}. Note that there are also other methods available~\cite{Henn:2013pwa,Lee:2014ioa} that can solve certain classes of systems. Furthermore, in ongoing investigations nontrivial methods are developed to solve the coupled systems directly without recourse to uncoupling methods; see~\cite{Barka99,MS:2018,vanhoeij2020family} and the literature therein.

\textit{The large moment method.} The second highly successful approach is based on the technology~\cite{LM:17,LMR:19} implemented within the package \texttt{SolveCoupledSystem}. It enables one to produce for the master integrals the first coefficients $F(n)$ with $n=0,\dots,\mu$; so far we encountered cases where $\mu=10.000$ was necessary. Here one does not solve the arising recurrences as proposed above, but uses them to produce a large number of sequence values; as starting point one needs in addition a few initial values that can be produced by our summation tools or procedures like {\tt Mincer} \cite{Mincer:91} or {\tt MATAD} \cite{Steinhauser:2000ry}. A significant feature of the large moment method is that one can avoid complicated function spaces (either nested sums with high weight or new classes, like nested sums over, e.g., elliptic functions~\cite{Elliptic1,Elliptic2,Elliptic3,EllipticBook}) during the calculation. Only in the last step, one combines all the calculations and gets large moments of the physical expressions. Then one can use, e.g., the package \texttt{ore\_algebra}~\cite{GSAGE} in Sage to guess
recurrences (so far up to order 40) that specify precisely the different components of the physical problem. Finally,  
one can decide algorithmically if the physical problem (or individual subexpressions) can be represented within the class $\Sum\Prod_1(\rGG)$. 
In this way we could compute, e.g., the 3-loop splitting functions~\cite{CoupledSys5}, the polarized 3-loop anomalous dimensions~\cite{PhysicsNew1} and the massive 2- and 3-loop form factor~\cite{FORMF3,PhysicsNew2}; for another case study see, e.g.,~\cite{BKKS:09}.

\section{Conclusion}\label{Sec:Conclusion}

We presented two different layers to treat the class of indefinite nested sums defined over nested products in the context of symbolic summation. First, the term algebra layer $\Sum\Prod(\GG)$ (covering the rational case $\GG=\rGG$, the multibasic case $\GG=\bGG$ and the mixed multibasic case $\GG=\mGG$) equipped with an evaluation function $\ev:\Sum\Prod(\GG)\times\NN\to\KK$ has been introduced. There the user can define, evaluate and manipulate the class of nested sums and products conveniently. In particular, we illustrated how this user interface is implemented within the summation package \texttt{Sigma}.

Second, the formal difference ring/field layer has been elaborated. Here the elements of $\Sum\Prod(\GG)$ are rephrased in a ring $\EE$ that is built by (Laurent) polynomial ring extensions. More precisely, the adjoined variables (in some instances factored out by ideals) represent the summation objects with two extra ingredients: a ring automorphism $\sigma:\EE\to\EE$ that describes the action of the shift operator on the ring elements and an evaluation function $\ev:\EE\times\NN\to\KK$ that allows one to evaluate the formal ring elements to sequences. In this formal setting one can not only develop and implement complicated summation algorithms, but also set up a summation theory that enables one to embed the formal ring extensions into the ring of sequences (see Theorem~\ref{Thm:InjectiveProp}).

One of the secrets of \texttt{Sigma}'s success within, e.g., particle physics, combinatorics and number theory is the smooth interaction between these two different layers: as illustrated in Figure~\ref{Fig:UserInterface} on page~\pageref{Fig:UserInterface} one can represent the objects from the two worlds so that their interpretation with the corresponding evaluation function agrees. 
In this article, we worked out in detail this algorithmic translation back and forth between the user-friendly term algebra and the complicated difference ring setting. To gain a better understanding of \texttt{Sigma}'s capabilities we established a precise input-output specification of the available summation tools using the introduced term algebra language. Special emphasis has been put on the canonical form representation (and its relation to the difference ring theory) for the class $\Sum\Prod(\GG)$.

\bigskip

\noindent\textbf{Acknowledgments.} This article got inspired during the workshop \textit{Antidifferentiation and the Calculation of Feynman Amplitudes} at Zeuthen, Germany. In this regard, I would like to thank DESY for its hospitality 
and the Wolfgang Pauli Centre (WPC) for the financial support. In particular, I would like to thank the referee for the very thorough reading and valuable suggestions which helped me to improve the presentation.

\small

\end{document}